\title{Rate Allocation and Content\\
Placement in Cache Networks}
\author{\IEEEauthorblockN{Khashayar Kamran, Armin Moharrer, Stratis Ioannidis, and Edmund Yeh}
\IEEEauthorblockA{\textit{Electrical and Computer Engineering, Northeastern University, Boston, MA, USA} \\
\{kamrank, amoharrer, ioannidis, eyeh\}@ece.neu.edu}
}
\newcommand{\itemcat}{\ensuremath{\mathcal{I}}}
\newcommand{\requestset}{\ensuremath{\mathcal{N}}}
\newcommand{\requestindex}{\ensuremath{n}}
\newcommand{\lbsbconstset}{\ensuremath{\mathcal{J}}}
\newcommand{\lbsbconstindex}{\ensuremath{j}}
\newcommand{\firstsecond}[3] {\ifthenelse{\equal{#1}{1}}{#2}{}\ifthenelse{\equal{#1}{2}}{#3}{}}
\newcommand{\fullversion}[2]{#2}
\theoremstyle{plain} %{remark} %plain
\newtheorem{thm}{Theorem}%[section]
\newtheorem{lem}{Lemma}
\newtheorem{prop}{Proposition}
\newtheorem{corollary}{Corollary}%[section]
\theoremstyle{definition}
\newtheorem{defn}{Definition}%[section]
\newtheorem{assm}{Assumption}
\begin{document}

\maketitle

\begin{abstract}
We introduce the problem of optimal congestion control in cache networks, whereby \emph{both} rate allocations and content placements are optimized \emph{jointly}. We formulate this as a maximization problem with non-convex constraints, and propose  solving this problem via (a)  a Lagrangian barrier algorithm and (b) a convex relaxation. We prove different optimality guarantees for each of these two algorithms; our proofs  exploit the fact that the non-convex constraints of our problem involve DR-submodular functions. 
\end{abstract}

\fullversion{}{\begin{IEEEkeywords}
Congestion control, caching, rate control, utility maximization, DR-submadular maximization, non-convex optimization
\end{IEEEkeywords}}

\section{Introduction}\label{sec:intro}
Traffic engineering and congestion control have played a crucial role in the stability and scalability of communication networks since the early days of the Internet. 
They have been  extremely active  research areas since the seminal work by Kelly et al.  \cite{kelly1998rate},  who studied  optimal rate control subject to link capacity constraints. Formally, given a network $G(\mathcal{V},\mathcal{E})$ with nodes $v\in \mathcal{V}$, links $e\in \mathcal{E}$, and flows $\requestindex \in \requestset$,  Kelly et al.~\cite{kelly1998rate} studied the following convex optimization problem:
\begin{subequations}
\label{prob:kellystyle}
\begin{align} \small
    \max_{\boldsymbol{\lambda}} \quad & \textstyle\sum_{\requestindex \in \requestset} U_{\requestindex}(\lambda_\requestindex) \label{eq:kellyobj}\\
    \text{s.t.} \quad &\rho_e(\boldsymbol{\lambda}) \leq C_e, \quad \forall e\in \mathcal{E},\label{eq:kellyconstr}
\end{align}
\end{subequations}
where $\boldsymbol{\lambda}=[\lambda_\requestindex]_{\requestindex \in \requestset}\in \mathbb{R}_+^{|\requestindex|}$ is the vector of rate allocations $\lambda_\requestindex$, $\requestindex \in \requestset$ across flows, $\rho_e:\mathbb{R}^{|\requestindex|}\to \mathbb{R}_+$, $C_e\in \mathbb{R}_+$ are the loads and  capacities of links $e\in \mathcal{E}$, respectively, and $U_\requestindex:\mathbb{R}_+\to \mathbb{R}$, $\requestindex\in\requestset$, are concave utility functions of  rates. %As constraints  are linear, this is a convex optimization problem.
Motivated by Kelly et al. \cite{kelly1998rate}, distributed congestion control algorithms solving Prob.~\eqref{prob:kellystyle} are now both numerous and classic \fullversion{\cite{srikant2012mathematics,warland2000fairwindow, low1999optimization}.}{ \cite{srikant2012mathematics,kelly2014stochastic,warland2000fairwindow, low1999optimization,  chiang2007layering}.}

In this work, we revisit this problem in the context of \emph{cache networks}  \cite{rosensweig2010approximate,fofack2012analysis,ioannidis2016addaptive}. Motivated by technologies such as software defined networks \fullversion{\cite{shenker2011futuresdn}}{ \cite{shenker2011futuresdn,kreutz2015sdnsurvey}} and network function virtualization \cite{han2015nfv},  nodes in cache networks are no longer merely static routers. Instead, they are entities capable of storing data, performing computations, and making decisions.  Nodes can thus  fetch user-requested content \fullversion{\cite{zhang2010ndn}}{\cite{jacobson2009ccn, zhang2010ndn}}, or  perform user-specified computation tasks \cite{deco2019kamran,tulino2018comp}, instead of simply maintaining point to point communication sessions. In turn, such functionalities can address the ever increasing interest in running data-intensive applications in large-scale networks, such as machine learning at the edge \cite{dong2018iotinedge}, IoT-enabled health care \cite{mahmud2018healthcare}, and scientific data-intensive computation \fullversion{\cite{ekanayake2008mapreduce}}{\cite{ekanayake2008mapreduce, makhlouf2019iaas}}.

\begin{figure} [!t]
  \centerline{\includegraphics[width = \columnwidth]{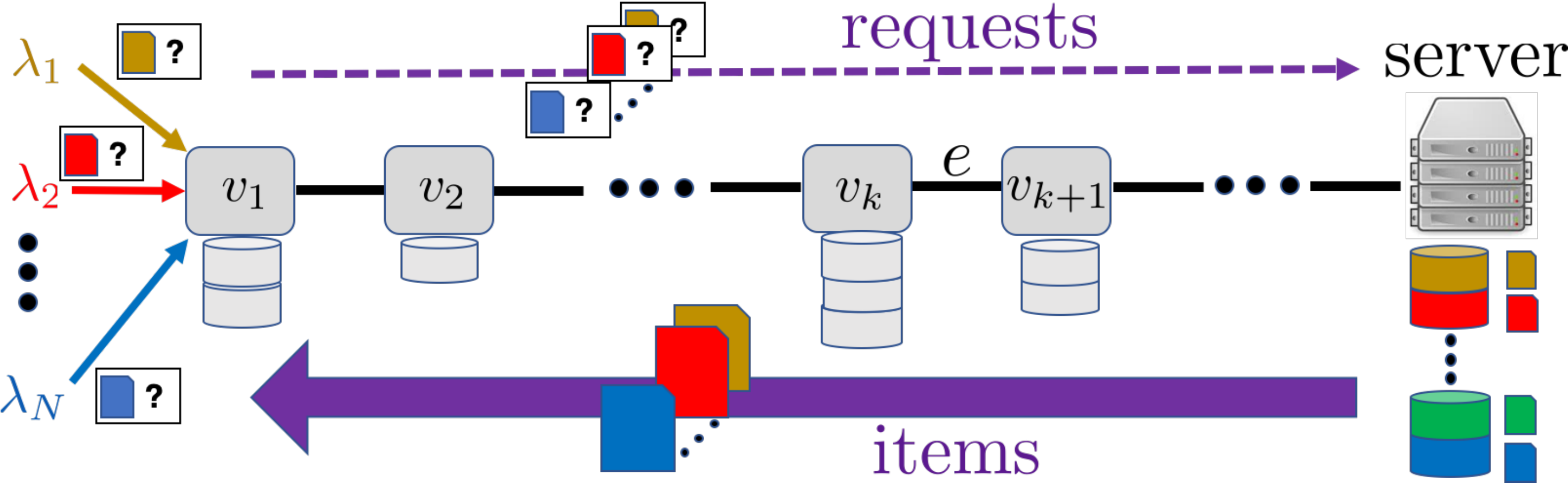}}
  \vspace{1mm}
  \caption{ Example of a cache network.
  $N$ distinct flows of requests  enter the network at node $v_1$. Each flow contains requests for an item $i$ in a catalog $\itemcat$. Requests are forwarded towards the designated server that stores all items in $\itemcat$. Upon reaching it, responses carrying the requested items follow the reverse path towards node $v_1$. However, all intermediate nodes have caches that can be used to store items in $\mathcal{I}$. Thus, requests need not traverse the entire path, but can be satisfied upon the first hit. Hence, the load on  link $e=(v_{k+1},v_k)$ is a function of both rates $\boldsymbol{\lambda}=[\lambda_\requestindex]_{\requestindex=1}^N$  \emph{as well as } cache allocation decisions made by the intermediate nodes $v_1$, \ldots,$v_{k}$, to name a few.} 
  \label{fig:exampleofcache}
\end{figure}

Congestion control in such networks is fundamentally different from the classic setting. When  nodes can store user-requested content or provide server functionalities, network design amounts to determining not only the rate allocations per flow but also the \emph{location} of offered network services. Put differently, to attain optimality, \emph{cache allocation decisions need to be optimized jointly with rate allocation decisions}. In turn, this necessitates the development of novel congestion control algorithms that take cache allocation into account. 

To make this point clear, we illustrate the effect of cache allocations on congestion in a cache network shown in Fig.~\ref{fig:exampleofcache}. Requests for content items in a catalog $\itemcat$ arrive over $N$ flows on a node on the  left of a path network. They are subsequently forwarded towards a designated server on the right, that stores all items in $\itemcat$. Upon reaching the server, responses carrying the requested items are sent back over the reverse path. Assuming that request traffic is negligible, the traffic load on an edge is determined by the item (i.e., response) traffic flowing through it. However, if intermediate nodes are equipped with caches that can store some of the items in $\itemcat$, as in Fig.~\ref{fig:exampleofcache}, requests need not be propagated all the way to the designated server. As a result, the load $\rho_e$ caused by items traversing edge $e=(v_{k+1},v_{k)}$ depends not only on the rate vector $\boldsymbol{\lambda}=[\lambda_\requestindex]_{\requestindex \in \requestset}\in \mathbb{R}_+^{|\requestindex|}$, \emph{but also on the cache allocation decisions made at all nodes $v_1, \ldots, v_k$  preceding $e$ in the path.} For example, the load $\rho_e$ is zero if all items in $\itemcat$ are stored in nodes $v_1,\ldots,v_{k}$.

Formally,  in cache networks, Problem \eqref{prob:kellystyle} becomes:
\begin{subequations}
\label{prob:cachestyle}
\begin{align} \small
    \max_{\boldsymbol{\lambda},\boldsymbol{x}} \quad & \textstyle\sum_{\requestindex \in \requestset} U_{\requestindex}(\lambda_\requestindex)  \label{eq:cacheobj}\displaybreak[0]\\
    \text{s.t.} \quad &\rho_e(\boldsymbol{\lambda},\boldsymbol{x}) \leq C_e, \quad \forall e \in \mathcal{E}, \label{eq:cacheload}\displaybreak[0]\\
    & \textstyle\sum_{i\in \itemcat} x_{vi} \leq c_v,\quad \forall v \in \mathcal{V}, \label{eq:cachestorage}
\end{align}
\end{subequations}
where $\boldsymbol{x}=[x_{vi}]_{v\in V, i\in\itemcat}\in \{0,1\}^{|\mathcal{V}| |\itemcat|}$ is the vector of cache allocation decisions $x_{vi}\in\{0,1\}$, indicating if node $v\in V$ stores $i\in \itemcat$, $c_v\in \mathbb{N}$ is the storage capacity of node $v\in V$, 
and $\boldsymbol{\lambda}$, $\rho_e$, $C_e$, $U_\requestindex$ are respectively the rate allocation vector, loads, link capacities, and utilities, as in Eq.~ \eqref{prob:kellystyle}.
Crucially, the load $\rho_e$ on  links $e\in \mathcal{E}$ is a function of 
\emph{both} the allocated rates \emph{and} cache decisions. As a result, constraints \eqref{eq:cacheload} define a  a non-convex set.  This is not just due to the combinatorial nature of cache allocation decisions $\boldsymbol{x}$: even if $x_{vi}$ are relaxed to real values in $[0,1]$, which corresponds to making probabilistic cache allocation decisions, the resulting constraint \eqref{eq:cacheload} is \emph{still not convex}, and Problem \eqref{prob:cachestyle} \emph{cannot} be solved via standard convex optimization techniques. This is a significant departure from  Problem~\eqref{prob:kellystyle}, in which constraints \eqref{eq:kellyconstr} are linear.

In spite of the challenges posed by the lack of convexity,  \emph{we propose algorithms  solving Problem~\ref{prob:cachestyle} with provable approximation guarantees.}
Specifically:
\begin{enumerate}
    \item We provide a unified optimization formulation for congestion control in cache networks, through joint probabilistic content placement and rate control. To the best of our knowledge, we are the first to study this class of non-convex problems and develop algorithms with approximation guarantees.
    \item We propose two algorithms, each yielding different approximation guarantees. The first
   is a Lagrangian barrier method; the second is a convex relaxation. In both cases, we exploit the fact that constraints \eqref{eq:cacheload} can be expressed in terms of DR-submodular functions \cite{bian2017guaranteed}. Both algorithms and their corresponding analysis are novel and of independent interest, as they may be applicable for attacking problems with DR-submodular constraints beyond the cache network setting we consider here.
    \item Finally, we implement both methods and compare them experimentally to  greedy algorithms over several  real-world and synthetic topologies, observing an improvement in aggregate utility by as much as 5.43$\times$.
\end{enumerate}

The remainder of this paper is structured as follows. We review related work in Section~\ref{sec:rel}. Our network model and problem formulation are discussed in Section~\ref{sec:model}. In Section~\ref{sec:main}, we describe our two different methods for solving the optimal congestion control problem, as well as our performance guarantees. Finally, we present our evaluations in Section~\ref{sec:simul}, and we conclude in Section~\ref{sec:conc}.
\section{Related Work}\label{sec:rel}
\noindent \textbf{Network Cache Optimization.}  Studies on optimal in-network cache allocation are numerous,  roughly split into the \emph{offline} and \emph{online} solutions.
 Several papers study centralized, offline cache optimization in a network modeled as a bipartite graph \fullversion{\cite{approxplacement2008baev,tight2006lisa}}{ \cite{approxplacement2008baev,competetive1995bartal,tight2006lisa}}. Shanmugam et al. \cite{shanmugam2013femto} consider a femto-cell network,  where content is placed in caches to reduce the cost of fetching data from a base station. They do not consider congestion, and study routing costs that are linear in the traffic per link. %Several papers study a similar problem for more complex congestion-dependent link costs.
 Mahdian et al. \cite{mahdian2019kelly} model every link with an M/M/1 queue, and consider objectives that are (non-linear) functions of the queue sizes. Similarly, Li and Ioannidis \cite{universally2020Li} model every link with an M/M/1c queue to capture the consolidation of identical responses before being forwarded downstream. The same problem was also studied, albeit in a different model, by Dehghan et al.~\cite{dehghan2015complexity}.
Online cache allocation algorithms exist, e.g.,  for maximizing throughput \cite{vip2014yeh, deco2019kamran}, or minimizing delay \cite{mahdian2018mindelay}. Ioannidis and Yeh \cite{ioannidis2016addaptive} study a similar problem as Shanmugam et al. \cite{shanmugam2013femto} for networks with arbitrary topology and linear link costs, seeking cache allocations that minimize routing costs across multiple hops. \fullversion{}{The same authors extend this work to jointly optimizing cache and routing decisions \cite{ioannidis2017routing}.}

Although we too consider offline algorithms, we depart substantially from prior work. First, all mentioned papers assume the input request rates are fixed, whereas we consider joint  cache allocation \emph{and} rate control. Prior works on  allocation  minimizing  costs \fullversion{\cite{shanmugam2013femto,ioannidis2016addaptive,mahdian2019kelly,universally2020Li,dehghan2015complexity}}{\cite{shanmugam2013femto,ioannidis2016addaptive, ioannidis2017routing,mahdian2019kelly,universally2020Li,dehghan2015complexity}} cast the problem as a sub-modular maximization problem subject to matroid constraints, for which a $(1-1/e)$-approximate solution can be constructed in polynomial time. %We depart from this as, %by considering congestion and non-linear utilities, but also by jointly optimizing  cache \emph{and} rate allocation. From a technical standpoint, 
Instead, akin to Kelly et al.~\cite{kelly1998rate}, we treat loads on links as \emph{constraints} rather than part of the objective. Hence, we cannot directly leverage sub-modular maximization techniques and need to design altogether new algorithms. Moreover, %some papers in the existing literature do not consider congestion on links \cite{ioannidis2016addaptive,shanmugam2013femto}, and the papers which 
works which consider congestion \cite{mahdian2019kelly,universally2020Li,dehghan2015complexity} assume that the system is stable when all caches are empty; in fact, finding a cache allocation under which the system is stable is left open. We partially resolve this, jointly finding a rate and an allocation that ensure stability.

\fullversion{}{
Similar to this paper, many consider fixed routing for requests. Although one can incorporate routing decisions into the problem, not doing that does not make the optimal cache allocation problem trivial, and the problem is still NP-complete
\cite{shanmugam2013femto, ioannidis2016addaptive, mahdian2019kelly, universally2020Li, mahdian2018mindelay, dehghan2015complexity, complexity2016tassiulas, optimalcache2013wang}. }

\noindent \textbf{TTL caches}.  Time-to-Live (TTL) caches  providing an elegant general framework for analyzing cache replacement policies. In TTL caches, a timer is assigned to each content, and an eviction occurs upon timer expiration. Multiple studies analyzed TTL caches as approximations to popular cache eviction policies \fullversion{(see \cite{hierarchical2002che, versatile2012fricker,unified2014martina, fofack2012analysis})}{(see \cite{hierarchical2002che, versatile2012fricker,unified2014martina, check2013bianchi, fofack2012analysis, exactttl2014berger})}. TTL cache optimization includes  maximizing the cache hit rate \cite{ferra2016tailed} and the aggregate utility of cache hits \cite{dehghan2019utilitymaxcache,panigrahy2017ttledge,panigraphy2017ttlhitrate}.  In contrast to our approach, however, works on TTL caching do not provide a solution for joint cache and rate allocation, and do not guarantee network stability.  Furthermore, they focus on the utility of cache hits, whereas we consider rate utility, similar to Kelly \cite{kelly1998rate}.

\noindent \textbf{Rate admission control in cache networks.} Various  methods have been proposed for rate admission control in Content-Centric Networking (CCN) \cite{jacobson2009ccn} and Named Data Networking (NDN) \cite{zhang2010ndn} architectures, primarily using congestion feedback from the network for rate control \cite{rozhnova2012hopbyhop, ren2016survey, milad2016mircc, badov2014conaware, tanaka2016concon}. In contrast to our work, none of these come with optimality guarantees. Closer to us, Carofiglio et al. \cite{giovani2013optimal}  fix a cache allocation  and maximize rate utility via rate control;   we depart by jointly optimizing rate and cache allocations.

\fullversion{}{\noindent \textbf{Non-Convex Optimization Techniques.} Our analysis employs the Lagrangian barrier algorithm proposed by Conn et al. \cite{conn1997globally} along with a trust-region algorithm proposed by the same authors \cite{trustRegion}. This is a modified version of the barrier method \cite{fiacco1967} which explicitly deals with numerical difficulties arising from maximizing an unconstrained barrier function. For general, non-convex optimization problems, the aforementioned Lagrangian barrier and the trust-region algorithms come with no optimality guarantees. One of our main technical contributions is to provide such guarantees for our problem by exploiting the fact that the non-convex constraints involve  DR-submodular functions \cite{bian2017guaranteed}. }

\noindent \textbf{DR-submodular optimization.} Since their introduction by Bian et al. \cite{bian2017guaranteed}, DR-submodular functions have received much attention \fullversion{\cite{bian2017dr,crawford2019oracle,iyer2013submodularconst}}{\cite{bian2017dr,crawford2019oracle,iyer2013submodularconst, mokhtari2018stochastic,hassani2019stochastic,mokhtari2018decentralized,pmlr-v97-bian19a}} as examples of  functions which can be maximized with performance guarantees, in spite of the fact they are not convex. 
Bian et al. \cite{bian2017guaranteed} %observed that they arise in several problems related to machine learning, economics, and data mining, and 
propose a constant-factor approximation algorithm for (a) maximizing  monotone DR-submodular functions subject to down-closed convex constraints and (b) maximizing non-monotone DR-submodular functions subject to box constraints. In a follow-up paper, Bian et al. \cite{bian2017dr} provide a constant-factor algorithm for  maximizing non-monotone continuous DR-submodular functions under general down-closed convex constraints. These works, however, do not consider DR-submodular functions  in constraints, rather than in the objective.
In a combinatorial setting, Crawford et al. \cite{crawford2019oracle} and Iyer et al. \cite{iyer2013submodularconst} provide approximate greedy algorithms for minimizing a submodular function subject to a single threshold constraint involving a submodular function.  None of the above solutions, however, is applicable to our problem, which involves maximizing a concave  function subject to \emph{multiple} DR-submodular constraints. To the best of our knowledge, we are the first to study this problem and provide solutions with optimality guarantees.

\section{System Model}\label{sec:model}

We consider a network of caches, each capable of storing  items such that the number of stored items cannot exceed a finite cache capacity. Requests are routed over fixed (and given) paths and are satisfied upon hitting the first cache that contains the requested item. Our goal is to determine the (a) items to be stored at each cache as well as (b) the request rates, so that the aggregate utility is maximized, subject to both bandwidth and storage capacity constraints in the network.  

\subsection{Network Model}
\noindent\textbf{Caches and Items.} Following \cite{ioannidis2016addaptive,mahdian2019kelly}, we represent a network by a directed graph $G(\mathcal{V}, \mathcal{E}).$ We assume $G(\mathcal{V}, \mathcal{E})$ is \textit{symmetric}, i.e., $(b,a) \in \mathcal{E}$ implies that $(a,b) \in \mathcal{E}$. 
There exists a catalog $\itemcat$ of items (e.g., files, or file chunks) of equal size which network users can request. Each node is associated with a cache that can store a finite number of items. We describe cache contents via indicator variables:
%\begin{equation*}
%\small
$x_{vi} \in \{0, 1\} \quad \text{for}~v \in \mathcal{V},~i \in \itemcat,$
%\end{equation*}
 where $x_{vi}=1$ indicates that node $v$ stores item $i\in \itemcat$. The total number of items that a node 
$v \in \mathcal{V}$ can store is bounded by its \emph{node capacity} $c_v\in \mathbb{N}$ (measured in number of items). More precisely,
\begin{equation} \label{equ:cachecv} 
    \textstyle \sum_{i\in \itemcat} x_{vi} \leq c_v \quad \text{for all}~v \in \mathcal{V}.
\end{equation} 
We associate each item $i$ with a fixed set of \textit{designated servers} $S_i \subseteq \mathcal{V}$, that permanently store $i$; equivalently, $x_{vi} = 1$, for all $v \in S_i$. As we discuss below, these act as ``caches of last resort'', and ensure that all items can be eventually retrieved.

\noindent\textbf{Content Requests.} \label{subsec:request}
Item requests are routed over the network toward the designated servers. We denote by $\requestset$ the set of all requests. A request is determined by the item requested and  the path that the request follows. Formally, a request $\requestindex$ is a pair $(i, p)$ where $i \in \itemcat$ is the requested item and $p\subseteq \mathcal{V}$ is the path to be traversed to serve this request.   Path $p$ of length $|p| = K$ is a sequence of nodes $\{p_1, p_2, \dots, p_K \}\subseteq \mathcal{V}$,
where  $(p_j, p_{j+1}) \in \mathcal{E}$ for all $j \in [K-1] \triangleq \{1,2,\dots, K-1\}$.  An incoming request $(i, p)$ is routed over the graph $G$ and follows the path $p$, until it reaches a node that stores item $i$. At that point, a \emph{response message} is generated, which carries the requested item. The response is propagated over $p$ in the reverse direction, i.e., from the node that stores the item, back to the first node in $p$, from which the request was generated. Following \cite{ioannidis2016addaptive,mahdian2019kelly}, we say that a request $\requestindex=(i,p)$ is \textit{well-routed} if:
    	(a) the path $p$ is simple, i.e., it contains no loops, (b) the last node in the path is a designated server  for $i$, i.e., $p_K \in S_i$, and
        (c) no other node in the path is a designated server node for $i$, i.e.,  $p_k \notin S_i$, for $k \in [K-1]$.
Without loss of generality, we assume that all requests in $\requestset$ are well-routed;  %That said, a request may be satisfied early: %, if there is a cache hit at any intermediate node in path $p$ (i.e., $x_{p_ki}=1$ for some $k< K$). In other words, 
 note that every well-routed request eventually encounters a node  that contains the item requested.

%\subsection{Congestion Control and Network Utility}
\label{sect:congestioncontrol}

\noindent\textbf{Bandwidth Capacities.} For each link $(a,b) \in \mathcal{E}$ there exists a positive and finite \emph{link capacity} $C_{ab} > 0$ (measured in items/sec) indicating the bandwidth available on  $(a,b)$. We denote the vector of link capacities by $\boldsymbol{C}\in {\mathbb{R}_+}^{|\mathcal{E}|}$. We consider  two means of controlling the rate of item transmission on a link, thereby preventing congestion in the network: (a) via the \emph{cache allocation} strategy, i.e., by storing the requested item on a node along the path, which eliminates the flow of item on upstream links, 
 and (b) via the \emph{rate allocation} strategy, i.e., by controlling the  rate with which requests enter the network. We describe each one in detail  below.
 
 %\textcolor{red}{For each request passing through the link $(a,b)$, an item will be transmitted back on  $(b,a)$. Items arrive at the head of a link at different times, and are put in a first-come-first-served queue to be transmitted. }
 
% \textcolor{red}{We assume that response messages carrying items carry a significant payload, while request messages are small enough to be negligible. \todo{As such, similar to (cite)}, we assume that requests traffic is negligible, and only model downstream traffic due to items. }
 
%\textcolor{red}{If the long-term average arrival rate on a link is greater than the link capacity $C_{ba}$, the number of items in the transmission queue grows without bound and the network becomes unstable. }

\noindent \textbf{Cache Allocation Strategy.} We adopt a probabilistic cache allocation strategy. That is, we partition time into periods of equal length $T > 0$. At the beginning of the $t$-th time period, each node $v \in \mathcal{V}$ stores an item $i \in \itemcat$ independently of other nodes and other time periods with probability $y_{vi} \in [0,1]$, i.e., $y_{vi} = \mathrm{P} \{ x_{vi}(t) = 1 \} = \mathbb{E} [ x_{vi}(t) ]$, for all $t>0$, where $x_{vi}(t) = 1$ indicates that node $v$ stores item $i$ at the $t$-th  time period. We denote by $\boldsymbol{Y} = [y_{vi}]_{v \in \mathcal{V}, i \in \itemcat} \in [0,1]^{|\mathcal{V}| |\itemcat|}$ the \emph{cache allocation strategy} vector, satisfying constraints: 
\begin{equation}\label{equ:avgcv} 
   \textstyle \sum_{i\in\itemcat} y_{vi}  \leq c_v \quad \text{for all}~v \in \mathcal{V}.
\end{equation}
Although condition \eqref{equ:avgcv} implies that cache capacity constraints are satisfied in \emph{expectation}, it is necessary and sufficient for the existence of a probabilistic content placement  (i.e., a mapping of items to caches) that satisfies capacity constraints \eqref{equ:cachecv} \emph{exactly} (see, e.g., \cite{geocaching2015, ioannidis2016addaptive}).  We present this  probabilistic placement in detail in Appendix~\ref{append:probcachealg}.
\fullversion{}{In this mapping, (a) node $v$ stores at most $c_v$ items, and (b) the marginal probability of storing $i$ is $y_{vi}$. Given a global cache allocation strategy $\boldsymbol{Y}$ that satisfies \eqref{equ:avgcv}, this mapping can be used to randomly place items at caches  at the beginning of each time period. We therefore treat $\boldsymbol{Y}$ as the parameter to optimize over in the remainder of the paper. }

\noindent \textbf{Rate Allocation Strategy.} 
Our second knob for controlling congestion is classic rate allocation, as in \cite{kelly1998rate, giovani2013optimal}. That is, we control the input rate of requests so that the final requests injected into the network have a rate equal or smaller than original rates. We refer to the original exogenous arrival rate of a requests $\requestindex=(i, p) \in \requestset$ as the \emph{demand rate}, and denote it by $\bar{\lambda}_{\requestindex} > 0$ (in requests per second). We denote the vector of demand rates by $\bar{\boldsymbol{\lambda}} = [\bar{\lambda}_{\requestindex}]_ {\requestindex\in \requestset}$. We also denote the admitted input rate of requests into the network by $\lambda_{\requestindex}$, where
\begin{align}\label{equ:bound} \small
    \lambda_{\requestindex} \leq \bar{\lambda}_{\requestindex}, \quad \text{for all}~ \requestindex \in \requestset.
\end{align}
We refer to the vector $\boldsymbol{\lambda} = [\lambda_{\requestindex}]_{\requestindex \in \requestset}\in \mathbb{R}_+^{|\requestset|}$ as the \emph{rate allocation strategy}. We make the following assumptions on  requests admitted into the network: 
(a) the request process is stationary and ergodic,
    (b) a corresponding response message is eventually created for every admitted request, 
    (c)  the network is stable if, for all $(b,a) \in \mathcal{E}$, the following holds:
    \begin{equation}\label{equ:cab} \small
   \rho_{(b,a)}(\boldsymbol{\lambda},\boldsymbol{Y}) =\textstyle\sum_{(i,p): (a,b) \in p}\lambda_{(i,p)} \prod_{v=p_1}^{a} (1-y_{vi}) \leq C_{ba}.
    \end{equation}
    
Using the probabilistic cache allocation scheme, and the fact that the admitted request process is stationary and ergodic, $\rho_{(b,a)}(\boldsymbol{\lambda},\boldsymbol{Y})$ is the expected rate of requests passing through link $(a,b)$. In particular, $\prod_{v=p_1}^{a} (1-y_{vi})$ is the fraction of admitted rate $\lambda_{(i,p)}$ which is forwarded on link $(a,b)$. Since we have assumed that for each request a response message is generated, and comes back on the reverse path, the condition in \eqref{equ:cab} ensures that the rate of items transmitted on link $(b,a)$ is less than or equal to the link capacity $C_{ba}$. If the traffic rate on a link is greater than the link capacity, the network becomes unstable. In order for \eqref{equ:cab} to ensure stability, similar to \cite{vip2014yeh, deco2019kamran}, in effect we assume that the size of requests are negligible compared to the size of requested items, and the load primarily consists of the downstream traffic of items. Note that the load on edge $(b,a)$ depends both the rate \emph{and} the cache allocation strategy, while constraints \eqref{equ:cab} are non-convex.

\noindent\textbf{System Utility.} Consistent with Kelly et al.~\cite{kelly1998rate},  each request class $\requestindex \in \requestset$ is associated with a \emph{utility function} $U_{\requestindex} : \mathbb{R}_+ \rightarrow \mathbb{R}$ of the admitted rate $\lambda_n$. The network utility is then the social welfare, i.e., the sum of all request utilities in the network:
\begin{equation}\label{equ:utility} \small
    U(\boldsymbol{\lambda}) = \textstyle\sum_{\requestindex \in \requestset} U_{\requestindex} (\lambda_{\requestindex}).
\end{equation} 
We assume that each function $U_{\requestindex}$ is twice continuously differentiable, non-decreasing, and concave for all $\requestindex \in \requestset$. Our goal is to determine a rate allocation strategy $\boldsymbol{\lambda} = [\lambda_{\requestindex}]_{\requestindex \in \requestset}\in \mathbb{R}_+^{|\requestset|}$ and a cache allocation strategy $\boldsymbol{Y} = [y_{vi}]_{v \in \mathcal{V}, i \in \itemcat} \in [0,1]^{|\mathcal{V}| |\itemcat|}$ that jointly maximize \eqref{equ:utility}, subject to the constraints \eqref{equ:avgcv}, \eqref{equ:bound}, and \eqref{equ:cab}. For technical reasons, we first transform this problem into an equivalent problem via a change of variables.

\subsection{Problem Formulation}\label{sec:probform}
\noindent\textbf{Change of variables.} Let the \emph{residual rate} per request be $r_{\requestindex} \triangleq \bar{\lambda}_{\requestindex} - \lambda_{\requestindex},$ for $\requestindex \in \requestset.$
Given the \emph{rate residual strategy}  $\boldsymbol{R} \triangleq [r_{\requestindex}]_{\requestindex \in \requestset}\in \mathbb{R}_+^{|\requestset|}$, we rewrite the  utility as
\begin{equation} \label{equ:newutility} \small
    F(\boldsymbol{R}) \triangleq U(\bar{\boldsymbol{\lambda}}-\boldsymbol{R}) = \textstyle\sum_{\requestindex \in \requestset} U_{\requestindex} (\bar{\lambda}_{\requestindex} - r_{\requestindex}).
\end{equation}
%The link constraints in \eqref{equ:cab} are then written as $\sum_{(i,p): (a,b) \in p} (\bar{\lambda}_{(i,p)} - r_{(i,p)}) \prod_{v=p_1}^{a} (1-y_{vi}) \leq C_{ba}$, for all $(b,a) \in \mathcal{E}$,
%or, equivalently, $\sum_{(i,p) : (a,b) \in p} \bar{\lambda}_{(i,p)} - (\bar{\lambda}_{(i,p)} - r_{(i,p)}) \prod_{v=p_1}^{a} (1-y_{vi})$ $\geq$ $\sum_{(i,p) : (a,b) \in p} \bar{\lambda}_{(i,p)} - C_{ba}$, for all $(b,a) \in \mathcal{E}$.
Under this change of variables, we state our problem as:

\vspace{3mm}
{\hspace*{\stretch{1}} \textsc{UtilityMax} \hspace*{\stretch{1}} }
   \begin{subequations}\label{cachingproblem1}
    \begin{align}
    \textrm{maximize} \quad &F(\boldsymbol{R}) \displaybreak[0]\\
    \textrm{subject to} \quad & (\boldsymbol{Y},\boldsymbol{R}) \in \mathcal{D}_1,
\end{align}
\end{subequations}
where $\mathcal{D}_1$ is the set of points $(\boldsymbol{Y},\boldsymbol{R}) \in \mathbb{R}^{|\mathcal{V}|  |\itemcat|} \times \mathbb{R}^{|\requestset|}$ satisfying the following constraints\footnote{%Following \cite{ioannidis2016addaptive,mahdian2019kelly}, 
W.l.o.g., we implicitly set %the cache allocation variables corresponding to the designated sources for each item to 1, i.e.,  
$y_{vi} = 1 ,$ for all $v \in \mathcal{S}_i, i \in \itemcat$ and do include these constraints in \eqref{equ:D}.}:
\begin{subequations}\label{equ:D}
\begin{align} \small
    &g_{ba}(\boldsymbol{Y}, \boldsymbol{R}) \geq \textstyle\sum_{(i,p) : (a,b) \in p} \bar{\lambda}_{(i,p)} - C_{ba}, \quad \forall (b,a) \in \mathcal{E} \label{const:link}\displaybreak[0]\\
    &g_{v}(\boldsymbol{Y}) \leq c_v, \quad \quad \quad \quad \quad \quad \quad \quad \quad \quad \quad \quad \quad \forall v \in \mathcal{V} \label{const:cache}\displaybreak[0]\\
    &0 \leq y_{vi} \leq 1, \quad \quad \quad \quad \quad \quad \quad \quad \quad \quad \forall v \in \mathcal{V}, i \in \itemcat \label{const:boxfory}\displaybreak[0]\\
    &0 \leq r_{\requestindex} \leq \bar{\lambda}_{\requestindex}, \quad \quad \quad \quad \quad \quad \quad \quad \quad \quad \quad \quad \forall \requestindex \in \requestset, \label{const:boxforlambda}
\end{align}
\end{subequations}
 where, for $(b,a)\in\mathcal{E}$ and $v\in\mathcal{V}$, $g_{v}(\boldsymbol{Y}) \triangleq \sum_{i \in \itemcat} y_{vi}$, and 
 \begin{align}g_{ba}(\boldsymbol{Y},\boldsymbol{R}) \triangleq \!\!\!\!\!\!\!\sum_{(i,p) : (a,b) \in p}\!\!\! \!\!\!\!\!\bar{\lambda}_{(i,p)}\! -\! (\bar{\lambda}_{(i,p)} \!-\! r_{(i,p)})
 \!\!   \prod_{v=p_1}^{a}\! (1\!-\!y_{vi}).\!\!\!\!\label{eq:gba}\end{align}

An important consequence of this change of variables is the following lemma.
\begin{lem} \label{lem:monotonedr}
For all $(b,a) \in \mathcal{E}$, functions $g_{ba}:\mathbb{R}^{|\mathcal{V}|  |\itemcat|} \times \mathbb{R}^{|\requestset|}  \to \mathbb{R} $, are monotone DR-submodular.
\end{lem}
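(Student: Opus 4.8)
The plan is to reduce the claim to a single request and then apply the second-order characterization of DR-submodularity. First I would write $g_{ba}$ as a non-negative sum of per-request terms: for each well-routed request $n=(i,p)$ with $(a,b)\in p$, define $h_n(\boldsymbol{Y},\boldsymbol{R})\triangleq\bar\lambda_n-(\bar\lambda_n-r_n)\prod_{v=p_1}^{a}(1-y_{vi})$, so that $g_{ba}=\sum_{(i,p):(a,b)\in p}h_{(i,p)}$. Since both monotonicity (here, being non-decreasing in every coordinate) and DR-submodularity are preserved under non-negative sums and under the introduction of coordinates on which a function does not depend, it suffices to show that each $h_n$ — viewed as a function of $r_n$ and of the coordinates $y_{vi}$ for $v$ on the prefix $P_n\triangleq\{p_1,\dots,a\}$ of $p$, on the box $[0,\bar\lambda_n]\times[0,1]^{|P_n|}$ — is monotone and DR-submodular.

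For a fixed $n$ I would abbreviate $m\triangleq\bar\lambda_n-r_n$ and $\phi\triangleq\prod_{v\in P_n}(1-y_{vi})$, so that $h_n=\bar\lambda_n-m\phi$; here $m\ge 0$ by \eqref{const:boxforlambda}, and $\phi$ as well as every sub-product of the factors $(1-y_{vi})$ is $\ge 0$ by \eqref{const:boxfory}. Differentiating gives $\partial h_n/\partial r_n=\phi\ge 0$ and $\partial h_n/\partial y_{vi}=m\prod_{w\in P_n\setminus\{v\}}(1-y_{wi})\ge 0$, so $h_n$ is non-decreasing in every coordinate. For the Hessian, $\partial^2 h_n/\partial r_n^2=0$, $\partial^2 h_n/(\partial r_n\,\partial y_{vi})=-\prod_{w\in P_n\setminus\{v\}}(1-y_{wi})\le 0$, $\partial^2 h_n/(\partial y_{vi}\,\partial y_{ui})=-m\prod_{w\in P_n\setminus\{u,v\}}(1-y_{wi})\le 0$ for $u\ne v$, and $\partial^2 h_n/\partial y_{vi}^2=0$. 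Thus $\nabla^2 h_n$ is entrywise non-positive on the box, and by the characterization of Bian et al.~\cite{bian2017guaranteed} — a twice continuously differentiable function on a box is DR-submodular iff its Hessian is entrywise non-positive — $h_n$ is DR-submodular. Summing over all such requests and treating the remaining coordinates of $(\boldsymbol{Y},\boldsymbol{R})$ as dummy variables then yields the lemma on the box $[0,1]^{|\mathcal{V}||\itemcat|}\times\prod_{n}[0,\bar\lambda_n]$.

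The only genuine work here is the sign bookkeeping, and this is exactly what the change of variables is meant to handle: replacing $\lambda_n$ by the residual $r_n=\bar\lambda_n-\lambda_n$ makes the coefficient multiplying each product equal to $m=\bar\lambda_n-r_n=\lambda_n\ge 0$, and rewriting the load constraint $\rho_{(b,a)}\le C_{ba}$ as $g_{ba}\ge\sum_{(i,p):(a,b)\in p}\bar\lambda_{(i,p)}-C_{ba}$ (valid since $g_{ba}=\sum_{(i,p):(a,b)\in p}\bar\lambda_{(i,p)}-\rho_{(b,a)}$) converts a constraint on a function with the ``wrong'' monotonicity and curvature into one on a monotone non-decreasing DR-submodular function, which is the form exploited in Section~\ref{sec:main}. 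I expect the only subtlety to be making explicit where $y_{vi}\in[0,1]$ and $r_n\le\bar\lambda_n$ are used: the sub-products of the $(1-y_{wi})$ and the coefficient $m$ are non-negative only on this box, so outside it the signs of the derivatives above need not hold; accordingly, ``DR-submodular'' in the statement is to be read relative to the box $[0,1]^{|\mathcal{V}||\itemcat|}\times\prod_{n}[0,\bar\lambda_n]$, i.e., the feasible region of \textsc{UtilityMax}.
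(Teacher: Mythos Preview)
Your proposal is correct and follows essentially the same approach as the paper: both arguments verify monotonicity via non-negative first partials and DR-submodularity via the second-order characterization of Bian et al.\ (entrywise non-positive Hessian). Your version is simply more explicit---you decompose $g_{ba}$ into per-request summands $h_n$ before differentiating, and you are careful to note that the sign conditions hold only on the box $[0,1]^{|\mathcal{V}||\itemcat|}\times\prod_n[0,\bar\lambda_n]$, a point the paper's terse proof leaves implicit.
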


\begin{proof} Please see Appendix~\ref{append:proofoflemmonotonedr}.
\end{proof}

\fullversion{}{We use this in Section \ref{sec:main} to provide algorithms with optimality guarantees for \textsc{UtilityMax}. For this reason, we briefly review  DR-submodular functions below.  

\subsection{DR-submodular Functions} \label{subsec:drsub} Bian et al. \cite{bian2017guaranteed} define a DR-submodular function as follows:
\begin{defn} \label{def:drproperty}
 Suppose $\mathcal{X}$ is a subset of $\mathbb{R}^n$. A function $f : \mathcal{X} \rightarrow \mathbb{R}$ is DR-submodular if for all $ \boldsymbol{a} \leq \boldsymbol{b} \in \mathcal{X}$, $ i \in [n]$, and $k \in \mathbb{R}_+$, such that $(k  e_i + \boldsymbol{a})$ and $(k  e_i + \boldsymbol{b})$ are still in $\mathcal{X}$, the following inequality holds:
\begin{equation*}
    f(k e_i + \boldsymbol{a}) - f(\boldsymbol{a}) \geq f(k e_i + \boldsymbol{b}) - f(\boldsymbol{b})
\end{equation*}
\end{defn}
Intuitively, a DR-submodular function $f$ is concave coordinate-wise along any non-negative or non-positive direction.
DR-submodular functions arise in a variety of different settings (see Bian et al.~\cite{bian2017guaranteed}), and in some sense satisfy a weakened notion of concavity. They can also be defined in alternative ways that parallel the zero-th, first, and second order conditions for concavity (see \cite{boyd2004convex}). For example, for $\mathcal{X} \subseteq \mathbb{R}^n$,
a function $f: \mathcal{X} \rightarrow \mathbb{R}$ is DR-submodular \emph{iff} for all $\boldsymbol{x}, \boldsymbol{y} \in \mathcal{X}$,
\begin{equation*}
    f(\boldsymbol{x}) + f(\boldsymbol{y}) \geq f(\boldsymbol{x} \vee \boldsymbol{y}) + f( \boldsymbol{x} \wedge \boldsymbol{y}),
\end{equation*}
where $\vee$ and $\wedge$ are coordinate-wise maximum and
minimum operations, respectively. 
A list of such conditions of DR-submodular functions is summarized in Table \ref{table:drprop}. Each one of these properties serves as a necessary and sufficient condition for a function to be DR-submodular \cite{bian2017guaranteed}. 

\begin{table}[!t]  
\centering
\caption{Properties of DR-submodular functions}
 \begin{tabular}{|c | c |} 
 \hline
 Properties & DR-submodular $f(.)$, $\forall \boldsymbol{x}, \boldsymbol{y} \in \mathcal{X}$ \\ [1ex] 
 \hline\hline 
  %& \\
  0'th order & $f(\boldsymbol{x}) + f(\boldsymbol{y}) \geq f(\boldsymbol{x} \vee \boldsymbol{y}) + f( \boldsymbol{x} \wedge \boldsymbol{y})$, \\
 & and $f(.)$ is coordinate-wise concave. \\
 %& \\
 \hline
 %& \\ 
  1'st order & Definition \ref{def:drproperty}  \\
 %& \\ 
 \hline
 %& \\
 2'nd order &  $\frac{\partial^2 f(\boldsymbol{x})}{\partial x_i \partial x_j} \leq 0,~ \forall i,j \in [n]$ \\ 
 %& \\
 \hline
\end{tabular}
\label{table:drprop}
\end{table}

The following lemma, proved in Appendix~\ref{append:proofoflemmonotonedr}, indicates how DR-submodularity arises in the constraints of \textsc{UtilityMax}:
\begin{lem} \label{lem:monotonedr}
for all $(b,a) \in \mathcal{E}$, functions $g_{ba}:\mathbb{R}^{|\mathcal{V}|  |\itemcat|} \times \mathbb{R}^{|\requestset|}  \to \mathbb{R} $, given by \eqref{eq:gba}, are monotone DR-submodular.
\end{lem}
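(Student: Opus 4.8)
The plan is to prove Lemma~\ref{lem:monotonedr} by decomposing $g_{ba}$ into single-request summands and then invoking the second-order characterization of DR-submodularity from Table~\ref{table:drprop}. First I would record two elementary closure facts: a finite sum of monotone DR-submodular functions is again monotone DR-submodular (both ``non-decreasing'' and ``$\partial^2/\partial x_k\partial x_\ell\le 0$ for all $k,\ell$'' are preserved under addition), and any constant function is vacuously monotone DR-submodular. By \eqref{eq:gba}, $g_{ba}$ is exactly the sum, over the finitely many requests $(i,p)$ with $(a,b)\in p$, of the constant $\bar{\lambda}_{(i,p)}$ plus the term $h_{(i,p)}(\boldsymbol{Y},\boldsymbol{R}) \triangleq -(\bar{\lambda}_{(i,p)}-r_{(i,p)})\prod_{v=p_1}^{a}(1-y_{vi})$. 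Hence it suffices to show each $h_{(i,p)}$ is monotone DR-submodular on the feasible box carved out by \eqref{const:boxfory}--\eqref{const:boxforlambda}, which is the domain relevant to \textsc{UtilityMax}.

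Fixing one such request $\requestindex=(i,p)$ and abbreviating $h=h_\requestindex$, I would first note that $h$ is a polynomial, so it is $C^2$ and the second-order test applies, and moreover that $h$ is \emph{affine} in $r_\requestindex$ and affine in each coordinate $y_{vi}$ separately, so every \emph{pure} second partial of $h$ vanishes. For the \emph{mixed} second partials, I would compute $\partial^2 h/\partial r_\requestindex\,\partial y_{ui}$ and $\partial^2 h/\partial y_{ui}\,\partial y_{u'i}$ for distinct nodes $u,u'$ on the subpath $p_1\to a$, finding that each equals $-1$ times a product of factors drawn from $\{\bar{\lambda}_\requestindex-r_\requestindex\}\cup\{1-y_{vi}\}$; all remaining mixed partials (those touching a variable $h$ does not depend on) are zero. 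On the box \eqref{const:boxfory}--\eqref{const:boxforlambda} we have $1-y_{vi}\ge 0$ and $\bar{\lambda}_\requestindex-r_\requestindex=\lambda_\requestindex\ge 0$, so every mixed second partial is $\le 0$; together with the vanishing pure partials, Table~\ref{table:drprop} gives DR-submodularity of $h$. Monotonicity comes for free from the same computation: $\partial h/\partial r_\requestindex=\prod_{v=p_1}^{a}(1-y_{vi})\ge 0$ and $\partial h/\partial y_{ui}=\lambda_\requestindex\prod_{v\ne u}(1-y_{vi})\ge 0$ on the box, so $h$ is non-decreasing. Summing the $h_{(i,p)}$ back up and adding the constants $\bar{\lambda}_{(i,p)}$, the two closure facts then yield the lemma.

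The conceptual reason this works — and the reason the change of variables $r_\requestindex=\bar{\lambda}_\requestindex-\lambda_\requestindex$ was introduced — is that each summand of $g_{ba}$ is, up to an additive constant, the prototypical monotone \emph{decreasing} DR-submodular map $\boldsymbol{w}\mapsto-\prod_j w_j$ on the nonnegative orthant, composed with the coordinate-wise \emph{decreasing} affine substitution $\boldsymbol{w}=(\bar{\lambda}_\requestindex-r_\requestindex,\,1-y_{p_1 i},\,1-y_{p_2 i},\dots)$; flipping \emph{all} coordinates of a DR-submodular function simultaneously preserves DR-submodularity while turning ``decreasing'' into ``increasing''. The main obstacle in a careful write-up is the sign bookkeeping for the mixed partials, together with the minor subtlety that the lemma genuinely lives on the feasible box rather than on all of $\mathbb{R}^{|\mathcal{V}||\itemcat|}\times\mathbb{R}^{|\requestset|}$: as soon as a path-subproduct has length $\ge 2$, the cross partial $-\prod_{v\ne u}(1-y_{vi})$ would change sign if some $y_{vi}$ exceeded $1$, so the box constraints \eqref{const:boxfory} are actually used. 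Everything else — closure under sums, $C^2$-ness, and reading off monotonicity — is routine.
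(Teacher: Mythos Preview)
Your proposal is correct and follows essentially the same route as the paper: verify DR-submodularity via the second-order test (all entries of the Hessian nonpositive) and monotonicity via nonnegativity of the first partials. Your write-up is in fact more careful than the paper's one-line appendix proof --- you decompose into single-request summands, note closure under sums, and correctly flag that the sign conditions on the mixed partials only hold on the feasible box \eqref{const:boxfory}--\eqref{const:boxforlambda}, a point the paper's statement of the lemma (with domain all of $\mathbb{R}^{|\mathcal{V}||\itemcat|}\times\mathbb{R}^{|\requestset|}$) glosses over.
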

For more information on DR-submodular functions, we refer the interested reader to \cite{bian2017guaranteed,bian2017dr}.
 }
\section{Cache and Rate Allocation}\label{sec:main}
The constraint set $\mathcal{D}_1$ in Problem~\eqref{cachingproblem1} is not convex. Therefore, there is in general no efficient way to find the global optimum.
\fullversion{Here, we propose two algorithms that come with (different) optimality guarantees. Both algorithms exploit the fact that the functions $g_{ba}(\cdot)$ are DR-submodular functions.}{Constrained optimization techniques can be used to find a  \emph{Karush-Kuhn-Tucker (KKT) point} (i.e., a point in which KKT necessary conditions for optimality hold) under mild conditions. In general, there is no guarantee on the value of the objective at a KKT point compared to the global optimum. However, as one of our major contributions, we  provide optimality guarantees for the objective value of \eqref{cachingproblem1} at a KKT point. In particular, we propose two algorithms that come with (different) optimality guarantees. Both algorithms exploit the fact that the functions $g_{ba}(\cdot)$ in \eqref{const:link}--which are the cause of non-convexity--are DR-submodular functions.} 

In our first approach, described in Section~\ref{subsec:LagrangianBarrierForUtilityMax}, we solve Problem~\eqref{cachingproblem1} using a \emph{Lagrangian barrier algorithm} \cite{conn1997globally}. We show that this converges to
a Karush-Kuhn-Tucker (KKT) point (i.e., a point at which KKT necessary conditions for optimality hold) under mild assumptions. Crucially, and in contrast to general non-convex problems~\cite{conn1997globally}, we provide guarantees on the objective value at such KKT points. In particular, we show that the ratio of the objective value at a KKT point to the global optimum value approaches 1, asymptotically, under an appropriate proportional scaling of capacities and demand. 
In Section~\ref{subsec:concaverelax}, we provide an alternative solution via \emph{convex relaxation} of the constraint set $\mathcal{D}_1$. This turns our problem into a convex optimization problem for which efficient algorithms exist. We show that the solution obtained by solving the convex problem is feasible, and its objective value is  bounded from below by the optimal value of another instance of Problem~\eqref{cachingproblem1} with tighter constraints. %We discuss these two methods  in  detail below.
%%%%%%%%%

%In Section \ref{subsec:algorithmforKKT} we present a Lagrangian barrier optimization technique for finding \emph{KKT points} of \textsc{UtilityMax}. 

%%%%%%

\subsection{Lagrangian Barrier Algorithm for \textsc{UtilityMax}} \label{subsec:LagrangianBarrierForUtilityMax}
Problem~\eqref{cachingproblem1} is a maximization problem subject to the inequality constraints \eqref{const:link}, \eqref{const:cache}, and the simple box constraints on the variables \eqref{const:boxfory} and \eqref{const:boxforlambda}. Due to this structure, we propose to use the \emph{Lagrangian Barrier with Simple Bounds} (LBSB) Algorithm, introduced by Conn et al. \cite{conn1997globally}. %We stress that for arbitrary non-convex problems, LBSB comes with no optimality guarantees. However, leveraging the DR-submodularity of $g_{ba}(\cdot)~ \forall (b,a) \in \mathcal{E}$, we are indeed able to provide such guarantees (see Thm.~\ref{thm:assymptotic}).

\noindent \textbf{Algorithm description.}  LBSB defines the Lagrangian barrier function $\Psi(\boldsymbol{Y}, \boldsymbol{R}, \boldsymbol{\mu},\boldsymbol{\gamma}, \boldsymbol{s}) $, given by:
\begin{align} \small
    &F(\boldsymbol{R}) + \sum_{(b,a) \in \mathcal{E}} \mu_{ba} s_{ba} \log (g_{ba}(\boldsymbol{Y},\boldsymbol{R}) - \sum_{(i,p) : (a,b) \in p} \bar{\lambda}_{(i,p)} \nonumber \\
%\end{align*}
%\begin{align} \small
     &+ C_{ba} + s_{ba} ) + \sum_{v \in \mathcal{V}} \gamma_{v} s_{v} \log (c_v - g_{v}(\boldsymbol{Y}) + s_{v}), \label{equ:psi}
\end{align}
where the elements of vectors $\boldsymbol{\mu} \triangleq [\mu_{ba}]_{(b,a)\in \mathcal{E}} \in \mathbb{R}_+^{|\mathcal{E}|}$ and $\boldsymbol{\gamma} \triangleq [\gamma_v]_{v \in \mathcal{V}} \in \mathbb{R}_+^{|\mathcal{V}|}$ are the positive \emph{Lagrange multiplier estimates} corresponding to \eqref{const:link} and \eqref{const:cache} respectively, and the vector $\boldsymbol{s} \in \mathbb{R}_+^{|\mathcal{E}| + |\mathcal{V}| }$ consists of the positive values $[s_{ba}] _{(b,a) \in \mathcal{E}}$  and $[s_v]_{v \in \mathcal{V}} $ called \emph{shifts} \cite{conn1997globally}. Intuitively, the Lagrangian barrier function in \eqref{equ:psi} penalizes the infeasibility of the link and cache constraints, and the shifts allow the constraints to be violated to some extent. Consider the following problem:
\begin{align} 
    \max_{(\boldsymbol{Y}, \boldsymbol{R})} \quad &\Psi(\boldsymbol{Y}, \boldsymbol{R}, \boldsymbol{\mu}_k,\boldsymbol{\gamma}_k, \boldsymbol{s}_k) \label{prob:innersub} \\
    \text{s.t.} \quad &(\boldsymbol{Y}, \boldsymbol{R}) \in \mathcal{B}, \nonumber
\end{align}
where the values $\boldsymbol{\mu}_k,\boldsymbol{\gamma}_k, \boldsymbol{s}_k$ are given, and $\mathcal{B}$ is the box constraints set defined by \eqref{const:boxfory} and \eqref{const:boxforlambda}. Then the necessary optimality condition for Problem~\eqref{prob:innersub} is
\begin{equation} \label{equ:neccforinner}
    \| P\left( (\boldsymbol{Y},\boldsymbol{R})~,~ \nabla_{\boldsymbol{Y},\boldsymbol{R}} \Psi(\boldsymbol{Y}, \boldsymbol{R}, \boldsymbol{\mu}_k,\boldsymbol{\gamma}_k, \boldsymbol{s}_k) \right) \|  = 0,
\end{equation}
where $P(\boldsymbol{a}~,~\boldsymbol{b}) \triangleq \boldsymbol{a} - \Pi_{\mathcal{B}}(\boldsymbol{a} + \boldsymbol{b})$, and $\Pi_{\mathcal{B}}(\boldsymbol{a})$ is the projection of the vector $\boldsymbol{a}$ on  the set $\mathcal{B}$. At the $k$-th iteration, LBSB updates $\boldsymbol{Y}_k, \boldsymbol{R}_k$ by finding a point in $\mathcal{B}$, such that the following condition is satisfied:
\begin{equation}\label{equ:appforinner}
    \| P\left( (\boldsymbol{Y},\boldsymbol{R})~,~ \nabla_{\boldsymbol{Y},\boldsymbol{R}} \Psi(\boldsymbol{Y}, \boldsymbol{R}, \boldsymbol{\mu}_k,\boldsymbol{\gamma}_k, \boldsymbol{s}_k) \right) \| \leq \omega_k,
\end{equation}
where parameter $\omega_k\geq 0$ indicates the accuracy of the solution; when $\omega_k = 0$, the point $(\boldsymbol{Y}_k, \boldsymbol{R}_k)$ satisfies the necessary optimality condition \eqref{equ:neccforinner}. In general, this point can be found by iterative algorithms such as interior-point methods or projected gradient ascent. Here, we use the trust-region algorithm \cite{trustRegion} for simple box constraints\fullversion{}{, which we describe for completeness in Appendix~\ref{append:trustregion}}.

After updating $(\boldsymbol{Y}_k, \boldsymbol{R}_k)$, LBSB checks whether the solution is in a ``locally convergent regime'' (with tolerance $\delta_k$). If so, it updates  the Lagrange multiplier estimates. It also updates the accuracy parameter $\omega_{k+1}$, the tolerance parameter $\delta_{k+1}$ and the shifts $\boldsymbol{s}_{k+1}$; these updates differ depending on whether the algorithm is in a locally convergent regime or not. These iterations  continue until the algorithm converges; a high-level summary of LBSB is described in Alg.~\ref{alg:findKKT}. We refer the interested reader to Conn et al. \cite{conn1997globally} or Appendix~\ref{append:LBSB} for a detailed description of the algorithm.\fullversion{}{ The details include initial parameters, updates of the Lagrange multiplier estimates, shifts, accuracy and tolerance parameters, as well as a formal definition of the locally convergent regime.} Under relatively mild assumptions (see Lemma~\ref{lem:kkt}), the solution generated by LBSB converges to a KKT point and the Lagrange multiplier estimates converge to the Lagrange multipliers corresponding to that KKT point.

\begin{algorithm}[!t] 
    \caption{Summary of Lagrangian Barrier with Simple Bounds (LBSB)}
    \label{alg:findKKT}
    \begin{algorithmic}[1] % The number tells where the line numbering should start
        \State Set accuracy parameter $\omega_0$
        \State Set tolerance parameter for locally convergent regime $\delta_0$
        \State Set Lagrange multiplier estimates $\boldsymbol{\mu}_0$, $\boldsymbol{\gamma}_0$, and other initial parameters
        \State $k \gets -1$
        \Repeat
            \State $k \gets k + 1$
            \State Compute shifts $\boldsymbol{s}_k$
            %\State $(\boldsymbol{Y}_k,\boldsymbol{R}_k) \gets InnerSubproblem (\omega_k,\Psi(\boldsymbol{Y}, \boldsymbol{R}, \boldsymbol{\mu}_k,\boldsymbol{\gamma}_k, \boldsymbol{s}_k) )$ 
            \State Find $(\boldsymbol{Y}_k,\boldsymbol{R}_k) \in \mathcal{B}$ such that \label{line:inner}
            
            \scalebox{0.8}{$ || P( (\boldsymbol{Y}_k,\boldsymbol{R}_k)~,~ \nabla_{\boldsymbol{Y},\boldsymbol{R}} \Psi(\boldsymbol{Y}_k, \boldsymbol{R}_k, \boldsymbol{\mu}_k,\boldsymbol{\gamma}_k, \boldsymbol{s}_k) ) || \leq \omega_k$}.
            
            \If{ in locally convergent regime (with threshold $\delta_k$)} \label{line:localyconv}
                \State Update Lagrange multiplier estimates $\boldsymbol{\mu}_{k+1},\boldsymbol{\gamma}_{k+1}$
                \State Update $\omega_{k+1}$ using $\omega_k$
                \State Update $\delta_{k+1}$ using $\delta_k$ 
            \Else
                \State Update $\omega_{k+1}$ using initial parameters
                \State Update $\delta_{k+1}$ using initial parameters
            \EndIf
        \Until{convergence}
    \end{algorithmic}
\end{algorithm}

\noindent \textbf{Guarantees.} For general non-convex problems, the KKT point to which LBSB converges \emph{comes with no optimality guarantees}. Our main contribution is showing that due to DR-submodularity, applying LBSB to Problem~\eqref{cachingproblem1} yields a stronger result. We first need a few additional  assumptions.

\begin{defn}\label{def:logDR}
A  function $U_\requestindex: \mathbb{R}_+ \rightarrow \mathbb{R}$ has \textit{logarithmic diminishing return} if there exists a finite number $\theta_\requestindex \in \mathbb{R}_+$ such that $\lambda \frac{d U_\requestindex(\lambda)}{d\lambda} \leq \theta_\requestindex$ for all $\lambda \in[0, \infty).$
  
\end{defn}

%\begin{defn}\label{def:unbounded}
%A utility function $U:[0, \infty) \rightarrow \mathbb{R}$ is called \emph{unbounded} if  the following holds
%\begin{equation}
%    \limsup\limits_{\lambda \rightarrow +\infty} U(\lambda) = +\infty
%\end{equation}
%\end{defn}
%For example, the log utility function $U(x) = log(x)$ is  unbounded utility and also has logarithmic diminishing return. We make the following assumption about utility functions $U_{(i,p)}$ 

%Suppose $\big[\boldsymbol{Y}^*(\boldsymbol{C},\bar{\boldsymbol{\lambda}}), \boldsymbol{\lambda}^*(\boldsymbol{C},\bar{\boldsymbol{\lambda}})\big]$ is the optimal solution to \emph{UtilityMax} with link capacity matrix $\boldsymbol{C}$ and maximum admissible rate vector $\bar{\boldsymbol{\lambda}}$. The utility function $U_{(i,p)}(.)$ is called \emph{unbounded utility} if (its maximum is infinity and all maximums bounded from below?)
%\begin{equation}
%    \lim_{\bar{\boldsymbol{\lambda}} \to \infty, \boldsymbol{C} \to\infty} U_{(i,p)}(\lambda_{(i,p)}^*(\boldsymbol{C},\bar{\boldsymbol{\lambda}})) = \infty
%\end{equation}

\begin{assm} \label{assm:logreturn}
All utility functions $U_{\requestindex}$, $\requestindex \in \requestset$, have logarithmic diminishing return. 
\end{assm}

\begin{assm} \label{assm:unbounded}
At least one of the utility functions is unbounded from above.
\end{assm}

We want to stress that Assumptions~\ref{assm:logreturn} and~\ref{assm:unbounded} are relatively mild. For example, consider the well-known $\alpha-$fair utility functions \cite{warland2000fairwindow}\fullversion{.}{:
\begin{equation*}
    U^{\alpha}(x) =
\left\{
	\begin{array}{ll}
		\omega \cdot \frac{x^{1-\alpha}}{1-\alpha} & \mbox{if } \alpha >0, ~ \alpha \neq 1 \\
	    \omega \cdot log(x) & \mbox{if } \alpha = 1,
	\end{array}
\right.\
\end{equation*}
where $\omega\geq 0.$}
All $\alpha-$fair utility functions with $\alpha \geq 1$ have logarithmic diminishing return. For $\alpha = 1$, the utility function is unbounded from above. Therefore, for example, a problem instance with  $\alpha-$fair utility functions, where $\alpha\geq 1$ and at least one function has $\alpha = 1$, satisfies Asssumptions~\ref{assm:logreturn} and~\ref{assm:unbounded}.

\begin{defn}\emph{Regular point}:
If the gradients of the active inequality constraints at $(\boldsymbol{Y}, \boldsymbol{R})$
are linearly independent, then $(\boldsymbol{Y}, \boldsymbol{R})$ is called a regular point. \label{def:regular}
\end{defn}

Our main result is the following theorem,  characterizing the quality of \emph{regular} limit points of the sequence $\{ (\boldsymbol{Y}_{k}, \boldsymbol{R}_{k})\}$ generated by Alg.~\ref{alg:findKKT}. We note that the regularity of limit points is typically  considered in the analysis of other methods in constrained optimization literature as well \cite{bertsekas1982const, fletcher1981practical, bertsekas1999nonlinear}.

\begin{thm}\label{thm:assymptotic}
Consider a problem instance with link capacity vector $\boldsymbol{C} \in \mathbb{R}_+^{|\mathcal{E}|}$ and demand rate vector $\bar{\boldsymbol{\lambda}} \in \mathbb{R}_+^{|\requestset|}$.   Suppose Assumptions~\ref{assm:logreturn} and~\ref{assm:unbounded} hold, and $\{ (\boldsymbol{Y}_k, \boldsymbol{R}_k) \}$, $k \in \mathcal{K}$ is a sub-sequence generated by Alg.~\ref{alg:findKKT} which converges to a regular point $\big[\hat{\boldsymbol{Y}}(\boldsymbol{C},\bar{\boldsymbol{\lambda}}), \hat{\boldsymbol{R}}(\boldsymbol{C},\bar{\boldsymbol{\lambda}})\big]$.  Denote the optimal solution by $\big[\boldsymbol{Y}^*(\boldsymbol{C},\bar{\boldsymbol{\lambda}}), \boldsymbol{R}^*(\boldsymbol{C},\bar{\boldsymbol{\lambda}})\big]$. Then, we have $\lim_{m\to\infty} {F( \hat{\boldsymbol{R}}(m\boldsymbol{C},m\bar{\boldsymbol{\lambda}}) )}/{F( \boldsymbol{R}^* (m\boldsymbol{C},m\bar{\boldsymbol{\lambda}}) )} = 1$.
\end{thm}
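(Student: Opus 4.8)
The plan is to analyze the scaling behavior of both the KKT point $\hat{\boldsymbol{R}}(m\boldsymbol{C}, m\bar{\boldsymbol{\lambda}})$ and the optimum $\boldsymbol{R}^*(m\boldsymbol{C}, m\bar{\boldsymbol{\lambda}})$ as $m \to \infty$, and show that the residual rates at both essentially vanish relative to the demand, so both objective values converge to the (possibly infinite) value $\sum_n U_n(\infty)$ at the same rate. First I would observe that the constraint set $\mathcal{D}_1$ with parameters $(m\boldsymbol{C}, m\bar{\boldsymbol{\lambda}})$ has a nice homogeneity-like structure: if $(\boldsymbol{Y}, \boldsymbol{R})$ is feasible for $(\boldsymbol{C}, \bar{\boldsymbol{\lambda}})$, then $(\boldsymbol{Y}, m\boldsymbol{R})$ is feasible for $(m\boldsymbol{C}, m\bar{\boldsymbol{\lambda}})$, because the load function $\rho_{(b,a)}$ in \eqref{equ:cab} is linear in the rate argument and the cache constraint \eqref{const:cache} is untouched. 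Hence if $r_n^{\mathrm{feas}}$ is any fixed feasible residual profile at scale $1$, then $\lambda_n = m(\bar{\lambda}_n - r_n^{\mathrm{feas}})$ grows linearly with $m$ at the optimum, so $F(\boldsymbol{R}^*(m\boldsymbol{C}, m\bar{\boldsymbol{\lambda}})) \to \sum_n U_n(\infty) = +\infty$ (using Assumption~\ref{assm:unbounded}); more precisely the admitted rates at the optimum go to infinity componentwise for the unbounded utility, and one must handle the case where this diverges carefully by working with ratios.

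The crux is to lower bound $F(\hat{\boldsymbol{R}}(m\boldsymbol{C}, m\bar{\boldsymbol{\lambda}}))$. Here I would use the characterization of the KKT point together with Lemma~\ref{lem:monotonedr}: at a regular KKT point, the stationarity conditions for \textsc{UtilityMax} relate the gradient of $F$ (i.e.\ the marginal utilities $-U_n'(\bar{\lambda}_n - r_n)$, noting $\partial F/\partial r_n = -U_n'(\lambda_n)$) to a nonnegative combination of the gradients of the active constraints $g_{ba}$ and $g_v$. The DR-submodularity of $g_{ba}$ (concavity along nonnegative directions, second-order condition $\partial^2 g_{ba}/\partial\cdot\partial\cdot \le 0$) implies a concavity-type inequality bounding $g_{ba}$ from below by its linearization, which in turn bounds how far a KKT point's objective can fall below the linearized optimum. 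Combining this with Assumption~\ref{assm:logreturn} — which says $\lambda U_n'(\lambda) \le \theta_n$, so marginal utility decays at least as fast as $1/\lambda$ — I would show that the ``utility gap'' between the KKT point and the optimum is controlled by a bounded quantity (the $\theta_n$'s and the Lagrange multipliers, which at a regular point are finite) times a logarithmic term, while $F(\boldsymbol{R}^*)$ itself grows. Dividing, the ratio tends to $1$.

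The main obstacle I anticipate is making the KKT-based lower bound on $F(\hat{\boldsymbol{R}})$ rigorous and scale-uniform: one needs that the relevant Lagrange multipliers and the geometry at the regular limit point do not degrade as $m$ grows, and that the DR-submodular lower-linearization argument survives the fact that $\mathcal{D}_1$ is not down-closed in $\boldsymbol{R}$ (the constraint \eqref{const:boxforlambda} is $r_n \le \bar{\lambda}_n$, which scales with $m$, so the box itself grows). I would handle this by passing again through the substitution $(\boldsymbol{Y}, \boldsymbol{R}) \mapsto (\boldsymbol{Y}, \boldsymbol{R}/m)$ to pull everything back to a fixed-size domain, where the KKT conditions become a perturbed family indexed by $m$ with the constraint right-hand sides $\sum \bar{\lambda}_{(i,p)} - C_{ba}$ unchanged; then argue that as $m\to\infty$ the effective constraints loosen (the per-unit load drops since the same $\boldsymbol{Y}$ serves proportionally more capacity), so the feasible region in the rescaled picture expands toward the full box, forcing $r_n/\bar{\lambda}_n \to 0$ at both the KKT point and the optimum. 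A secondary technical point is the $F(\boldsymbol{R}^*) = \infty$ case, which I would treat by showing $F(\hat{\boldsymbol{R}}(m\boldsymbol{C},m\bar{\boldsymbol{\lambda}})) \to \infty$ as well and that the difference $F(\boldsymbol{R}^*) - F(\hat{\boldsymbol{R}})$ stays bounded (or grows sublinearly relative to $F(\boldsymbol{R}^*)$), which is exactly what Assumption~\ref{assm:logreturn} buys us.
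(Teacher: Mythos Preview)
Your proposal identifies the right ingredients --- the scaling homogeneity of the feasible set, the use of KKT stationarity together with DR-submodularity of $g_{ba}$ to bound the gap $F(\boldsymbol{R}^*)-F(\hat{\boldsymbol{R}})$, and the role of Assumption~\ref{assm:logreturn} in controlling the multipliers --- but the way you propose to handle the $m$-dependence does not work, and the actual mechanism is different from what you sketch.

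The specific error is in your claim that under the substitution $(\boldsymbol{Y},\boldsymbol{R})\mapsto(\boldsymbol{Y},\boldsymbol{R}/m)$ ``the effective constraints loosen \ldots\ so the feasible region in the rescaled picture expands toward the full box.'' It does not: the link constraint at scale $m$ reads $m\,g_{ba}(\boldsymbol{Y},\boldsymbol{R}/m)\ge m(\sum\bar\lambda_{(i,p)}-C_{ba})$, which after cancelling $m$ is \emph{identical} to the constraint at scale $1$; the box and cache constraints are likewise unchanged. So $r_n/\bar\lambda_n$ need not tend to $0$ at either the optimum or the KKT point, and this line of argument collapses. The paper explicitly notes this: ``increasing the link capacities does not make the problem easier, since demand rates increase proportionally.''

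What the paper does instead is make the gap bound \emph{quantitative} and observe that it is \emph{scale-invariant}. First (Lemma~\ref{lem:diff}), concavity of $F$ plus the DR-submodular linearization inequality for $g_{ba}$ give
\[
F(\hat{\boldsymbol{R}})\ \ge\ F(\boldsymbol{R}^*)\ -\ \sum_{(b,a)\in\mathcal{E}}\hat\mu_{ba}\Big(\textstyle\sum_{(i,p):(a,b)\in p}\bar\lambda_{(i,p)}-C_{ba}\Big).
\]
Second (Lemma~\ref{lem:boundOnMu}), the KKT stationarity condition in $r_{(i,p)}$ combined with Assumption~\ref{assm:logreturn} yields the explicit bound $\hat\mu_{ba}\le \theta\,n_{ab}/C_{ba}$. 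Substituting, the gap is at most $\theta\sum_{(b,a)}n_{ab}(\sum\bar\lambda_{(i,p)}-C_{ba})/C_{ba}$, and this quantity is unchanged under $(\boldsymbol{C},\bar{\boldsymbol{\lambda}})\to(m\boldsymbol{C},m\bar{\boldsymbol{\lambda}})$ because both numerator and denominator scale by $m$. The rest is as you say: the scaling feasibility argument and Assumption~\ref{assm:unbounded} give $F(\boldsymbol{R}^*(m\boldsymbol{C},m\bar{\boldsymbol{\lambda}}))\to\infty$, and dividing by it sends the (constant) gap term to zero. Your ``times a logarithmic term'' is also off: a gap growing like $\log m$ would \emph{not} suffice when the utilities themselves are logarithmic, since then $F(\boldsymbol{R}^*)$ also grows only like $\log m$; the argument really needs the gap to be $O(1)$ in $m$, which is exactly what Lemma~\ref{lem:boundOnMu} delivers.
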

Hence, the value of the objective at a regular limit point of Alg.~1 approaches the optimal objective value,  when  link capacities and demand rates grow to infinity by the same factor $m$. Note that increasing the link capacities \emph{does not} make the problem easier, since   demand rates increase proportionally.

The proof of Theorem~\ref{thm:assymptotic} follows from a sequence of lemmas, which we now outline.

\begin{lem} \label{lem:kkt}
Let $\{ (\boldsymbol{Y}_k, \boldsymbol{R}_k) \}$, $k \in \mathcal{K}$, be any subsequence generated by Alg.~\ref{alg:findKKT} which converges to a regular point $(\hat{\boldsymbol{Y}}, \hat{\boldsymbol{R}})$. Then $(\hat{\boldsymbol{Y}}, \hat{\boldsymbol{R}})$ is a KKT point for Problem~\eqref{cachingproblem1}. 
\end{lem}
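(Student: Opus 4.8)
The plan is to obtain Lemma~\ref{lem:kkt} as a direct consequence of the global convergence theory developed by Conn et al.~\cite{conn1997globally} for the Lagrangian barrier method: I would show that \textsc{UtilityMax} and Algorithm~\ref{alg:findKKT} fit the hypotheses of their convergence theorem, and then quote its conclusion. The first step is a structural match. Problem~\eqref{cachingproblem1} is the optimization of $F$ subject to ``general'' inequality constraints --- the link constraints \eqref{const:link} and the cache constraints \eqref{const:cache} --- together with the ``simple bounds'' that make up the box $\mathcal{B}$ from \eqref{const:boxfory}--\eqref{const:boxforlambda}; and Algorithm~\ref{alg:findKKT}, with barrier function \eqref{equ:psi}, is exactly the Lagrangian-Barrier-with-Simple-Bounds scheme of~\cite{conn1997globally} specialized to this instance. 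In particular, step~\ref{line:inner} returns iterates obeying the approximate stationarity inequality \eqref{equ:appforinner} with $\omega_k\downarrow 0$, while the multiplier estimates $\boldsymbol{\mu}_k,\boldsymbol{\gamma}_k$ and the shifts $\boldsymbol{s}_k$ are updated as prescribed there. I would stress that this lemma uses no DR-submodularity whatsoever; that property enters the analysis only in Theorem~\ref{thm:assymptotic}.

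Second, I would verify the standing assumptions of~\cite{conn1997globally}. \emph{Smoothness:} $F$ is twice continuously differentiable because each $U_\requestindex$ is (Section~\ref{sec:model}), and the constraint functions are $C^2$ since $g_{ba}$ in \eqref{eq:gba} is a polynomial in $(\boldsymbol{Y},\boldsymbol{R})$ and $g_v(\boldsymbol{Y})=\sum_{i\in\itemcat}y_{vi}$ is linear. \emph{Compactness:} the bound set $\mathcal{B}=[0,1]^{|\mathcal{V}||\itemcat|}\times\prod_{\requestindex\in\requestset}[0,\bar\lambda_\requestindex]$ is nonempty and compact, so the entire sequence $\{(\boldsymbol{Y}_k,\boldsymbol{R}_k)\}$ stays in a compact set --- hence convergent subsequences exist, and the one indexed by $\mathcal{K}$ in the hypothesis is of this form --- and $F$, $g_{ba}$, $g_v$ are bounded on $\mathcal{B}$. \emph{Well-posedness of the subproblems:} the algorithm picks the shifts $\boldsymbol{s}_k$ so that the arguments of the logarithms in \eqref{equ:psi} stay strictly positive, and since each such argument drives $\Psi\to-\infty$ as it approaches $0$, each subproblem \eqref{prob:innersub} is bounded above with its maximizer in the relative interior of $\mathcal{B}$, so step~\ref{line:inner} is executable; I would point to the detailed description in Conn et al.~\cite{conn1997globally} and Appendix~\ref{append:LBSB} for the precise shift safeguards.

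Third, I would invoke the convergence theorem. By hypothesis $\{(\boldsymbol{Y}_k,\boldsymbol{R}_k)\}_{k\in\mathcal{K}}$ converges to a \emph{regular} point $(\hat{\boldsymbol{Y}},\hat{\boldsymbol{R}})$, i.e., the gradients of the inequality constraints active there are linearly independent (Definition~\ref{def:regular}), which is exactly the constraint qualification required in~\cite{conn1997globally}. Their theorem then yields that $(\hat{\boldsymbol{Y}},\hat{\boldsymbol{R}})$ is feasible for $\mathcal{D}_1$ and satisfies the KKT conditions of Problem~\eqref{cachingproblem1}, and, as a by-product, that $\boldsymbol{\mu}_k$ and $\boldsymbol{\gamma}_k$ restricted to $k\in\mathcal{K}$ converge to the (unique, by regularity) Lagrange multipliers of that KKT point --- which is the statement of the lemma, and is what subsequent lemmas in the proof of Theorem~\ref{thm:assymptotic} rely on. The step I expect to require genuine care is the last ingredient of the second paragraph: checking that the barrier subproblems are always well-defined under the shift-update rules and that the multiplier estimates remain positive and bounded. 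Positivity is built into the updates, boundedness of the iterates follows from compactness of $\mathcal{B}$, and boundedness of the limiting multipliers follows from regularity via the KKT stationarity equations at $(\hat{\boldsymbol{Y}},\hat{\boldsymbol{R}})$, so no new estimate is needed --- the work is purely in matching our setting to the hypotheses of~\cite{conn1997globally}.
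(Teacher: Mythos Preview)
Your overall plan is correct and is precisely the paper's approach: verify that \textsc{UtilityMax} and Algorithm~\ref{alg:findKKT} fit the framework of Conn et al.~\cite{conn1997globally} and then invoke their convergence result (their Theorem~4.4). Your checks of smoothness and compactness are fine and in fact more explicit than what the paper writes down.

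There is, however, one gap, and it is exactly the point to which the paper devotes its entire proof. You assert that regularity in the sense of Definition~\ref{def:regular} ``is exactly the constraint qualification required in~\cite{conn1997globally}.'' It is not stated the same way. Conn et al.\ treat the simple bounds separately from the general inequality constraints: their hypothesis in Theorem~4.4 is that the Jacobian of the \emph{active general constraints} (here, the active link/cache constraints \eqref{const:link}--\eqref{const:cache}), restricted to the columns indexed by the \emph{free} variables (those strictly between their bounds), has full row rank. Definition~\ref{def:regular}, by contrast, is the standard LICQ on \emph{all} active constraints, including the active box constraints. These two conditions are equivalent, but that requires an argument. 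The paper supplies it: partition the variables into those at a bound ($\mathcal{F}'$) and those strictly inside ($\mathcal{F}$), and write the full active-constraint Jacobian in block form
\[
J=\begin{bmatrix} J_{[\mathcal{A},\mathcal{F}]} & J_{[\mathcal{A},\mathcal{F}']}\\ \boldsymbol{0} & Q\end{bmatrix},
\]
where the last $|\mathcal{F}'|$ rows come from the active box constraints and $Q$ is diagonal with $\pm 1$ entries (hence invertible). From this block structure, $J$ has full row rank if and only if $J_{[\mathcal{A},\mathcal{F}]}$ does, which is Conn et al.'s hypothesis. You should include this short linear-algebra step rather than asserting the identification; it is the only nontrivial content in the lemma's proof.
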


\begin{proof}[Proof] Please see Appendix~\ref{append:proofOflemKKT}.  The lemma is proved by showing that the regularity assumption is equivalent to the assumption stated in Theorem 4.4 of Conn et. al. \cite{conn1997globally}.\end{proof}

\fullversion{The next key technical lemma characterizes the difference between the value of the objective at a KKT point and the global optimal value:

\begin{lem} \label{lem:diff} \small
Let $(\hat{\boldsymbol{Y}},\hat{\boldsymbol{R}})$ be a KKT point and $(\boldsymbol{Y}^*,\boldsymbol{R}^*)$ be the optimal point for Problem~\eqref{cachingproblem1}. Then $F(\hat{\boldsymbol{R}}) \geq F(\boldsymbol{R}^*) - \sum_{(b,a)\in \mathcal{E}}\hat{\mu}_{ba} \left(\sum_{(i,p) : (a,b) \in p} \bar{\lambda}_{(i,p)} - C_{ba}\right)$, where $\hat{\mu}_{ba}$ is the Lagrange multiplier corresponding to link $(b,a)$ in constraint \eqref{const:link}, for all $(b,a) \in \mathcal{E}$.
\end{lem}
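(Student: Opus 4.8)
The plan is to pair the KKT conditions at the point $(\hat{\boldsymbol{Y}},\hat{\boldsymbol{R}})$ with the concavity of $F$ and the monotone DR-submodularity of the $g_{ba}$ from Lemma~\ref{lem:monotonedr}. I would first write the KKT system of Problem~\eqref{cachingproblem1}: let $\hat{\mu}_{ba}\ge 0$ and $\hat{\gamma}_{v}\ge 0$ be the multipliers of the link constraints~\eqref{const:link} and cache constraints~\eqref{const:cache}, and $\hat{\alpha}_{vi},\hat{\beta}_{vi},\hat{\nu}_{\requestindex},\hat{\kappa}_{\requestindex}\ge 0$ those of the box constraints~\eqref{const:boxfory}--\eqref{const:boxforlambda}. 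Treating $F$ as a concave $C^2$ function on the joint space $(\boldsymbol{Y},\boldsymbol{R})$ that simply does not depend on $\boldsymbol{Y}$ (and noting each $g_{ba}$ from \eqref{eq:gba} is a polynomial, hence smooth), stationarity reads $\nabla F(\hat{\boldsymbol{Y}},\hat{\boldsymbol{R}}) = -\sum_{(b,a)\in\mathcal{E}}\hat{\mu}_{ba}\nabla g_{ba} + \sum_{v\in\mathcal{V}}\hat{\gamma}_{v}\nabla g_{v} + \boldsymbol{\zeta}$, where $\boldsymbol{\zeta}$ collects the signed box-multiplier terms and all gradients are evaluated at $(\hat{\boldsymbol{Y}},\hat{\boldsymbol{R}})$, and complementary slackness holds for every constraint.

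Next I would apply the first-order concavity inequality for $F$: $F(\boldsymbol{R}^*)-F(\hat{\boldsymbol{R}}) \le \langle\nabla F(\hat{\boldsymbol{Y}},\hat{\boldsymbol{R}}),\hat{\boldsymbol{\Delta}}\rangle$ with $\hat{\boldsymbol{\Delta}}\triangleq(\boldsymbol{Y}^*-\hat{\boldsymbol{Y}},\boldsymbol{R}^*-\hat{\boldsymbol{R}})$ (only the $\boldsymbol{R}$-block of $\nabla F$ is nonzero). Substituting stationarity splits the right-hand side into box, cache, and link groups. The box group $\langle\boldsymbol{\zeta},\hat{\boldsymbol{\Delta}}\rangle$ is $\le 0$ by the standard complementary-slackness argument: e.g.\ $\hat{\nu}_{\requestindex}>0$ forces $\hat{r}_{\requestindex}=0$ while $r^*_{\requestindex}\ge 0$, so that term is $\le 0$, and analogously for $\hat{\alpha},\hat{\beta},\hat{\kappa}$ using feasibility of $(\boldsymbol{Y}^*,\boldsymbol{R}^*)$ and $(\hat{\boldsymbol{Y}},\hat{\boldsymbol{R}})$ in the box. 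The cache group equals $\sum_{v}\hat{\gamma}_{v}(g_{v}(\boldsymbol{Y}^*)-g_{v}(\hat{\boldsymbol{Y}}))\le 0$ since $g_{v}$ is linear, $\hat{\gamma}_{v}>0$ forces $g_{v}(\hat{\boldsymbol{Y}})=c_{v}$, and $g_{v}(\boldsymbol{Y}^*)\le c_{v}$. This reduces the claim to showing $-\sum_{(b,a)\in\mathcal{E}}\hat{\mu}_{ba}\langle\nabla g_{ba}(\hat{\boldsymbol{Y}},\hat{\boldsymbol{R}}),\hat{\boldsymbol{\Delta}}\rangle \le \sum_{(b,a)\in\mathcal{E}}\hat{\mu}_{ba}\big(\sum_{(i,p):(a,b)\in p}\bar{\lambda}_{(i,p)}-C_{ba}\big)$.

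The key step is then a per-link lower bound on $\langle\nabla g_{ba}(\hat{\boldsymbol{Y}},\hat{\boldsymbol{R}}),\hat{\boldsymbol{\Delta}}\rangle$ for links with $\hat{\mu}_{ba}>0$. Writing $\hat{\boldsymbol{z}}=(\hat{\boldsymbol{Y}},\hat{\boldsymbol{R}})$, $\boldsymbol{z}^*=(\boldsymbol{Y}^*,\boldsymbol{R}^*)$, and $\boldsymbol{w}=\hat{\boldsymbol{z}}\wedge\boldsymbol{z}^*$, I would split $\boldsymbol{z}^*-\hat{\boldsymbol{z}}=(\boldsymbol{z}^*-\boldsymbol{w})+(\boldsymbol{w}-\hat{\boldsymbol{z}})$ into a coordinate-wise non-negative part and a coordinate-wise non-positive part. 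Monotonicity of $g_{ba}$ (Lemma~\ref{lem:monotonedr}) gives $\nabla g_{ba}(\hat{\boldsymbol{z}})\ge\boldsymbol{0}$, so $\langle\nabla g_{ba}(\hat{\boldsymbol{z}}),\boldsymbol{z}^*-\boldsymbol{w}\rangle\ge 0$; DR-submodularity makes $g_{ba}$ concave along the non-positive segment from $\hat{\boldsymbol{z}}$ to $\boldsymbol{w}$ (cf.\ Table~\ref{table:drprop}), so $\langle\nabla g_{ba}(\hat{\boldsymbol{z}}),\boldsymbol{w}-\hat{\boldsymbol{z}}\rangle\ge g_{ba}(\boldsymbol{w})-g_{ba}(\hat{\boldsymbol{z}})$, whence $\langle\nabla g_{ba}(\hat{\boldsymbol{z}}),\hat{\boldsymbol{\Delta}}\rangle\ge g_{ba}(\boldsymbol{w})-g_{ba}(\hat{\boldsymbol{z}})$. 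Since the box of~\eqref{const:boxfory}--\eqref{const:boxforlambda} is a product of intervals it is closed under coordinate-wise minima, so $\boldsymbol{w}$ lies in it; on the box every summand of~\eqref{eq:gba} equals $\bar{\lambda}_{(i,p)}(1-\prod_{v=p_1}^{a}(1-y_{vi}))+r_{(i,p)}\prod_{v=p_1}^{a}(1-y_{vi})\ge 0$, so $g_{ba}(\boldsymbol{w})\ge 0$; and complementary slackness for~\eqref{const:link} with $\hat{\mu}_{ba}>0$ gives $g_{ba}(\hat{\boldsymbol{z}})=\sum_{(i,p):(a,b)\in p}\bar{\lambda}_{(i,p)}-C_{ba}$. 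Hence $\langle\nabla g_{ba}(\hat{\boldsymbol{z}}),\hat{\boldsymbol{\Delta}}\rangle\ge -\big(\sum_{(i,p):(a,b)\in p}\bar{\lambda}_{(i,p)}-C_{ba}\big)$; multiplying by $\hat{\mu}_{ba}\ge 0$ (links with $\hat{\mu}_{ba}=0$ contribute nothing), summing over $\mathcal{E}$, and combining with the reduction of the previous paragraph yields exactly the stated inequality.

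I expect this last step to be the main obstacle. Ordinary Lagrangian duality would give the bound for \emph{concave} constraint functions, but the $g_{ba}$ are only DR-submodular, and the direction $\boldsymbol{z}^*-\hat{\boldsymbol{z}}$ is not sign-definite, so no first-order inequality applies to $\langle\nabla g_{ba},\boldsymbol{z}^*-\hat{\boldsymbol{z}}\rangle$ directly. Routing through the meet $\boldsymbol{w}=\hat{\boldsymbol{z}}\wedge\boldsymbol{z}^*$ is what repairs this: monotonicity absorbs the non-negative part of the increment and coordinate-wise concavity (the second-order characterization of DR-submodularity, Table~\ref{table:drprop}) absorbs the non-positive part, while $g_{ba}\ge 0$ on the box together with tightness of the active link constraint furnishes the numerical value. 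The remaining work is routine bookkeeping: sign tracking in the KKT system (the link constraints~\eqref{const:link} are of ``$\ge$'' type), verifying $g_{ba}\ge 0$ on the box, and noting the box is closed under coordinate-wise minima.
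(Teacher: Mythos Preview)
Your proposal is correct and follows the same overall arc as the paper's proof: start from concavity of $F$, substitute KKT stationarity, dispose of the box and cache groups via complementary slackness and linearity, and then bound the link group using DR-submodularity, monotonicity, and nonnegativity of $g_{ba}$ on the box together with tightness of the active link constraints.

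The one noteworthy difference is in how the per-link gradient bound is obtained. The paper invokes a lemma of Bian et al.\ (stated in the appendix as Lemma~\ref{lem:drlemproof}): for any differentiable DR-submodular $f$, $(\boldsymbol{b}-\boldsymbol{a})^T\nabla f(\boldsymbol{a})\ge f(\boldsymbol{a}\vee\boldsymbol{b})+f(\boldsymbol{a}\wedge\boldsymbol{b})-2f(\boldsymbol{a})$. It then uses monotonicity to replace $g_{ba}(\hat{\boldsymbol{z}}\vee\boldsymbol{z}^*)$ by $g_{ba}(\boldsymbol{z}^*)$ and feasibility of $(\boldsymbol{Y}^*,\boldsymbol{R}^*)$ in~\eqref{const:link} to bound $g_{ba}(\boldsymbol{z}^*)$ from below. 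Your route is more elementary: you split $\boldsymbol{z}^*-\hat{\boldsymbol{z}}$ into its nonnegative and nonpositive parts via the meet $\boldsymbol{w}=\hat{\boldsymbol{z}}\wedge\boldsymbol{z}^*$, handle the nonnegative part by $\nabla g_{ba}\ge\boldsymbol{0}$, and the nonpositive part by the one-dimensional concavity of $g_{ba}$ along sign-definite directions (which is exactly the second-order characterization of DR-submodularity). This yields the slightly weaker inequality $\langle\nabla g_{ba}(\hat{\boldsymbol{z}}),\boldsymbol{z}^*-\hat{\boldsymbol{z}}\rangle\ge g_{ba}(\boldsymbol{w})-g_{ba}(\hat{\boldsymbol{z}})$, which suffices once you use $g_{ba}(\boldsymbol{w})\ge 0$. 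In effect you are reproving, from first principles, the special case of Bian et al.'s lemma that is actually needed here; the paper's version retains the extra $g_{ba}(\hat{\boldsymbol{z}}\vee\boldsymbol{z}^*)-g_{ba}(\hat{\boldsymbol{z}})$ term but then discards it via monotonicity anyway. Both arguments land on the same numerical bound.
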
}{Our major contribution is to characterize the difference between the value of the objective at a KKT point and the global optimal value as shown in Lemma~\ref{lem:diff}. The key factor in proving Lemma~\ref{lem:diff} is the concavity of $F(\cdot)$ and the fact that $g_{ba}(\cdot)$ are monotone DR-submodular functions for all $(b,a) \in \mathcal{E}$.

\begin{lem} \label{lem:diff} \small
Let $(\hat{\boldsymbol{Y}},\hat{\boldsymbol{R}})$ be a KKT point and $(\boldsymbol{Y}^*,\boldsymbol{R}^*)$ be the optimal point for Problem~\eqref{cachingproblem1}. Then
\begin{equation}
F(\hat{\boldsymbol{R}}) \geq F(\boldsymbol{R}^*) - \sum_{(b,a)\in \mathcal{E}}\hat{\mu}_{ba} \left(\sum_{(i,p) : (a,b) \in p} \bar{\lambda}_{(i,p)} - C_{ba}\right).\label{eq:diff}
\end{equation}
\end{lem}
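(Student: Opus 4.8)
The plan is to combine the concavity of $F$ with the KKT stationarity condition at the KKT point, and to absorb the resulting gradient terms coming from the link constraints using the monotone DR-submodularity of the $g_{ba}$. Throughout, write $\hat z=(\hat{\boldsymbol Y},\hat{\boldsymbol R})$, $z^*=(\boldsymbol Y^*,\boldsymbol R^*)$, and let $\mathcal B$ be the box defined by \eqref{const:boxfory}--\eqref{const:boxforlambda}. Since $F$ is differentiable and concave, the first-order condition gives $F(\boldsymbol R^*)-F(\hat{\boldsymbol R})\le\langle\nabla_{\boldsymbol R}F(\hat{\boldsymbol R}),\boldsymbol R^*-\hat{\boldsymbol R}\rangle$, so it suffices to bound this inner product by $\sum_{(b,a)\in\mathcal E}\hat\mu_{ba}\big(\sum_{(i,p):(a,b)\in p}\bar\lambda_{(i,p)}-C_{ba}\big)$.

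First I would write out the KKT system for Problem~\eqref{cachingproblem1} at $\hat z$: besides the link multipliers $\hat\mu_{ba}\ge 0$ and cache multipliers $\hat\gamma_v\ge 0$, there are nonnegative multipliers $\eta^-_{vi},\eta^+_{vi}$ for \eqref{const:boxfory} and $\alpha_\requestindex,\beta_\requestindex$ for \eqref{const:boxforlambda}, all with complementary slackness. Taking the inner product of the stationarity condition with $z^*-\hat z$, and using that $F$ does not depend on $\boldsymbol Y$ and that $g_v(\boldsymbol Y)=\sum_i y_{vi}$ is linear, I get
$$\langle\nabla_{\boldsymbol R}F(\hat{\boldsymbol R}),\boldsymbol R^*-\hat{\boldsymbol R}\rangle+\sum_{(b,a)\in\mathcal E}\hat\mu_{ba}\langle\nabla g_{ba}(\hat z),z^*-\hat z\rangle-\sum_{v\in\mathcal V}\hat\gamma_v\big(g_v(\boldsymbol Y^*)-g_v(\hat{\boldsymbol Y})\big)+B=0,$$
where $B=\sum_{v,i}(\eta^-_{vi}-\eta^+_{vi})(y^*_{vi}-\hat y_{vi})+\sum_{\requestindex}(\alpha_\requestindex-\beta_\requestindex)(r^*_\requestindex-\hat r_\requestindex)$ collects the box terms. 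Using complementary slackness at $\hat z$ together with the feasibility of $z^*$ (which lies in $\mathcal B$ and satisfies $g_v(\boldsymbol Y^*)\le c_v$), a coordinate-by-coordinate check shows $B\ge 0$ (e.g.\ if $\eta^-_{vi}>0$ then $\hat y_{vi}=0\le y^*_{vi}$, so that term is $\ge 0$, and similarly for the others) and $-\sum_v\hat\gamma_v(g_v(\boldsymbol Y^*)-g_v(\hat{\boldsymbol Y}))\ge 0$. Hence $\langle\nabla_{\boldsymbol R}F(\hat{\boldsymbol R}),\boldsymbol R^*-\hat{\boldsymbol R}\rangle\le\sum_{(b,a)\in\mathcal E}\hat\mu_{ba}\langle\nabla g_{ba}(\hat z),\hat z-z^*\rangle$, and complementary slackness for the link constraints gives $\hat\mu_{ba}g_{ba}(\hat z)=\hat\mu_{ba}\big(\sum_{(i,p):(a,b)\in p}\bar\lambda_{(i,p)}-C_{ba}\big)$. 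So it remains to prove, for every $(b,a)\in\mathcal E$, the per-edge inequality $\langle\nabla g_{ba}(\hat z),\hat z-z^*\rangle\le g_{ba}(\hat z)$.

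This per-edge bound is where DR-submodularity is used, and is the crux of the argument. By Lemma~\ref{lem:monotonedr} each $g_{ba}$ is monotone DR-submodular, so by the second-order characterization (Table~\ref{table:drprop}) all entries of $\nabla^2 g_{ba}$ are nonpositive; since $g_{ba}$ is a polynomial in $(\boldsymbol Y,\boldsymbol R)$, this implies $d^\top\nabla^2 g_{ba}\,d\le 0$ for any $d\ge 0$, i.e.\ $g_{ba}$ is concave along every nonnegative (and every nonpositive) direction on $\mathcal B$. I would then decompose $\hat z-z^*=\big(\hat z-(z^*\wedge\hat z)\big)-\big((z^*\vee\hat z)-\hat z\big)$ as a difference of two nonnegative vectors (both $z^*\wedge\hat z$ and $z^*\vee\hat z$ lie in the box). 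Concavity of $g_{ba}$ along the nonnegative segment from $z^*\wedge\hat z$ to $\hat z$ yields $\langle\nabla g_{ba}(\hat z),\hat z-(z^*\wedge\hat z)\rangle\le g_{ba}(\hat z)-g_{ba}(z^*\wedge\hat z)$; monotonicity (i.e.\ $\nabla g_{ba}(\hat z)\ge 0$) yields $\langle\nabla g_{ba}(\hat z),(z^*\vee\hat z)-\hat z\rangle\ge 0$; and from the explicit form \eqref{eq:gba} one has $g_{ba}\ge 0$ on $\mathcal B$, since each summand rewrites as $r_{(i,p)}+(\bar\lambda_{(i,p)}-r_{(i,p)})\big(1-\prod_{v}(1-y_{vi})\big)\ge 0$ for $(\boldsymbol Y,\boldsymbol R)\in\mathcal B$. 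Combining these three facts gives $\langle\nabla g_{ba}(\hat z),\hat z-z^*\rangle\le g_{ba}(\hat z)-g_{ba}(z^*\wedge\hat z)\le g_{ba}(\hat z)$. Multiplying by $\hat\mu_{ba}\ge 0$, summing over edges, and chaining with the inequalities above establishes \eqref{eq:diff}.

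I expect the main obstacle to be exactly this per-edge inequality: because $g_{ba}$ is not concave, one cannot invoke a global first-order bound, and the direction $\hat z-z^*$ has mixed signs; the fix is the split through $z^*\wedge\hat z$ and $z^*\vee\hat z$, which isolates a nonnegative direction on which DR-submodularity behaves like concavity, and a nonnegative direction on which monotonicity suffices. The remaining ingredients — assembling the KKT system, signing the box and cache terms via complementary slackness and feasibility of $z^*$, and checking $g_{ba}\ge 0$ on $\mathcal B$ — are routine, with the only delicate point being the bookkeeping of signs for the box multipliers.
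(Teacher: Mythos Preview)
Your proof is correct and follows essentially the same route as the paper: concavity of $F$, KKT stationarity, complementary slackness and feasibility to sign the cache and box terms, and then a meet/join argument exploiting the monotone DR-submodularity of $g_{ba}$ together with $g_{ba}\ge 0$ on $\mathcal B$. The only minor difference is in how the DR-submodular step is packaged: the paper quotes the inequality $(b-a)^\top\nabla f(a)\ge f(a\vee b)+f(a\wedge b)-2f(a)$ from Bian et al., bounds $g_{ba}(\hat z\vee z^*)\ge g_{ba}(z^*)$ by monotonicity, and then also uses link-feasibility of $z^*$ to conclude, whereas you derive the cleaner per-edge bound $\langle\nabla g_{ba}(\hat z),\hat z-z^*\rangle\le g_{ba}(\hat z)$ directly from concavity along the nonnegative direction $\hat z-(z^*\wedge\hat z)$ plus monotonicity along $(z^*\vee\hat z)-\hat z$, so you never need to invoke $g_{ba}(z^*)\ge\sum\bar\lambda-C_{ba}$.
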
}

\fullversion{
\begin{proof}[Proof] Please see Appendix~\ref{append:ProofOfLemDiff}.  Key elements in proving Lemma~\ref{lem:diff} are the concavity of $F(\cdot)$ and the fact that $g_{ba}(\cdot)$ are monotone DR-submodular functions for all $(b,a) \in \mathcal{E}$.\end{proof}}{Lemma~\ref{lem:diff} is proved in Appendix~\ref{append:ProofOfLemDiff}, and implies that the value of the objective in the KKT point is bounded away from the optimal value by an additive term $\sum_{(b,a) \in \mathcal{E}}\hat{\mu}_{ba} (\sum_{(i,p) : (a,b) \in p} \bar{\lambda}_{(i,p)} - C_{ba})$.}
\begin{lem} \label{lem:boundOnMu}
Under Assumption~\ref{assm:logreturn},
%\begin{align*}
$$    \sum_{(b,a) \in \mathcal{E}}\!\!\hat{\mu}_{ba} (\!\!\!\!\!\!\!\!\!\sum_{(i,p) : (a,b) \in p}\!\!\!\!\!\!\!\! \bar{\lambda}_{(i,p)} - C_{ba})
   % \end{align*}
%    \begin{align*} \small
    \leq \theta \!\!\!\!\sum_{(b,a) \in \mathcal{E}}\!\! \frac{n_{ab}}{C_{ba}}~~ (\!\!\!\!\!\!\!\underset{(i,p) : (a,b) \in p}{\sum}\!\!\!\!\!\!\!\! \bar{\lambda}_{(i,p)} - C_{ba}),$$
%\end{align*}
where $n_{ab}$ is the number of paths passing through $(a,b)$, and $\theta \triangleq \max_{\requestindex \in \requestset} \theta_\requestindex$ is the maximum logarithmic diminishing return parameter among utilities. 
\end{lem}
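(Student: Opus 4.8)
The plan is to first establish the per-link bound $\hat\mu_{ba}\le\theta\, n_{ab}/C_{ba}$ for every $(b,a)\in\mathcal E$, and then obtain the stated inequality by weighting with $w_{ba}\triangleq\sum_{(i,p):(a,b)\in p}\bar\lambda_{(i,p)}-C_{ba}$ and summing over $\mathcal E$. The only tools needed are the KKT conditions at the KKT point $(\hat{\boldsymbol{Y}},\hat{\boldsymbol{R}})$ (with its multipliers $\hat{\boldsymbol{\mu}}$ for \eqref{const:link}, $\hat{\boldsymbol{\gamma}}$ for \eqref{const:cache}, and box multipliers), complementary slackness, and Assumption~\ref{assm:logreturn}; DR-submodularity is not used here. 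Concretely, I would write the stationarity condition of \textsc{UtilityMax} in the single coordinate $r_{\requestindex}$ for a request $\requestindex=(i,p)$. Since $F(\boldsymbol{R})=\sum_{\requestindex}U_{\requestindex}(\bar\lambda_{\requestindex}-r_{\requestindex})$ gives $\partial F/\partial r_{\requestindex}=-U_{\requestindex}'(\lambda_{\requestindex})$ with $\lambda_{\requestindex}=\bar\lambda_{\requestindex}-r_{\requestindex}$, since $\partial g_{ba}/\partial r_{(i,p)}=\prod_{v=p_1}^{a}(1-y_{vi})$ when $(a,b)\in p$ and $0$ otherwise, and since \eqref{const:cache} does not involve $r_{\requestindex}$, the stationarity condition reads
\begin{equation*}
\sum_{(b,a):\,(a,b)\in p}\hat\mu_{ba}\prod_{v=p_1}^{a}(1-\hat y_{vi})\;=\;U_{\requestindex}'(\lambda_{\requestindex})-\underline\xi_{\requestindex}+\overline\xi_{\requestindex},
\end{equation*}
with $\underline\xi_{\requestindex},\overline\xi_{\requestindex}\ge0$ the multipliers of $r_{\requestindex}\ge0$ and $r_{\requestindex}\le\bar\lambda_{\requestindex}$. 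If $\lambda_{\requestindex}>0$ then $r_{\requestindex}<\bar\lambda_{\requestindex}$, so complementary slackness forces $\overline\xi_{\requestindex}=0$; as each summand on the left is nonnegative ($\hat\mu_{ba}\ge0$ and $0\le\hat y_{vi}\le1$), every individual term is at most the right-hand side, so for any link $(b,a)$ with $(a,b)\in p$,
\begin{equation*}
\hat\mu_{ba}\prod_{v=p_1}^{a}(1-\hat y_{vi})\;\le\;U_{\requestindex}'(\lambda_{\requestindex})\;\le\;\theta_{\requestindex}/\lambda_{\requestindex}\;\le\;\theta/\lambda_{\requestindex},
\end{equation*}
the middle inequality being Assumption~\ref{assm:logreturn} ($\lambda U_{\requestindex}'(\lambda)\le\theta_{\requestindex}$), which is legitimate because $\lambda_{\requestindex}>0$.

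Next I would fix a link $(b,a)$ and bound $\hat\mu_{ba}$. If $\hat\mu_{ba}=0$ there is nothing to show. Otherwise complementary slackness makes \eqref{const:link} active, that is, $\sum_{(i,p):(a,b)\in p}\lambda_{(i,p)}\prod_{v=p_1}^{a}(1-\hat y_{vi})=C_{ba}$. This sum has $n_{ab}$ nonnegative summands, one per request routed through $(a,b)$, so by a pigeonhole argument there is a request $\requestindex^{*}=(i^{*},p^{*})$ through $(a,b)$ with $\lambda_{\requestindex^{*}}\prod_{v=p_1^{*}}^{a}(1-\hat y_{vi^{*}})\ge C_{ba}/n_{ab}>0$; in particular $\lambda_{\requestindex^{*}}>0$, so the last display applies with $\requestindex=\requestindex^{*}$ and the link $(b,a)$, giving
\begin{equation*}
\hat\mu_{ba}\;\le\;\frac{\theta}{\lambda_{\requestindex^{*}}\prod_{v=p_1^{*}}^{a}(1-\hat y_{vi^{*}})}\;\le\;\frac{\theta\,n_{ab}}{C_{ba}}.
\end{equation*}

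It remains to multiply this per-link bound by $w_{ba}$ and sum over $\mathcal E$. On any link with $w_{ba}<0$ we have $g_{ba}(\hat{\boldsymbol{Y}},\hat{\boldsymbol{R}})\ge0>w_{ba}$ (as $g_{ba}$ is nonnegative on the box), so \eqref{const:link} is strictly slack and $\hat\mu_{ba}=0$ by complementary slackness; such links contribute $0$ to the left-hand side, while on the remaining links $w_{ba}\ge0$ and hence $\hat\mu_{ba}w_{ba}\le\theta\frac{n_{ab}}{C_{ba}}w_{ba}$. Summing over all links then yields the claim. I expect the middle step to be the crux: turning the scalar bound $\hat\mu_{ba}\prod_{v=p_1}^{a}(1-\hat y_{vi})\le\theta/\lambda_{\requestindex}$ into a bound on $\hat\mu_{ba}$ alone requires exhibiting a single request that simultaneously (a)~traverses $(a,b)$, (b)~carries strictly positive rate so that Assumption~\ref{assm:logreturn} can be invoked, and (c)~accounts for at least a $1/n_{ab}$ fraction of the link load $C_{ba}$; all three are supplied by applying complementary slackness to \eqref{const:link} and pigeonholing the resulting tight load equation, after which only the sign and active-set bookkeeping in the summation remains.
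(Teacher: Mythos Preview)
Your argument is correct and follows essentially the same route as the paper: use stationarity in $r_{\requestindex}$ together with complementary slackness to get $\hat\mu_{ba}\prod_{v=p_1}^{a}(1-\hat y_{vi})\le U_{\requestindex}'(\lambda_{\requestindex})$, invoke Assumption~\ref{assm:logreturn}, and convert this into the per-link bound $\hat\mu_{ba}\le\theta\,n_{ab}/C_{ba}$ before summing. The only cosmetic difference is that the paper multiplies the per-request inequality by $\lambda_{\requestindex}$ and \emph{sums} over all active requests through the link (so that the right side becomes $\hat\mu_{ba}C_{ba}$ exactly), whereas you \emph{pigeonhole} to a single request carrying load at least $C_{ba}/n_{ab}$; both yield the identical bound.
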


\begin{proof}[Proof] Please see Appendix~\ref{append:ProofOfLemBoundOnMu}.\end{proof} 

\begin{proof}[Proof of Theorem 1]
By Lemma~\ref{lem:kkt}, we know that $\big[\hat{\boldsymbol{Y}}(m\boldsymbol{C},m\bar{\boldsymbol{\lambda}}), \hat{\boldsymbol{R}}(m\boldsymbol{C},m\bar{\boldsymbol{\lambda}})\big]$ is a KKT point for all $m \in \mathbb{R}_+$. Thus, Lemma~\ref{lem:diff} and Lemma~\ref{lem:boundOnMu} imply that, for all $m \in \mathbb{R}_+,$ $F( \hat{\boldsymbol{R}}(m\boldsymbol{C},m\bar{\boldsymbol{\lambda}}))$ is bounded from below, i.e., 
\begin{align}
    &F( \hat{\boldsymbol{R}}(m\boldsymbol{C},m\bar{\boldsymbol{\lambda}}) ) \geq \nonumber \\
    &F( \boldsymbol{R}^* (m\boldsymbol{C},m\bar{\boldsymbol{\lambda}}) ) -  \theta\!\! \sum_{(b,a) \in \mathcal{E}} \!\frac{n_{ab}}{C_{ba}}~~ (\!\!\!\!\!\sum_{(i,p) : (a,b) \in p}\!\!\!\!\!\! \bar{\lambda}_{(i,p)}\! - \!C_{ba}).\!\! \!\!\! \label{equ:greaterwithm}
\end{align} 

According to \eqref{equ:newutility}, $F( \boldsymbol{R}^* (m\boldsymbol{C},m\bar{\boldsymbol{\lambda}}) ) = U( \boldsymbol{\lambda}^* (m\boldsymbol{C},m\bar{\boldsymbol{\lambda}}) )$. The rate vector $m\boldsymbol{\lambda}^* (\boldsymbol{C},\bar{\boldsymbol{\lambda}})$ is feasible in Problem~\eqref{cachingproblem1} with link capacity vector $m\boldsymbol{C}$ and demand rate vector $m\bar{\boldsymbol{\lambda}}$, and we have $U( \boldsymbol{\lambda}^* (m\boldsymbol{C},m\bar{\boldsymbol{\lambda}}) ) \geq U( m\boldsymbol{\lambda}^* (\boldsymbol{C},\bar{\boldsymbol{\lambda}}) )$.
Combining this and the fact that there exists an unbounded utility function $U_n(.)$ which grows without bound as the input rate goes to infinity (Assumption~\ref{assm:unbounded}), we have $\lim_{m \to \infty} U( \boldsymbol{\lambda}^* (m\boldsymbol{C},m\bar{\boldsymbol{\lambda}}) ) = \infty$, or equivalently $\lim_{m \to \infty} F( \boldsymbol{R}^* (m\boldsymbol{C},m\bar{\boldsymbol{\lambda}}) ) = \infty$.
This implies that there exists a $m_0 > 0$ such that $F( \boldsymbol{R}^* (m\boldsymbol{C},m\bar{\boldsymbol{\lambda}}) ) > 0$, for all $m \geq m_0.$
We conclude the proof by dividing both sides of \eqref{equ:greaterwithm} by $F( \boldsymbol{R}^* (m\boldsymbol{C},m\bar{\boldsymbol{\lambda}}) )$ for $m \geq m_0$, and letting $m \to \infty$.
\end{proof}

\noindent \textbf{Convergence Rate.} 
\fullversion{We briefly discuss the convergence rate studied by Conn et. al. \cite{conn1997globally} as applied to \textsc{UtilityMax}; to do so, we need the following additional assumption:}{Conn et. al. \cite{conn1997globally} have also studied the convergence rate of Alg.~\ref{alg:findKKT} under additional assumptions. We can also use this to characterize convergence rate of the algorithm as applied to \textsc{UtilityMax}.}
\begin{assm} \label{assm:lipsh}
 The function $F(\boldsymbol{R})$, its gradient, and elements of its Hessian are Lipschitz continuous.
\end{assm} 
%\begin{assm} \label{assm:secondordersufficiency}
%Suppose $(\hat{\boldsymbol{Y}}, \hat{\boldsymbol{R}})$ is a KKT point for problem (\ref{cachingproblem1}) such that it satisfies the well-known second-order sufficient conditions (discussed in Section \ref{subsec:constrainedopt}).  
%\end{assm}
\begin{prop} \label{prop:k0}
Suppose Assumption~\ref{assm:lipsh} holds, and iterates $\{ (\boldsymbol{Y}_k, \boldsymbol{R}_k) \}$ generated by Alg.~\ref{alg:findKKT} have a single limit point $(\hat{\boldsymbol{Y}}, \hat{\boldsymbol{R}})$ which is regular, and satisfies the second-order sufficiency condition \fullversion{(see Section 4.3 of Bertsekas \cite{bertsekas1999nonlinear})}{ (discussed in Appendix~\ref{append:constrainedopt})}. 
Then with proper choice of parameters, $\{ (\boldsymbol{Y}_k, \boldsymbol{R}_k) \}$ converges to $(\hat{\boldsymbol{Y}}, \hat{\boldsymbol{R}})$ at least \emph{R-linearly} for sufficiently large $k$, i.e., there exists $r \in (0,1)$, $P >0$, and $k_0$ such that $\| (\boldsymbol{Y}_k, \boldsymbol{R}_k) -  (\hat{\boldsymbol{Y}}, \hat{\boldsymbol{R}}) \| \leq P r^k$, for all $k \geq k_0$.
\end{prop}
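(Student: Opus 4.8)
The plan is to obtain Proposition~\ref{prop:k0} as a direct consequence of the local convergence-rate analysis of the Lagrangian barrier method in Conn et al.~\cite{conn1997globally}, after verifying that \textsc{UtilityMax} meets the smoothness and non-degeneracy hypotheses that that analysis requires. Recall that Conn et al.\ establish (at least) R-linear convergence of the iterates to a limit point provided: (i) the problem functions---here $F$, the $g_{ba}$ of~\eqref{eq:gba}, and the $g_v$---are twice continuously differentiable with Lipschitz-continuous Hessians near the limit point; (ii) the limit point is isolated and regular, i.e.\ the gradients of the active constraints are linearly independent; (iii) strict complementarity holds; (iv) the second-order sufficiency condition holds at the limit point; and (v) the algorithm's forcing parameters (the initial accuracy $\omega_0$, tolerance $\delta_0$, the multiplier-update exponents, and the shift-decay rates) are chosen within the admissible ranges identified in~\cite{conn1997globally}. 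Our task is therefore to check (i)--(iv) for \textsc{UtilityMax} and invoke their theorem under hypothesis (v) (the ``proper choice of parameters'' clause).

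First I would dispatch the smoothness requirement (i). The function $g_v(\boldsymbol{Y}) = \sum_{i\in\itemcat} y_{vi}$ is affine, and each $g_{ba}(\boldsymbol{Y},\boldsymbol{R})$ in~\eqref{eq:gba} is a polynomial in $(\boldsymbol{Y},\boldsymbol{R})$, being a sum of products of affine factors $1-y_{vi}$ and $\bar{\lambda}_{(i,p)}-r_{(i,p)}$; polynomials are $C^{\infty}$ and have Lipschitz gradients and Hessians on the compact box $\mathcal{B}$ given by~\eqref{const:boxfory}--\eqref{const:boxforlambda}, and the box constraints themselves are affine. For the objective, Assumption~\ref{assm:lipsh} supplies exactly what is needed: $F$, $\nabla F$, and the entries of $\nabla^2 F$ are Lipschitz continuous. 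Hence (i) holds throughout $\mathcal{B}$, in particular near $(\hat{\boldsymbol{Y}},\hat{\boldsymbol{R}})$.

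Hypotheses (ii) and (iv) are assumed outright in the statement of the proposition: the iterates have a single limit point $(\hat{\boldsymbol{Y}},\hat{\boldsymbol{R}})$, which is regular (Definition~\ref{def:regular}) and satisfies the second-order sufficiency condition of Appendix~\ref{append:constrainedopt}. Regularity is precisely the condition under which Lemma~\ref{lem:kkt} identifies $(\hat{\boldsymbol{Y}},\hat{\boldsymbol{R}})$ as a KKT point---and, by the argument used there (matching the hypothesis of Theorem~4.4 of~\cite{conn1997globally}), guarantees that the multiplier estimates $\boldsymbol{\mu}_k,\boldsymbol{\gamma}_k$ converge to the multipliers at that KKT point, so the asymptotic regime analyzed in~\cite{conn1997globally} is entered. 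Strict complementarity (iii), needed so that the active set stabilizes, should be folded into the second-order sufficiency statement referenced in Appendix~\ref{append:constrainedopt} or added as a standing non-degeneracy assumption, which I would state explicitly where the full proof is written out. With (i)--(iv) in hand and the parameters chosen in the ranges prescribed by~\cite{conn1997globally}, the R-linear convergence theorem of Conn et al.\ applies and yields constants $r\in(0,1)$, $P>0$, and $k_0$ with $\|(\boldsymbol{Y}_k,\boldsymbol{R}_k)-(\hat{\boldsymbol{Y}},\hat{\boldsymbol{R}})\|\le P r^{k}$ for all $k\ge k_0$.

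The main obstacle is bookkeeping rather than conceptual: one must line up the intricate parameter-update schedule of Alg.~\ref{alg:findKKT} (the exponents governing how $\omega_k$, $\delta_k$, and the shifts $\boldsymbol{s}_k$ shrink, and how the multiplier estimates are refreshed inside versus outside the locally convergent regime) with the precise inequalities among those exponents that~\cite{conn1997globally} require for the fast rate, and to confirm that the inner trust-region solver~\cite{trustRegion} can meet the accuracy condition~\eqref{equ:appforinner} with $\omega_k\downarrow 0$ quickly enough. A secondary subtlety is ensuring strict complementarity, which guarantees the active set is eventually identified; once it is, the problem behaves locally like an equality-constrained one and the standard local analysis goes through. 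Notably, none of these steps uses DR-submodularity: it is only the (polynomial) smoothness of the constraints, the Lipschitz smoothness of $F$ from Assumption~\ref{assm:lipsh}, and the assumed non-degeneracy of the limit point that drive the rate.
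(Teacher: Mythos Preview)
Your overall strategy---reduce to the local convergence theorem of Conn et al.~\cite{conn1997globally} by verifying its hypotheses---is exactly what the paper does. However, you misidentify where the substantive work lies. Conn et al.'s hypotheses are \emph{not} phrased directly as ``regular $+$ second-order sufficiency $+$ strict complementarity''; their key structural hypothesis (Assumption~5 in~\cite{conn1997globally}) is that a specific bordered matrix
\[
\begin{bmatrix}
H_{[\mathcal{F}\,\mathcal{F}]} & J_{[\mathcal{A}\,\mathcal{F}]}^{T}\\
J_{[\mathcal{A}\,\mathcal{F}]} & \boldsymbol{0}
\end{bmatrix}
\]
is non-singular, where $\mathcal{F}$ is the index set of variables strictly between their bounds, $\mathcal{A}$ is the set of active link/cache constraints, $H$ is the Hessian of the partial Lagrangian (without the box multipliers), and $J$ is the active-constraint Jacobian. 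The paper's proof is mostly devoted to establishing this non-singularity from the proposition's hypotheses: one must (a) use the block structure of the full active Jacobian to see that second-order sufficiency on the full null space reduces to negative definiteness of $H_{[\mathcal{F}\,\mathcal{F}]}$ on $\ker J_{[\mathcal{A}\,\mathcal{F}]}$, (b) use regularity to get full row rank of $J_{[\mathcal{A}\,\mathcal{F}]}$, and (c) use strict complementarity to argue that Conn et al.'s index set coincides with $\mathcal{F}$. None of this is deep, but it is the actual content of the proof; the parameter-schedule ``bookkeeping'' you flag as the main obstacle is disposed of in one sentence by citing the admissible ranges in~\cite{conn1997globally}.

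A second, smaller point: strict complementarity is indeed part of the second-order sufficiency statement the paper adopts (see Appendix~\ref{append:constrainedopt}), so you are right that it is folded in rather than assumed separately---but you should use it explicitly, as the paper does, to identify $\mathcal{F}$ with the set where the partial-Lagrangian gradient vanishes.
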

\fullversion{We omit the proof for brevity; it follows by verifying that the assumptions in Proposition~\ref{prop:k0} imply all the assumptions for part (ii) of Theorem 5.3 and Corollary 5.7 of Conn et. al. \cite{conn1997globally}}{We prove this in Appendix~\ref{append:ProofOfPropk0}, by showing that the assumptions in Proposition~\ref{prop:k0} imply all the assumptions for part (ii) of Theorem 5.3 and Corollary 5.7 of Conn et. al. \cite{conn1997globally}}. 

\fullversion{}{We can combine the results of Thm.~\ref{thm:assymptotic} and Proposition~\ref{prop:k0} to characterize the quality of solutions obtained at the $k$-th iteration of Alg.~\ref{alg:findKKT}. 
\begin{corollary} \label{cor:combined}
If the assumptions of Thm.~\ref{thm:assymptotic} and Prop.~\ref{prop:k0} hold, there exist $r \in (0,1)$, $Q >0$, $\beta > 0$ and $k_0$ such that  $\frac{F(\boldsymbol{R}^{(k)})}{ F( \boldsymbol{R}^*) }$ $\geq 1 - \frac{\theta}{F( \boldsymbol{R}^*)}\sum_{(b,a) \in \mathcal{E}} n_{ab} \frac{\left(\sum_{(i,p) : (a,b) \in p} \bar{\lambda}_{(i,p)} - C_{ba}\right)}{C_{ba}} - \frac{Q r^k}{F( \boldsymbol{R}^*)}$,  for all $k \geq k_0$.
\end{corollary}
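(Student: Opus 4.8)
The plan is to obtain the corollary as a straightforward composition of results already in hand: the suboptimality bound for KKT points (Lemmas~\ref{lem:diff} and~\ref{lem:boundOnMu}), the $R$-linear convergence of the iterates to a single limit point (Proposition~\ref{prop:k0}), and the Lipschitz continuity of $F$ (Assumption~\ref{assm:lipsh}). First I would invoke the hypotheses of Theorem~\ref{thm:assymptotic} (Assumptions~\ref{assm:logreturn} and~\ref{assm:unbounded}) together with the single--regular--limit--point assumption of Proposition~\ref{prop:k0}: by Lemma~\ref{lem:kkt} the limit point $(\hat{\boldsymbol{Y}},\hat{\boldsymbol{R}})$ is a KKT point of~\eqref{cachingproblem1}, so Lemma~\ref{lem:diff} gives $F(\hat{\boldsymbol{R}}) \geq F(\boldsymbol{R}^*) - \sum_{(b,a)\in\mathcal{E}}\hat{\mu}_{ba}\bigl(\sum_{(i,p):(a,b)\in p}\bar{\lambda}_{(i,p)} - C_{ba}\bigr)$, and chaining with Lemma~\ref{lem:boundOnMu} (which uses Assumption~\ref{assm:logreturn}) yields $F(\hat{\boldsymbol{R}}) \geq F(\boldsymbol{R}^*) - \theta\sum_{(b,a)\in\mathcal{E}} n_{ab}\bigl(\sum_{(i,p):(a,b)\in p}\bar{\lambda}_{(i,p)} - C_{ba}\bigr)/C_{ba}$.

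Next I would push this bound from the limit point to the iterate $\boldsymbol{R}^{(k)}=\boldsymbol{R}_k$. Proposition~\ref{prop:k0} supplies $r\in(0,1)$, $P>0$ and an index $k_1$ with $\|(\boldsymbol{Y}_k,\boldsymbol{R}_k)-(\hat{\boldsymbol{Y}},\hat{\boldsymbol{R}})\|\leq P r^k$ for all $k\geq k_1$; since $\boldsymbol{R}$ is a subvector of $(\boldsymbol{Y},\boldsymbol{R})$, also $\|\boldsymbol{R}_k-\hat{\boldsymbol{R}}\|\leq P r^k$. By Assumption~\ref{assm:lipsh}, $F$ is Lipschitz, say with constant $L$, hence $F(\boldsymbol{R}_k)\geq F(\hat{\boldsymbol{R}})-L\|\boldsymbol{R}_k-\hat{\boldsymbol{R}}\|\geq F(\hat{\boldsymbol{R}})-LP r^k$. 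Setting $Q\triangleq LP$ and combining with the previous paragraph gives, for all $k\geq k_1$, $F(\boldsymbol{R}^{(k)})\geq F(\boldsymbol{R}^*)-\theta\sum_{(b,a)\in\mathcal{E}} n_{ab}\bigl(\sum_{(i,p):(a,b)\in p}\bar{\lambda}_{(i,p)}-C_{ba}\bigr)/C_{ba}-Q r^k$. Dividing by $F(\boldsymbol{R}^*)$ and taking $k_0=k_1$ produces the stated inequality; the auxiliary constant $\beta>0$ in the statement may be read as any positive lower bound on $F(\boldsymbol{R}^*)$, which exists in the regime of interest — e.g.\ under the proportional scaling of Theorem~\ref{thm:assymptotic}, $F(\boldsymbol{R}^*)$ is eventually positive by Assumption~\ref{assm:unbounded}.

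I do not anticipate a genuine obstacle here: the substantive content — that the additive optimality gap decomposes into a capacity/demand-dependent term $\theta\sum_{(b,a)\in\mathcal{E}} n_{ab}\bigl(\sum_{(i,p):(a,b)\in p}\bar{\lambda}_{(i,p)}-C_{ba}\bigr)/C_{ba}$ plus a term decaying geometrically in the iteration count — is inherited directly from Lemmas~\ref{lem:diff}--\ref{lem:boundOnMu} and Proposition~\ref{prop:k0}. The only points that need care are (i) verifying that the hypotheses of Proposition~\ref{prop:k0} (a single regular limit point satisfying the second-order sufficiency condition) are indeed assumed, so that the $R$-linear rate is available; (ii) unifying the ``sufficiently large $k$'' thresholds coming from the various cited results into one common $k_0$; and (iii) ensuring $F(\boldsymbol{R}^*)>0$ so that dividing by it preserves the direction of the inequality. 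These are bookkeeping, so the proof is essentially the chain of inequalities above.
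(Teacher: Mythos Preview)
Your proposal is correct and follows essentially the same route as the paper: use Lipschitz continuity of $F$ (Assumption~\ref{assm:lipsh}) together with the $R$-linear rate from Proposition~\ref{prop:k0} to get $F(\boldsymbol{R}_k)\geq F(\hat{\boldsymbol{R}})-Qr^k$ with $Q=LP$, then chain with Lemmas~\ref{lem:diff} and~\ref{lem:boundOnMu} and divide by $F(\boldsymbol{R}^*)$. You are in fact slightly more careful than the paper in flagging the need for $F(\boldsymbol{R}^*)>0$ before dividing, and in noting that the constant $\beta$ in the statement plays no role in the argument.
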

\begin{proof}
By Assumption~\ref{assm:lipsh}, $F(\cdot)$ is Lipschitz continuous. Hence, there exists an $L\in \mathbb{R}_+$ s.t.
\begin{align*}
    | F(\boldsymbol{R}_k) -  F(\hat{\boldsymbol{R}}) |&\overset{}{\leq} L \| (\boldsymbol{Y}_k, \boldsymbol{R}_k) -  (\hat{\boldsymbol{Y}}, \hat{\boldsymbol{R}}) \| \overset{\text{Prop.~ \ref{prop:k0}}}{\leq} L P r^k,
 \end{align*}
 where $L$ is the Lipschitz constant. 
 Let $Q\triangleq LP,$ we then have the following for all $k \geq k_0$
  \begin{align*}    
     F ( &\boldsymbol{R}_k ) \geq F( \hat{\boldsymbol{R}} ) - Q r^k \overset{\text{Lem.~\ref{lem:diff}},\text{Lem.~\ref{lem:boundOnMu}}}{\geq} F( \boldsymbol{R}^* ) \nonumber \\
     & - \theta \sum_{(b,a) \in \mathcal{E}} n_{ab} \frac{\left(\sum_{(i,p) : (a,b) \in p} \bar{\lambda}_{(i,p)} - C_{ba}\right)}{C_{ba}} - Q r^k. 
\end{align*}
Dividing both sides by $F( \boldsymbol{R}^*)$ concludes the proof. 
\end{proof}

Corollary~\ref{cor:combined} states that if the assumptions in Thm.~\ref{thm:assymptotic} and Proposition~\ref{prop:k0} hold and $k$ is large enough, the ratio of the objective value for $(\boldsymbol{Y}_k, \boldsymbol{R}_k)$ to the optimum is bounded away from 1 by two additive terms, i.e., a sum and $\frac{Q r^k}{F( \boldsymbol{R}^* )}$; the latter can be made arbitrary small (by increasing $k$), while the former also goes to zero when link capacities and demand rates are increased with the same factor (see Thm. \ref{thm:assymptotic}).
}

\subsection{Convex Relaxation of \textsc{UtilityMax}} \label{subsec:concaverelax}
An alternative approach for solving Problem~\eqref{cachingproblem1} is to come up with a convex relaxation of constraint set $\mathcal{D}_1$. This turns our problem into a convex optimization problem which can be solved efficiently. Similar to prior literature \fullversion{\cite{ioannidis2016addaptive, karimi2017stochastic}}{\cite{seeman2013adaptive,ioannidis2016addaptive, karimi2017stochastic}}, we construct concave upper and lower bounds for the non-convex and non-concave functions in constraints \eqref{const:link}, using the so-called Goemans and Williamson  inequality:
\begin{lem}[Goemans and Williamson \cite{goemans1994new}] \label{lem:concavebiconjugate}
For $\boldsymbol{Z} \in [0,1]^n$ define $A(\boldsymbol{Z}) \triangleq 1- \prod_{i=1}^n (1-z_i)$ and $B(\boldsymbol{Z}) \triangleq \min \{1, \sum_{i=1}^{n} z_i \}$. Then, $(1-1/e) B(\boldsymbol{Z}) \leq A(\boldsymbol{Z}) \leq B(\boldsymbol{Z})$.
\end{lem}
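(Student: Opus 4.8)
The plan is to prove the two inequalities separately, both by elementary estimates. For the \emph{upper bound} $A(\boldsymbol{Z}) \le B(\boldsymbol{Z})$ I would observe first that $A(\boldsymbol{Z}) = 1 - \prod_{i=1}^n (1-z_i) \le 1$, since every factor $1-z_i$ lies in $[0,1]$ and hence the product is nonnegative. Second, I would establish $A(\boldsymbol{Z}) \le \sum_{i=1}^n z_i$, which is equivalent to the Weierstrass product inequality $\prod_{i=1}^n (1-z_i) \ge 1 - \sum_{i=1}^n z_i$; this follows by a one-line induction on $n$, multiplying the inductive hypothesis $\prod_{i=1}^{n-1}(1-z_i) \ge 1 - \sum_{i=1}^{n-1} z_i$ by the nonnegative quantity $1-z_n$ and discarding the nonnegative cross term $z_n \sum_{i=1}^{n-1} z_i$. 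Taking the minimum of the two bounds gives $A(\boldsymbol{Z}) \le \min\{1,\sum_i z_i\} = B(\boldsymbol{Z})$.

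For the \emph{lower bound} $(1-1/e)B(\boldsymbol{Z}) \le A(\boldsymbol{Z})$, I would split on the value of $s \triangleq \sum_{i=1}^n z_i$. If $s \ge 1$, then $B(\boldsymbol{Z}) = 1$, and using $1-z_i \le e^{-z_i}$ we get $\prod_{i=1}^n (1-z_i) \le e^{-s} \le e^{-1}$, hence $A(\boldsymbol{Z}) \ge 1 - 1/e = (1-1/e)B(\boldsymbol{Z})$. If $s < 1$, then $B(\boldsymbol{Z}) = s$; here AM--GM (equivalently, Jensen applied to the concave map $t \mapsto \ln(1-t)$ on $[0,1]$) gives $\prod_{i=1}^n (1-z_i) \le \big(1 - s/n\big)^n$, so $A(\boldsymbol{Z}) \ge g(s) \triangleq 1 - (1-s/n)^n$. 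It then suffices to verify $g(s) \ge (1-1/e)s$ on $[0,1]$: the map $g$ is concave in $s$ (its second derivative equals $-\tfrac{n-1}{n}(1-s/n)^{n-2} \le 0$) and satisfies $g(0) = 0$, so for $s \in [0,1]$ we have $g(s) = g\big(s\cdot 1 + (1-s)\cdot 0\big) \ge s\,g(1) + (1-s)\,g(0) = s\,g(1)$; and $g(1) = 1 - (1-1/n)^n \ge 1 - 1/e$ because $\ln(1-1/n) \le -1/n$ yields $(1-1/n)^n \le e^{-1}$. Combining, $A(\boldsymbol{Z}) \ge s(1-1/e) = (1-1/e)B(\boldsymbol{Z})$.

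I expect the only mildly delicate point to be the case $s < 1$, which needs both the AM--GM step and the concavity-plus-endpoint argument for $g$; the case $s \ge 1$ and the entire upper bound are immediate from $1-t \le e^{-t}$ and the union-bound inequality, respectively. It is worth noting in passing that all three bounds are tight: $A(\boldsymbol{Z}) = B(\boldsymbol{Z})$ whenever some $z_i = 1$ (or $n = 1$), and the ratio $A(\boldsymbol{Z})/B(\boldsymbol{Z})$ tends to $1-1/e$ along $z_i = 1/n$ as $n \to \infty$, which explains why the constant $1-1/e$ cannot be improved and why it propagates into the guarantees of the convex-relaxation analysis that follows.
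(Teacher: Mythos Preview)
Your proof is correct and complete. The paper itself does not prove this lemma: it is stated as a known result attributed to Goemans and Williamson~\cite{goemans1994new} and then applied directly to derive the sandwich bounds on $g_{ba}$. So there is no ``paper's own proof'' to compare against; you have supplied a clean, self-contained argument where the paper simply cites the literature.
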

\fullversion{Using Lemma~\ref{lem:concavebiconjugate}, we obtain %$ (1-1/e) \min\{ 1, \frac{r_{(i,p)}}{\bar{\lambda}_{(i,p)}} + \sum_{k=1}^{a} y_{p_k i}\}$
%$\leq$ $1- (1-\frac{r_{(i,p)}}{\bar{\lambda}_{(i,p)}}) \prod_{v=p_1}^{a} (1-y_{vi})$ $\leq$ $\min\{ 1, \frac{r_{(i,p)}}{\bar{\lambda}_{(i,p)}} + \sum_{k=1}^{a} y_{p_k i}\}$.
%By multiplying all sides by $\bar{\lambda}_{(i,p)}$ and summing over all $(i,p) : (a,b) \in p$ we have
\begin{align} \label{eq:sandwich}
  &(1-1/e)\tilde{g}_{ba}(\boldsymbol{Y}, \boldsymbol{R}) 
  \leq  g_{ba}(\boldsymbol{Y}, \boldsymbol{R}) 
  \leq  \tilde{g}_{ba}(\boldsymbol{Y}, \boldsymbol{R}),
\end{align}
where $$\tilde{g}_{ba}(\boldsymbol{Y}, \boldsymbol{R}) \triangleq \sum_{(i,p)\in \requestset: (a,b) \in p} \bar{\lambda}_{(i,p)} \min\bigg\{ 1, \frac{r_{(i,p)}}{\bar{\lambda}_{(i,p)}} + \sum_{k=1}^{a} y_{p_k i}\bigg\}$$ are concave functions for all $(b,a) \in \mathcal{E}$.}{Applying Lemma~\ref{lem:concavebiconjugate} to the functions $g_{ba}(\cdot)$ yields the following corollary.
\begin{corollary}
\label{col:sandwich}
Functions $g_{ba}(\boldsymbol{Y}, \boldsymbol{R})$, $(b,a) \in \mathcal{E}$ , satisfy:
\begin{align*}
  &(1-1/e)\tilde{g}_{ba}(\boldsymbol{Y}, \boldsymbol{R}) 
  \leq  g_{ba}(\boldsymbol{Y}, \boldsymbol{R}) 
  \leq  \tilde{g}_{ba}(\boldsymbol{Y}, \boldsymbol{R})
\end{align*}
where
\begin{align}\label{eq:gtilde}
\tilde{g}_{ba}(\boldsymbol{Y}, \boldsymbol{R}) \triangleq \!\!\! \!\!\!\sum_{(i,p)\in \requestset: (a,b) \in p} \!\!\!\!\!\! \bar{\lambda}_{(i,p)} \min\{ 1, \frac{r_{(i,p)}}{\bar{\lambda}_{(i,p)}} + \sum_{k=1}^{a} y_{p_k i}\},
\end{align}
are concave functions, for all $(b,a)\in \mathcal{E}$.
\end{corollary}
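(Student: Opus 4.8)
The plan is to establish both the sandwich inequality and the concavity of $\tilde g_{ba}$ \emph{term by term} over the requests $(i,p)$ with $(a,b)\in p$, and then sum. First I would rewrite each summand of $g_{ba}$ in \eqref{eq:gba} by factoring out $\bar\lambda_{(i,p)}$:
\begin{equation*}
\bar\lambda_{(i,p)} - \big(\bar\lambda_{(i,p)}-r_{(i,p)}\big)\prod_{v=p_1}^{a}(1-y_{vi}) = \bar\lambda_{(i,p)}\Big[1 - \big(1-\tfrac{r_{(i,p)}}{\bar\lambda_{(i,p)}}\big)\prod_{v=p_1}^{a}(1-y_{vi})\Big].
\end{equation*}
I would then introduce an auxiliary vector $\boldsymbol{Z}=(z_0,z_1,\dots)$ whose first coordinate is $z_0\triangleq r_{(i,p)}/\bar\lambda_{(i,p)}$ and whose remaining coordinates are $z_k\triangleq y_{p_k i}$, one for each node $p_k$ on the portion of $p$ running from $p_1$ up to $a$. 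Under this identification the bracketed expression is exactly $A(\boldsymbol{Z})=1-\prod_k(1-z_k)$ in the notation of Lemma~\ref{lem:concavebiconjugate}, while $B(\boldsymbol{Z})=\min\{1,\sum_k z_k\}=\min\{1,\tfrac{r_{(i,p)}}{\bar\lambda_{(i,p)}}+\sum_{k=1}^{a}y_{p_k i}\}$. It is essential that every coordinate of $\boldsymbol{Z}$ lies in $[0,1]$: $z_0\in[0,1]$ follows from constraint \eqref{const:boxforlambda} and $z_k\in[0,1]$ from \eqref{const:boxfory}, so Lemma~\ref{lem:concavebiconjugate} is legitimately applicable.

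Next I would invoke Lemma~\ref{lem:concavebiconjugate} to obtain $(1-1/e)B(\boldsymbol{Z})\le A(\boldsymbol{Z})\le B(\boldsymbol{Z})$, multiply through by the nonnegative constant $\bar\lambda_{(i,p)}$, and sum over all $(i,p)$ with $(a,b)\in p$. By \eqref{eq:gba} and \eqref{eq:gtilde}, the left-, middle-, and right-hand sums collapse to $(1-1/e)\tilde g_{ba}(\boldsymbol{Y},\boldsymbol{R})$, $g_{ba}(\boldsymbol{Y},\boldsymbol{R})$, and $\tilde g_{ba}(\boldsymbol{Y},\boldsymbol{R})$ respectively, which is the claimed bound. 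For concavity of $\tilde g_{ba}$, I would note that $\tfrac{r_{(i,p)}}{\bar\lambda_{(i,p)}}+\sum_{k=1}^{a}y_{p_k i}$ is affine in $(\boldsymbol{Y},\boldsymbol{R})$, so $\min\{1,\,\cdot\,\}$ of it is a pointwise minimum of two affine functions and hence concave; multiplying by $\bar\lambda_{(i,p)}\ge 0$ and summing preserves concavity.

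I do not anticipate a genuine difficulty here; the statement is essentially a reduction to Lemma~\ref{lem:concavebiconjugate} combined with basic convexity facts. The one point that requires a little care is the change of variables — recognizing that the extra factor $\big(1-r_{(i,p)}/\bar\lambda_{(i,p)}\big)$ can be absorbed as an additional coordinate $z_0$ of the Goemans--Williamson product so that each summand matches the $A$-function form, and checking that $z_0$ together with the $y$-coordinates all lie in $[0,1]$ under the box constraints \eqref{const:boxfory}--\eqref{const:boxforlambda}. Once that correspondence is set up, the remaining steps are immediate.
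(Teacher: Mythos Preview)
Your proposal is correct and follows essentially the same route as the paper: apply Lemma~\ref{lem:concavebiconjugate} term by term with the auxiliary coordinate $z_0=r_{(i,p)}/\bar\lambda_{(i,p)}$, multiply by $\bar\lambda_{(i,p)}\ge 0$, and sum over $(i,p)$ with $(a,b)\in p$. Your explicit check that $z_0\in[0,1]$ and your concavity argument (pointwise minimum of affine functions) are welcome additions that the paper's proof leaves implicit.
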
 
The proof of Corollary~\ref{col:sandwich} is presented in Appendix~\ref{append:proofcolsandwich}.} We use this to formulate the following convex problem:

\vspace{2mm}
{\hspace*{\stretch{1}} \textsc{ConvexUtilityMax} \hspace*{\stretch{1}} }
   \begin{subequations}\label{cachingproblem2}
    \begin{align}
    \textrm{maximize} \quad &F(\boldsymbol{R}) \\
    \textrm{subject to} \quad & (\boldsymbol{Y},\boldsymbol{R}) \in \mathcal{D}_2,
\end{align}
\end{subequations}
where $\mathcal{D}_2\subseteq \mathbb{R}^{|\mathcal{V}| \times |\itemcat|} \times \mathbb{R}^{|\requestset|}$ is the set of  $(\boldsymbol{Y},\boldsymbol{R}) $ satisfying:
\begin{subequations}
\begin{align*}
    &%\sum_{(i,p): (a,b) \in p} \bar{\lambda}_{(i,p)} B'(r_{(i,p)}, \boldsymbol{Y}_{(i,p)}^{a}) 
    \tilde{g}_{ba}(\boldsymbol{Y},\boldsymbol{R}))\geq \frac{\sum_{(i,p) : (b,a) \in p} \bar{\lambda}_{(i,p)} - C_{ba}}{1-1/e},  &&\forall (b,a) \in \mathcal{E} \\
    & g_v(\boldsymbol{Y}) \leq c_v, & &\forall v \in \mathcal{V}\\
    & 0 \leq y_{vi} \leq 1,& &\forall v \in \mathcal{V}, \forall i \in \itemcat \\
    & 0 \leq r_{\requestindex} \leq \bar{\lambda}_{\requestindex}, &&\forall \requestindex \in \requestset. 
 %   &y_{vi} = 1 &\forall v \in \mathcal{S}_i, i \in \itemcat.
\end{align*} 
\end{subequations}
Although $\tilde{g}_{ba}(\cdot)$, for all $(b,a) \in \mathcal{E}$, are non-differentiable, the optimal solution can be found using sub-gradient methods as $\mathcal{D}_2$ is convex. The following theorem provides a bound on the optimal value of Problem~\eqref{cachingproblem2} with respect to Problem \eqref{cachingproblem1}.

\begin{thm} \label{thm:approximationBound}
Let  $(\boldsymbol{Y}^{**}_{\boldsymbol{C}}, \boldsymbol{R}^{**}_{\boldsymbol{C}})$ be the optimal solution of Problem~\eqref{cachingproblem2} with link capacity vector $\boldsymbol{C}$. Also let  $(\boldsymbol{Y}^{*}_{\boldsymbol{C}}, \boldsymbol{R}^{*}_{\boldsymbol{C}})$ and  $(\boldsymbol{Y}^{*}_{\boldsymbol{C}'}, \boldsymbol{R}^{*}_{\boldsymbol{C}'})$ be the optimal solutions of two instances of Problem~\eqref{cachingproblem1} with  link capacity vectors $\boldsymbol{C}$ and $\boldsymbol{C}'$, respectively, where for all $(b,a) \in \mathcal{E}$
\begin{equation}\small
    C_{ba}' = C_{ba} - \frac{1}{e-1} [\textstyle\sum_{(i,p) : (a,b) \in p} \bar{\lambda}_{(i,p)} - C_{ba}].\label{eq:cba_prime}
\end{equation}
Then, $(\boldsymbol{Y}^{**}_{\boldsymbol{C}}, \boldsymbol{R}^{**}_{\boldsymbol{C}})$ is a feasible solution to Problem~\eqref{cachingproblem1} with link capacity vector $\boldsymbol{C}$, and
$F(\boldsymbol{R}^{*}_{\boldsymbol{C}'}) \leq  F(\boldsymbol{R}^{**}_{\boldsymbol{C}})  \leq F(\boldsymbol{R}^{*}_{\boldsymbol{C}})$.
\end{thm}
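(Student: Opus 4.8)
The plan is to chain together a few feasibility observations built on the two-sided Goemans--Williamson bound of Corollary~\ref{col:sandwich}, exploiting the fact that the tightened capacity $\boldsymbol{C}'$ in \eqref{eq:cba_prime} is tuned so the relaxation slack cancels exactly. Throughout, write $\Lambda_{ba}\triangleq\sum_{(i,p):(a,b)\in p}\bar{\lambda}_{(i,p)}$, and recall that for every $(b,a)\in\mathcal{E}$ and every point we have $(1-1/e)\tilde{g}_{ba}\le g_{ba}\le\tilde{g}_{ba}$, with $g_{ba},\tilde{g}_{ba}$ depending only on $(\boldsymbol{Y},\boldsymbol{R},\bar{\boldsymbol{\lambda}})$ and not on capacities. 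Note also that $\mathcal{D}_1$ and $\mathcal{D}_2$ share the cache constraint $g_v(\boldsymbol{Y})\le c_v$ and the box constraints \eqref{const:boxfory}, \eqref{const:boxforlambda}, differing only in the link constraint.

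First I would establish that $(\boldsymbol{Y}^{**}_{\boldsymbol{C}},\boldsymbol{R}^{**}_{\boldsymbol{C}})$ is feasible for Problem~\eqref{cachingproblem1} with capacity $\boldsymbol{C}$. From membership in $\mathcal{D}_2$ we have $\tilde{g}_{ba}(\boldsymbol{Y}^{**}_{\boldsymbol{C}},\boldsymbol{R}^{**}_{\boldsymbol{C}})\ge(\Lambda_{ba}-C_{ba})/(1-1/e)$; multiplying by the positive constant $1-1/e$ (valid regardless of the sign of $\Lambda_{ba}-C_{ba}$) and then applying the inequality $(1-1/e)\tilde{g}_{ba}\le g_{ba}$ yields $g_{ba}(\boldsymbol{Y}^{**}_{\boldsymbol{C}},\boldsymbol{R}^{**}_{\boldsymbol{C}})\ge\Lambda_{ba}-C_{ba}$, which is exactly constraint \eqref{const:link}; the remaining constraints transfer verbatim, so $(\boldsymbol{Y}^{**}_{\boldsymbol{C}},\boldsymbol{R}^{**}_{\boldsymbol{C}})\in\mathcal{D}_1$. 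The upper bound $F(\boldsymbol{R}^{**}_{\boldsymbol{C}})\le F(\boldsymbol{R}^{*}_{\boldsymbol{C}})$ is then immediate, since $(\boldsymbol{Y}^{*}_{\boldsymbol{C}},\boldsymbol{R}^{*}_{\boldsymbol{C}})$ maximizes $F$ over $\mathcal{D}_1$.

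For the lower bound I would prove the reverse-type inclusion $(\boldsymbol{Y}^{*}_{\boldsymbol{C}'},\boldsymbol{R}^{*}_{\boldsymbol{C}'})\in\mathcal{D}_2$ (for capacity $\boldsymbol{C}$). By feasibility of that point in \eqref{cachingproblem1} with capacity $\boldsymbol{C}'$ and the inequality $g_{ba}\le\tilde{g}_{ba}$, we get $\tilde{g}_{ba}(\boldsymbol{Y}^{*}_{\boldsymbol{C}'},\boldsymbol{R}^{*}_{\boldsymbol{C}'})\ge g_{ba}(\cdot)\ge\Lambda_{ba}-C_{ba}'$. Substituting \eqref{eq:cba_prime}, a one-line computation gives $\Lambda_{ba}-C_{ba}'=(\Lambda_{ba}-C_{ba})\tfrac{e}{e-1}=(\Lambda_{ba}-C_{ba})/(1-1/e)$, which is precisely the $\mathcal{D}_2$ link threshold; with the other constraints matching, $(\boldsymbol{Y}^{*}_{\boldsymbol{C}'},\boldsymbol{R}^{*}_{\boldsymbol{C}'})\in\mathcal{D}_2$. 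Optimality of $(\boldsymbol{Y}^{**}_{\boldsymbol{C}},\boldsymbol{R}^{**}_{\boldsymbol{C}})$ over $\mathcal{D}_2$ then gives $F(\boldsymbol{R}^{*}_{\boldsymbol{C}'})\le F(\boldsymbol{R}^{**}_{\boldsymbol{C}})$, completing the chain $F(\boldsymbol{R}^{*}_{\boldsymbol{C}'})\le F(\boldsymbol{R}^{**}_{\boldsymbol{C}})\le F(\boldsymbol{R}^{*}_{\boldsymbol{C}})$.

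There is no deep obstacle: the result is essentially bookkeeping once one recognizes the identity underlying \eqref{eq:cba_prime}, namely that subtracting $\tfrac{1}{e-1}(\Lambda_{ba}-C_{ba})$ from $C_{ba}$ makes the residual demand $\Lambda_{ba}-C_{ba}'$ coincide with the inflated threshold $(\Lambda_{ba}-C_{ba})/(1-1/e)$, so that feasibility in \eqref{cachingproblem1} with $\boldsymbol{C}'$ is equivalent, through one side of the sandwich, to feasibility in \eqref{cachingproblem2} with $\boldsymbol{C}$. The only points needing minor care are (i) phrasing the sign-free multiplication in the feasibility step so the uncongested case $\Lambda_{ba}<C_{ba}$ (where both thresholds are negative and the constraints are slack since $g_{ba},\tilde{g}_{ba}\ge 0$) is handled without a case split, and (ii) noting that the statement presupposes the relevant optima exist, which holds whenever \eqref{cachingproblem1} with $\boldsymbol{C}'$ is feasible, as the Step-3 inclusion then exhibits a feasible point of \eqref{cachingproblem2}.
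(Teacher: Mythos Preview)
Your proof is correct and follows essentially the same approach as the paper: both use the Goemans--Williamson sandwich $(1-1/e)\tilde{g}_{ba}\le g_{ba}\le\tilde{g}_{ba}$ together with the identity $\Lambda_{ba}-C_{ba}'=(\Lambda_{ba}-C_{ba})/(1-1/e)$ to obtain the nesting $\mathcal{D}_3\subseteq\mathcal{D}_2\subseteq\mathcal{D}_1$ (with $\mathcal{D}_3$ the feasible set of Problem~\eqref{cachingproblem1} at capacity $\boldsymbol{C}'$), from which the chain of optimal values follows. The paper phrases this as a single line of set inclusions, whereas you verify feasibility of the two specific optimizers, but the content is identical.
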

\begin{proof}
Let $\mathcal{D}_3\subseteq  \mathbb{R}^{|\mathcal{V}||\itemcat|} \times \mathbb{R}^{|\requestset|}$ be the set of   $(\boldsymbol{Y},\boldsymbol{R}) $ satisfying:
\begin{subequations}\label{equ:Dprime}
\begin{align*}
    &g_{ba}(\boldsymbol{Y}, \boldsymbol{R}) \geq \frac{\sum_{(i,p) : (a,b) \in p} \bar{\lambda}_{(i,p)} - C_{ba}}{1-1/e} & \forall (b,a) \in \mathcal{E}\\
    &g_v(\boldsymbol{Y}) \leq c_v  & \forall v \in \mathcal{V} \\
    &0 \leq y_{vi} \leq 1 & \forall v \in \mathcal{V}, \forall i \in \itemcat\\
    &0 \leq r_{\requestindex} \leq \bar{\lambda}_{\requestindex} & \forall \requestindex \in \requestset 
\end{align*}
\end{subequations}
Observe that $\mathcal{D}_3$ is the constraint set for Problem~\eqref{cachingproblem1} with the link capacity vector $\boldsymbol{C}'$. By \fullversion{\eqref{eq:sandwich}}{Corollary~\ref{col:sandwich}}, we have $\mathcal{D}_3 \subseteq \mathcal{D}_2 \subseteq \mathcal{D}_1$. By definition, $F(\boldsymbol{R}^{*}_{\boldsymbol{C}}), F(\boldsymbol{R}^{**}_{\boldsymbol{C}}),$ and $F(\boldsymbol{R}^{*}_{\boldsymbol{C}'})$ are maximum values of $F(\boldsymbol{R})$ subject to $\mathcal{D}_1, \mathcal{D}_2,$ and $\mathcal{D}_3,$ respectively. As a result, we have $F(\boldsymbol{R}^{*}_{\boldsymbol{C}'}) \leq  F(\boldsymbol{R}^{**}_{\boldsymbol{C}})  \leq F(\boldsymbol{R}^{*}_{\boldsymbol{C}}).\qedhere$
\end{proof}

\fullversion{Thm.~\ref{thm:approximationBound} states that the solution to the convex problem is no worse than the optimum of an instance of the original problem, with link capacities $C_{ba}' = C_{ba} - \frac{1}{e-1} [\sum_{(i,p) : (a,b) \in p} \bar{\lambda}_{(i,p)} - C_{ba}]$. Note that $C_{ba}'$ can be negative. In that case, Problem \eqref{cachingproblem1} with negative link capacities has no feasible solutions, and the solution of Problem~\eqref{cachingproblem2} has no lower bound.}{Thm.~\ref{thm:approximationBound} implies that instead of Problem~\eqref{cachingproblem1}, we can solve Problem~\eqref{cachingproblem2}, which is a convex program with tighter constraint set. Since Problem~\eqref{cachingproblem2} has more restrictive constraints, its solution is naturally a feasible solution to \eqref{cachingproblem1} and is upper bounded by the optimum. On the other hand, Thm.~\ref{thm:approximationBound} states that this solution is no worse than the optimum of an instance of the original problem, with link capacities $C_{ba}' = C_{ba} - \frac{1}{e-1} [\sum_{(i,p) : (a,b) \in p} \bar{\lambda}_{(i,p)} - C_{ba}]$. Note that $C_{ba}'$ can be negative. In that case, Problem \eqref{cachingproblem1} with negative link capacities has no feasible solutions, and the solution of Problem~\eqref{cachingproblem2} has no lower bound.} 
\fullversion{}{The following corollary is provided in terms of rate of requests $\boldsymbol{\lambda}$ for further clarification. 
\begin{corollary}
If $\sum_{(i,p) : (a,b) \in p} \bar{\lambda}_{(i,p)} \leq \Delta C_{ba}$ for all $(b,a) \in \mathcal{E}$ and some $\Delta \in [1,e]$, then
\begin{equation*}
 U\left(\frac{e- \Delta}{e-1}\boldsymbol{\lambda}^{*}\right) \leq
 U(\boldsymbol{\lambda}^{**})\leq
    U(\boldsymbol{\lambda}^{*}),
\end{equation*}
where $\boldsymbol{\lambda}^{*} = \bar{\boldsymbol{\lambda}} -\boldsymbol{R}^* $ are the optimal rates for Problem~\eqref{cachingproblem1}, and $\boldsymbol{\lambda}^{**}=\bar{\boldsymbol{\lambda}} -\boldsymbol{R}^{**} $ are the optimal rates for Problem~\eqref{cachingproblem2}.
\end{corollary}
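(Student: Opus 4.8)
The plan is to read this corollary off Theorem~\ref{thm:approximationBound} after the change of variables \eqref{equ:newutility}, by exhibiting one explicit feasible point for the perturbed instance. Recall $F(\boldsymbol{R})=U(\bar{\boldsymbol{\lambda}}-\boldsymbol{R})$, so $F(\boldsymbol{R}^{**}_{\boldsymbol{C}})=U(\boldsymbol{\lambda}^{**})$ and $F(\boldsymbol{R}^{*}_{\boldsymbol{C}})=U(\boldsymbol{\lambda}^{*})$. The inequality $F(\boldsymbol{R}^{**}_{\boldsymbol{C}})\le F(\boldsymbol{R}^{*}_{\boldsymbol{C}})$ of Theorem~\ref{thm:approximationBound} is then exactly $U(\boldsymbol{\lambda}^{**})\le U(\boldsymbol{\lambda}^{*})$, which is the right inequality of the corollary. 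It remains to translate the left inequality $F(\boldsymbol{R}^{*}_{\boldsymbol{C}'})\le F(\boldsymbol{R}^{**}_{\boldsymbol{C}})=U(\boldsymbol{\lambda}^{**})$ into the claimed lower bound, i.e.\ to show $F(\boldsymbol{R}^{*}_{\boldsymbol{C}'})\ge U\big(\tfrac{e-\Delta}{e-1}\boldsymbol{\lambda}^{*}\big)$.

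First I would bound $\boldsymbol{C}'$ below. Using \eqref{eq:cba_prime} and the hypothesis $\sum_{(i,p):(a,b)\in p}\bar{\lambda}_{(i,p)}\le \Delta C_{ba}$,
\begin{equation*}
C_{ba}' = C_{ba} - \tfrac{1}{e-1}\big[\textstyle\sum_{(i,p):(a,b)\in p}\bar{\lambda}_{(i,p)}-C_{ba}\big]\;\ge\; C_{ba}-\tfrac{\Delta-1}{e-1}\,C_{ba} \;=\; \tfrac{e-\Delta}{e-1}\,C_{ba}\;\ge\;0,
\end{equation*}
the last step using $\Delta\le e$. Now I would construct a feasible point for Problem~\eqref{cachingproblem1} with capacity vector $\boldsymbol{C}'$ (whose constraint set is $\mathcal{D}_3$, as identified in the proof of Theorem~\ref{thm:approximationBound}): take the cache allocation $\boldsymbol{Y}^{*}_{\boldsymbol{C}}$ together with the scaled rates $\boldsymbol{\lambda}'\triangleq \tfrac{e-\Delta}{e-1}\boldsymbol{\lambda}^{*}$, equivalently the residual $\boldsymbol{R}'\triangleq\bar{\boldsymbol{\lambda}}-\boldsymbol{\lambda}'$. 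Since $\tfrac{e-\Delta}{e-1}\in[0,1]$ we have $0\le \lambda'_{(i,p)}\le\lambda^{*}_{(i,p)}\le\bar{\lambda}_{(i,p)}$, so the box constraints \eqref{const:boxfory}--\eqref{const:boxforlambda} hold, while the cache constraint \eqref{const:cache} is inherited from $\boldsymbol{Y}^{*}_{\boldsymbol{C}}$. For the link constraint, note from \eqref{equ:cab} that $\rho_{(b,a)}(\cdot,\boldsymbol{Y})$ is linear, hence positively homogeneous, in the rate vector for fixed $\boldsymbol{Y}$; combining this with the feasibility of $(\boldsymbol{Y}^{*}_{\boldsymbol{C}},\boldsymbol{R}^{*})$ for capacities $\boldsymbol{C}$ and the bound above gives
\begin{equation*}
\rho_{(b,a)}(\boldsymbol{\lambda}',\boldsymbol{Y}^{*}_{\boldsymbol{C}}) = \tfrac{e-\Delta}{e-1}\,\rho_{(b,a)}(\boldsymbol{\lambda}^{*},\boldsymbol{Y}^{*}_{\boldsymbol{C}})\;\le\;\tfrac{e-\Delta}{e-1}\,C_{ba}\;\le\;C_{ba}',\qquad\forall (b,a)\in\mathcal{E},
\end{equation*}
which, via the identity $g_{ba}(\boldsymbol{Y},\boldsymbol{R})=\sum_{(i,p):(a,b)\in p}\bar{\lambda}_{(i,p)}-\rho_{(b,a)}(\boldsymbol{\lambda},\boldsymbol{Y})$ implicit in \eqref{eq:gba}, is exactly the $\mathcal{D}_3$ link constraint. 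Hence $(\boldsymbol{Y}^{*}_{\boldsymbol{C}},\boldsymbol{R}')\in\mathcal{D}_3$, so by optimality of $(\boldsymbol{Y}^{*}_{\boldsymbol{C}'},\boldsymbol{R}^{*}_{\boldsymbol{C}'})$ we get $F(\boldsymbol{R}^{*}_{\boldsymbol{C}'})\ge F(\boldsymbol{R}')=U\big(\tfrac{e-\Delta}{e-1}\boldsymbol{\lambda}^{*}\big)$. Chaining with Theorem~\ref{thm:approximationBound} completes the proof.

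I do not anticipate a genuine obstacle: the corollary is essentially a reparametrization of Theorem~\ref{thm:approximationBound} expressed in rates rather than residuals. The one point needing care is the feasibility verification above, specifically tracking the change of variables between $\boldsymbol{\lambda}$ and $\boldsymbol{R}$ and invoking homogeneity of $\rho_{(b,a)}$ in $\boldsymbol{\lambda}$ (note that $g_{ba}$ in \eqref{eq:gba} is \emph{not} homogeneous in $\boldsymbol{R}$, so the argument is cleanest when phrased through $\rho_{(b,a)}$). It is also worth sanity‑checking the endpoints $\Delta=1$, where $\boldsymbol{C}'=\boldsymbol{C}$ and the bound is trivial, and $\Delta=e$, where the lower bound degenerates to $U(\boldsymbol{0})$.
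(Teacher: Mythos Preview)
Your proposal is correct and follows essentially the same approach as the paper: both derive $C_{ba}'\ge \tfrac{e-\Delta}{e-1}C_{ba}$ from the hypothesis and \eqref{eq:cba_prime}, then argue that the scaled rate $\tfrac{e-\Delta}{e-1}\boldsymbol{\lambda}^{*}$ (paired with $\boldsymbol{Y}^{*}_{\boldsymbol{C}}$) is feasible for the instance with capacities $\boldsymbol{C}'$, whence $U(\tfrac{e-\Delta}{e-1}\boldsymbol{\lambda}^{*})\le U(\boldsymbol{\lambda}^{*}_{\boldsymbol{C}'})\le U(\boldsymbol{\lambda}^{**})$. Your write-up is in fact more careful than the paper's, explicitly naming the accompanying cache allocation and verifying the link constraint via the homogeneity of $\rho_{(b,a)}$ in $\boldsymbol{\lambda}$.
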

\begin{proof}
By \eqref{equ:newutility} %, we have $F(\boldsymbol{R}) = U(\bar{\boldsymbol{\lambda}} -\boldsymbol{R})$. Thus, by 
 and Thm.~\ref{thm:approximationBound} we have that
\begin{equation*}
 U(\boldsymbol{\lambda}^{*}_{\boldsymbol{C}'}) \leq U(\boldsymbol{\lambda}^{**}_{\boldsymbol{C}}) \leq   U(\boldsymbol{\lambda}^{*}_{\boldsymbol{C}})  ,
\end{equation*}
where $\boldsymbol{\lambda}^{*}_{\boldsymbol{C}} = \bar{\boldsymbol{\lambda}} -\boldsymbol{R}^*_{\boldsymbol{C}}$ and  $\boldsymbol{\lambda}^{*}_{\boldsymbol{C}'} = \bar{\boldsymbol{\lambda}} -\boldsymbol{R}^*_{\boldsymbol{C}'}$ are the optimal rates of two instances of Problem~\eqref{cachingproblem1} with link capacity vectors $\boldsymbol{C}$ and $\boldsymbol{C}'$, respectively, and $\boldsymbol{\lambda}^{**}_{\boldsymbol{C}} =  \bar{\boldsymbol{\lambda}} -\boldsymbol{R}^{**}_{\boldsymbol{C}}$ is the optimal rate for Problem~\eqref{cachingproblem2} with link capacity vector $\boldsymbol{C}$. Since $\sum_{(i,p) : (a,b) \in p} \bar{\lambda}_{(i,p)} \leq \Delta C_{ba}$ for all $(b,a) \in \mathcal{E}$, by \eqref{eq:cba_prime} we have $ \frac{e-\Delta}{e-1} \boldsymbol{C} \leq \boldsymbol{C}' $. Therefore, the rate vector $\frac{e-\Delta}{e-1} \boldsymbol{\lambda}^{*}_{\boldsymbol{C}}$ is feasible in Problem \eqref{cachingproblem1} with link capacity vector $\boldsymbol{C}'$. Hence, $ U\left(\frac{e-\Delta}{e-1} \boldsymbol{\lambda}^{*}_{\boldsymbol{C}}\right) \leq U(\boldsymbol{\lambda}^{*}_{\boldsymbol{C}'}) $ and we can write
\begin{align*}
U\left(\frac{e-\Delta}{e-1} \boldsymbol{\lambda}^{*}_{\boldsymbol{C}}\right) \leq 
U(\boldsymbol{\lambda}^{*}_{\boldsymbol{C}'}) \leq
U(\boldsymbol{\lambda}^{**}_{\boldsymbol{C}})  \leq 
     U(\boldsymbol{\lambda}^{*}_{\boldsymbol{C}}).
   % \Rightarrow & U(\boldsymbol{\lambda}^{*}_{\boldsymbol{C}}) \geq U(\boldsymbol{\lambda}^{**}_{\boldsymbol{C}}) \geq
%    U(\Delta \boldsymbol{\lambda}^{*}_{\boldsymbol{C}})
\end{align*}
Therefore, we have $U\left(\frac{e-\Delta}{e-1} \boldsymbol{\lambda}^{*}_{\boldsymbol{C}}\right) \leq 
U(\boldsymbol{\lambda}^{**}_{\boldsymbol{C}})  \leq 
U(\boldsymbol{\lambda}^{*}_{\boldsymbol{C}})$, and this concludes the proof. 
\end{proof}
}

Comparing this to LBSB, the convex relaxation is a simpler problem, as it requires solving a convex program. On the other hand,  LBSB provides better optimality guarantees, especially when the demand rate of requests exceeds  the capacity of the links.  %especially when $\sum_{(i,p) : (a,b) \in p} \bar{\lambda}_{(i,p)} \gg C_{ba}$ for a large number of links $(a,b) \in \mathcal{E}$. 
In practice, as we see in the numerical evaluations (Section~\ref{sec:simul}), the Lagrangian barrier outperforms the convex relaxation method for a wide range of network topologies and  parameter settings.   %as well as other methods in a wide range of network topologies and settings.  

\section{Numerical Evaluation} \label{sec:simul}
%Constraints 
\begin{comment}
cycle 1 1
lollipop 1 1
geant 1 1
abilene 1 1
dtele 1 1
balanced_tree 1 0.83 -2.295874867996872e-08
grid 1 0.96 -2.009476074427141e-08
hyper 1 0.92 -4.848435399829043e-07
small 1 0.94 -8.081201466558774e-06
erdos 1 0.84 -0.0005
\end{comment}
\begin{table}[!t]
\caption{Graph Topologies and Experiment Parameters.}\label{networks}
\begin{scriptsize}
\begin{tabular}{lp{1em}p{1em}p{1em}p{1em}p{1em}p{1em}p{1em}p{1em}p{1em}}

Graph & $|\mathcal{V}|$ & $|\mathcal{E}|$ & $|\itemcat|$ & $|\requestset|$ &  $|Q|$ & $c_v'$ &$\hat{F}_{\text{(loose)}}$ &$\hat{F}_{\text{(tight)}}$ & \#Vars   \\

\hline
\hline
\texttt{cycle}  & 30 & 60 & 10 & 100 & 10  & 2   & 9.53 &   9.53  & 344  \\
 \texttt{lollipop} & 30 & 240 & 10 & 100 & 10 & 2 &  9.53 & 9.53 & 274\\
  \texttt{geant} &22 & 66& 10&100 & 10 & 2 & 9.53& 9.53 & 228\\
 \texttt{abilene} & 9 & 26 & 10 &40 & 4 & 2  & 3.81& 3.81 & 85\\
 \texttt{dtelekom} & 68 & 546 & 15 & 125 & 15 & 3 & 11.91& 11.91 & 301\\
\texttt{balanced-tree} & 63 & 124  & 30 & 450 & 15 & 3& 42.88& 28.45 & 1434\\
\texttt{grid-2d} &64 & 224  & 30 & 450 & 15 & 3&42.88 & 37.08 & 1665\\
\texttt{hypercube}&64 & 384  & 15 & 450 & 15 & 3& 42.88&35.59 &1189\\
% \texttt{expander} &100 & 716  & 300 & 1K & 20 & 3 \\
%\texttt{regular} &100 & 300  & 300 & 1K & 20 & 3 \\
% \texttt{watts-strogatz} &100 & 400  & 300 & 1K & 20 & 3   \\
 \texttt{small-world} &64 & 308  & 30 & 450 & 15 & 3 & 42.88 & 37.90&1349 \\
  \texttt{erdos-renyi} &64 & 378  & 30 & 450 & 15 & 3 & 42.88& 35.06 & 1191 \\
 %\texttt{barabasi-albert} &100 & 768  & 300 & 1K & 20 & 3  \\
\end{tabular}
\end{scriptsize}
\end{table}

\fullversion
{
\begin{figure}[!t]
\vspace{-3mm}
	\centering
	\subfloat[$\kappa=0.95$]{
		\includegraphics[width=1\columnwidth]{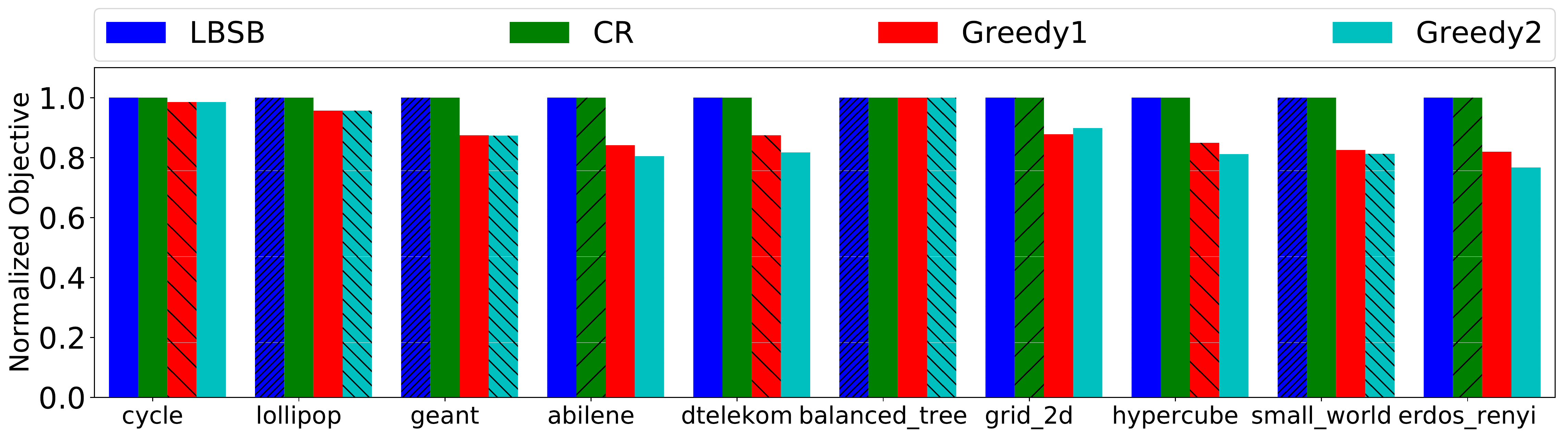}\label{bar95}
	}
	\vspace{-3mm}
	
	\subfloat[$\kappa=0.85$]{
		\includegraphics[width=1\columnwidth]{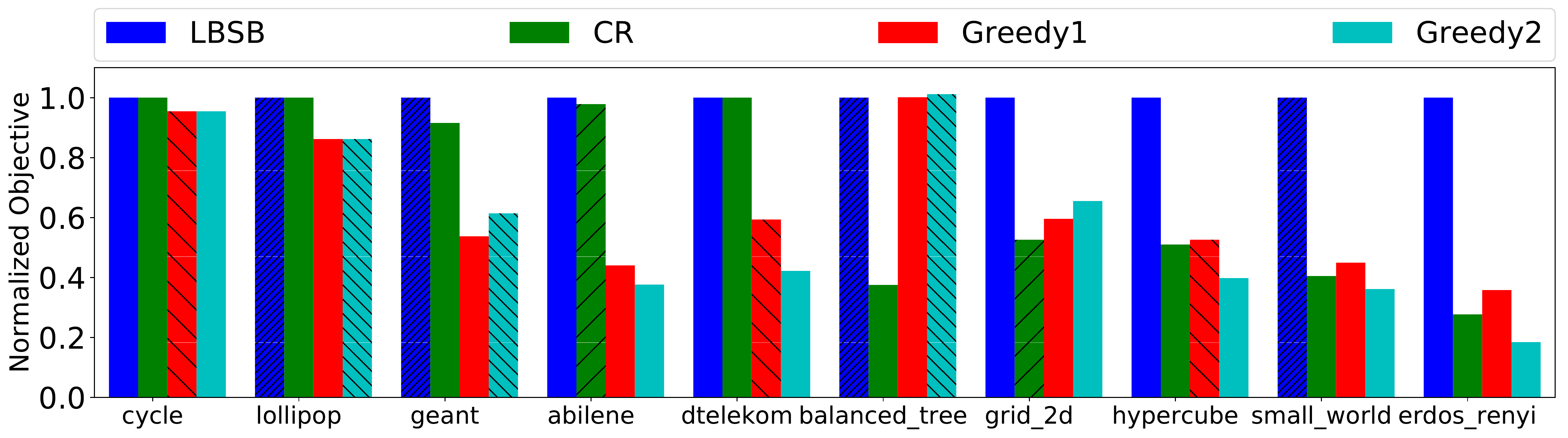}\label{bar8}
	}
\caption{Objectives performance. The figure shows the normalized objective obtained by different algorithms across different topologies and in two settings, i.e., the loose setting, with $\kappa=0.95$ (Fig~\ref{bar95}), and the tight setting with $\kappa=0.85$ (Fig~\ref{bar8}). Note that solutions for all 4 algorithms are feasible in all cases.}\label{utility}
\vspace{-5mm}
\end{figure}
}
{
\begin{figure*}[!t]
\vspace{-3mm}
	\centering
	\subfloat[$\kappa=0.95$]{
		\includegraphics[width=1\textwidth]{plots/normalized_bar_95}\label{bar95}
	}

	\subfloat[$\kappa=0.85$]{
		\includegraphics[width=1\textwidth]{plots/normalized_bar_85}\label{bar8}
	}
\caption{Objectives performance. The figure shows the normalized objective obtained by different algorithms across different topologies and in two settings, i.e., the loose setting, with $\kappa=0.95$ (Fig~\ref{bar95}), and the tight setting with $\kappa=0.85$ (Fig~\ref{bar8}). We see that \texttt{LBSB} outperforms other algorithms. Also, \texttt{CR} performance almost matches \texttt{LBSB} in the loose setting (Fig.~\ref{bar95}); however, its performance significantly deteriorates in the tight setting (Fig.~\ref{bar8}). Note that solutions for all 4 algorithms are feasible in all cases.}\label{utility}
\end{figure*}
}

\fullversion
{
\begin{figure}[!t]
\vspace{-1mm}
	\centering
	\subfloat[$\kappa=0.95$]{
		\includegraphics[width=0.48\columnwidth]{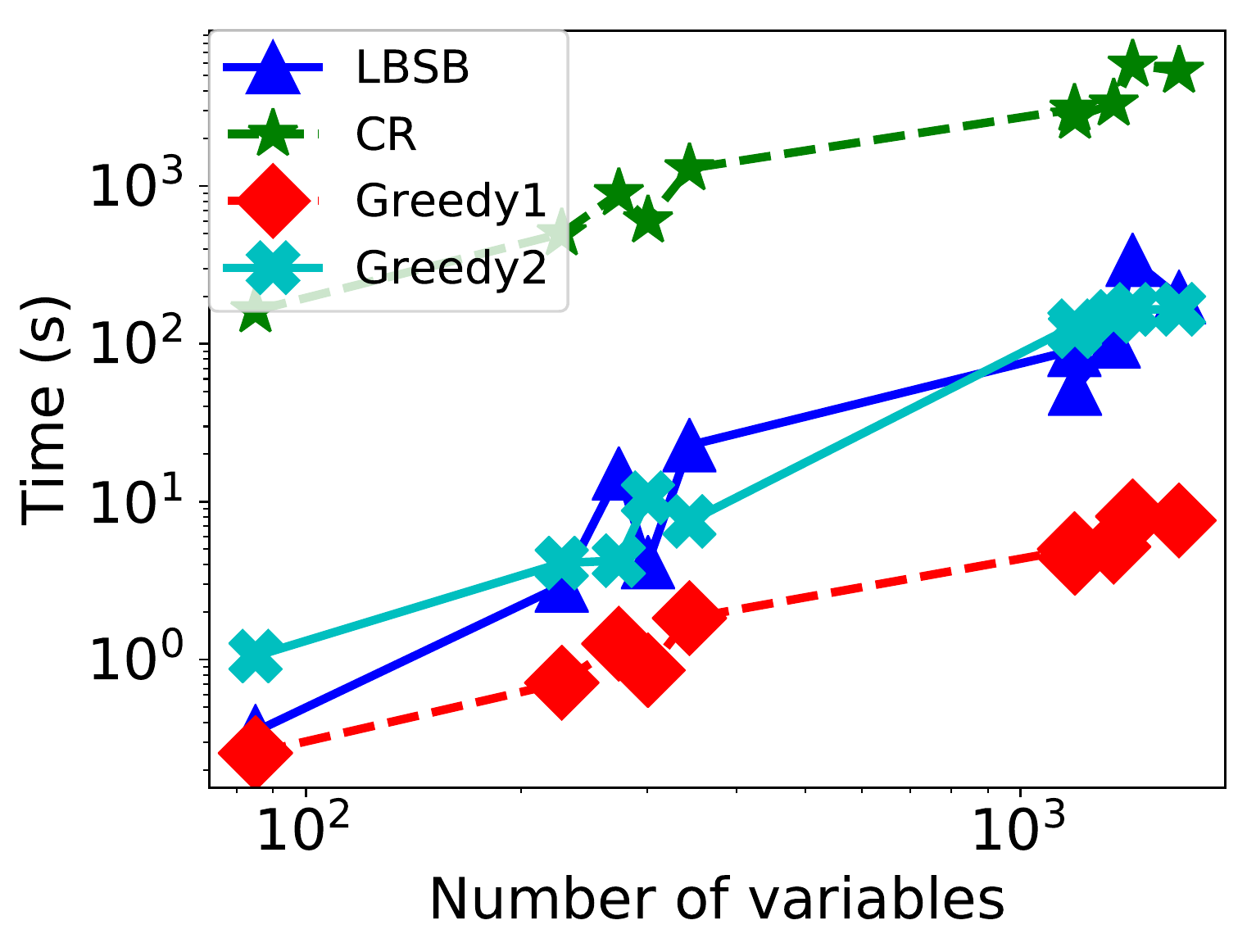}\label{time95}
	}
		\subfloat[$\kappa=0.85$]{
		\includegraphics[width=0.48\columnwidth,scale=0.5]{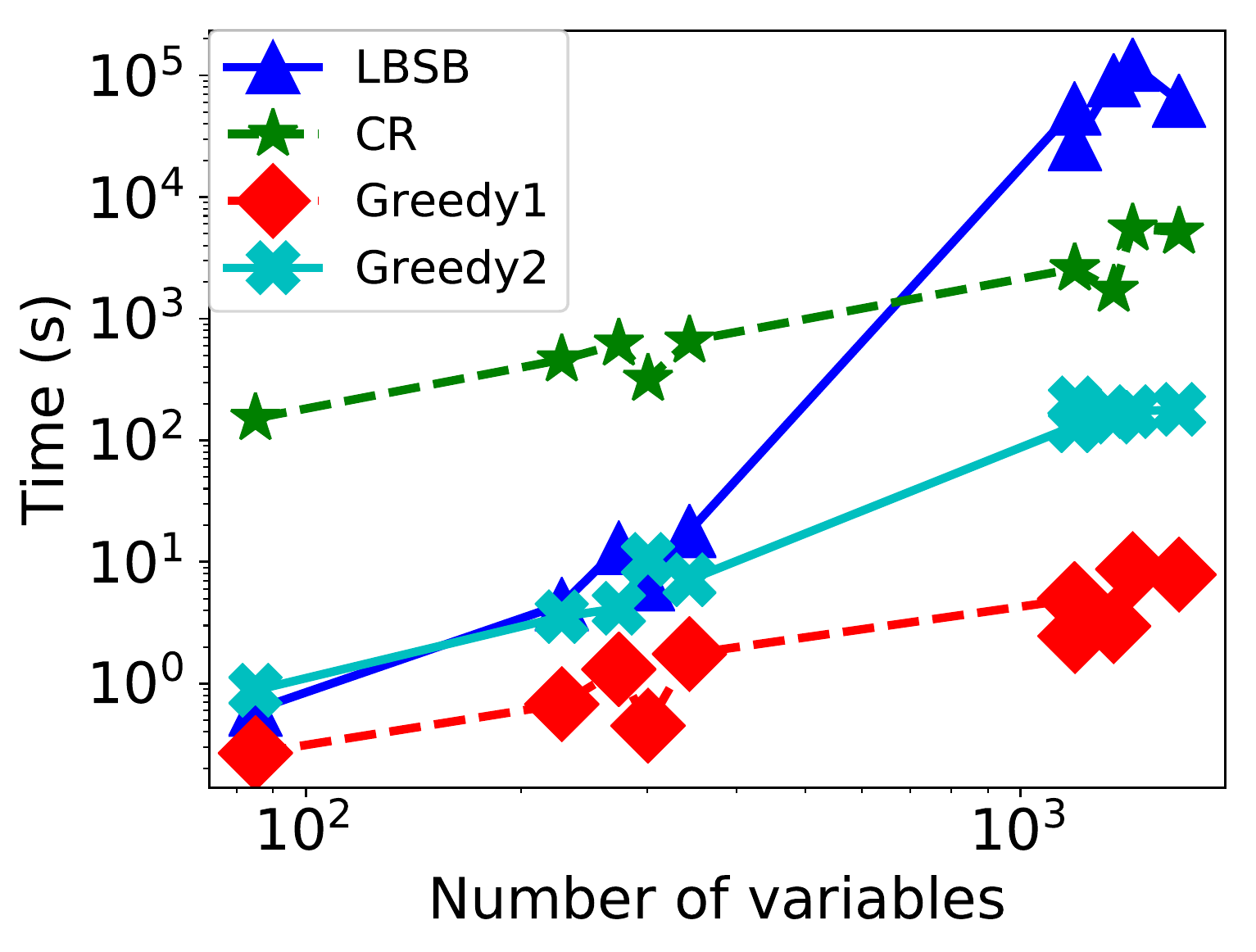}\label{time8}
	}
	\caption{Execution Times.  Figures~\ref{time95} and \ref{time8} show the execution times for algorithms w.r.t. the number of variables, for the loose and tight settings, respectively.}\label{fig:time}
	\vspace{-5mm}	
\end{figure}
}
{
\begin{figure*}[!t]
\vspace{-1mm}
	\centering
	\subfloat[$\kappa=0.95$]{
		\includegraphics[height=0.23\textwidth]{plots/95_times.pdf}\label{time95}
	}
	\subfloat[$\kappa=0.85$]{
		\includegraphics[height=0.23\textwidth]{plots/85_times.pdf}\label{time8}
	}
	\subfloat[\texttt{LBSB} trajectory for \texttt{grid-2d}]{
		\includegraphics[height=0.23\textwidth]{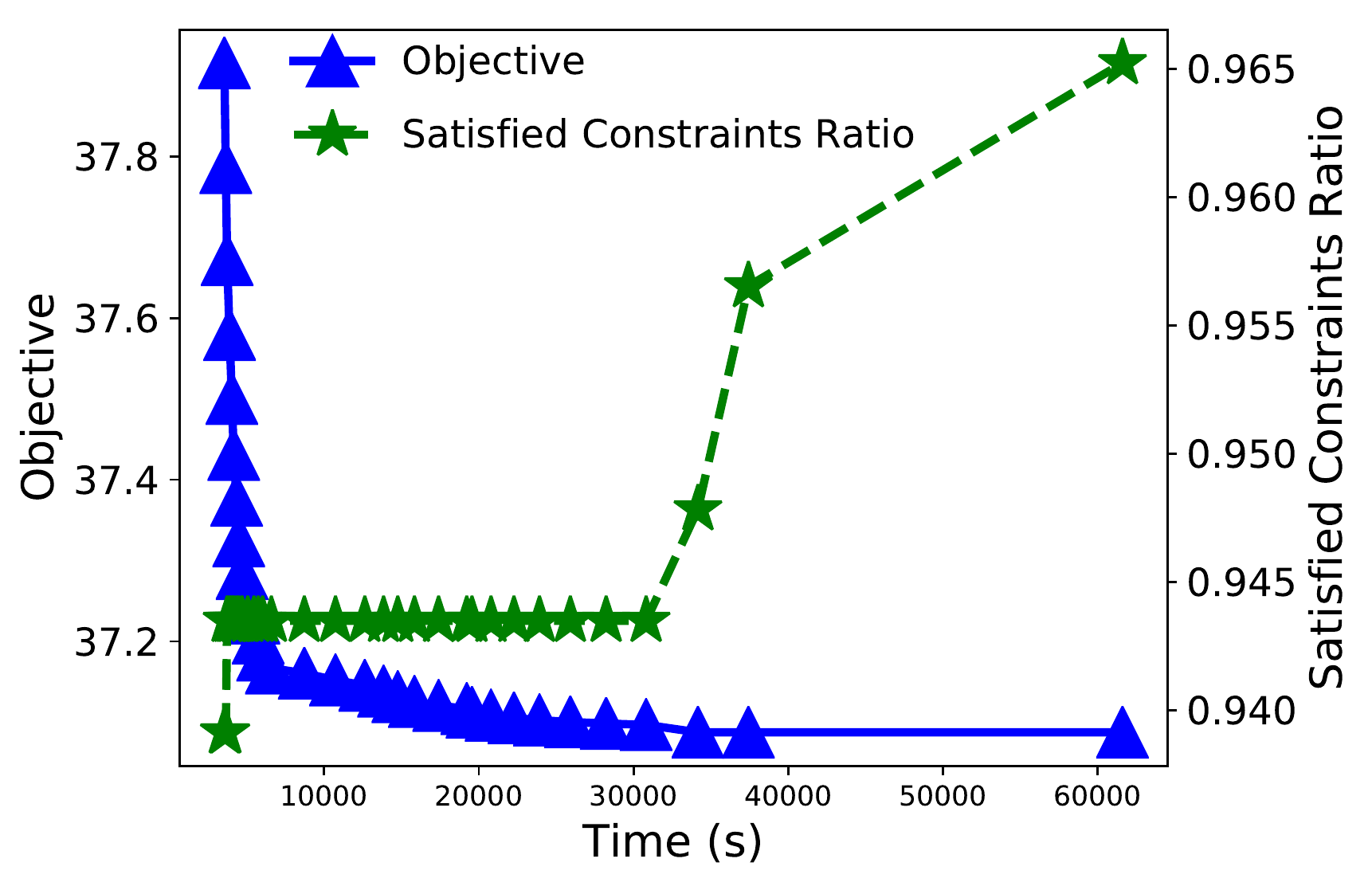}\label{fig:trajgrid}
	}
\caption{Execution Times and Convergence.  Figures~\ref{time95} and \ref{time8} show the execution times for algorithms w.r.t. the number of variables, for the loose and tight settings, respectively. We observe that in the loose setting the execution times for \texttt{LBSB} are comparable with greedy algorithms; however, in the tight setting \texttt{LBSB} is much slower, as the the trust-region algorithm that we use at each iteration of \texttt{LBSB} requires more iterations to satisfy \eqref{equ:appforinner}. Fig.~\ref{fig:trajgrid} shows the objective and feasibility trajectories for the iterations of \texttt{LBSB}; as iterations progress, feasibility improves and thusly the objective value decreases.}\label{fig:time}
\end{figure*}}

\fullversion{
\begin{figure}[!t]
\vspace{-1mm}
	\centering
	\subfloat[\texttt{abilene}]{
		\includegraphics[width=0.48\columnwidth]{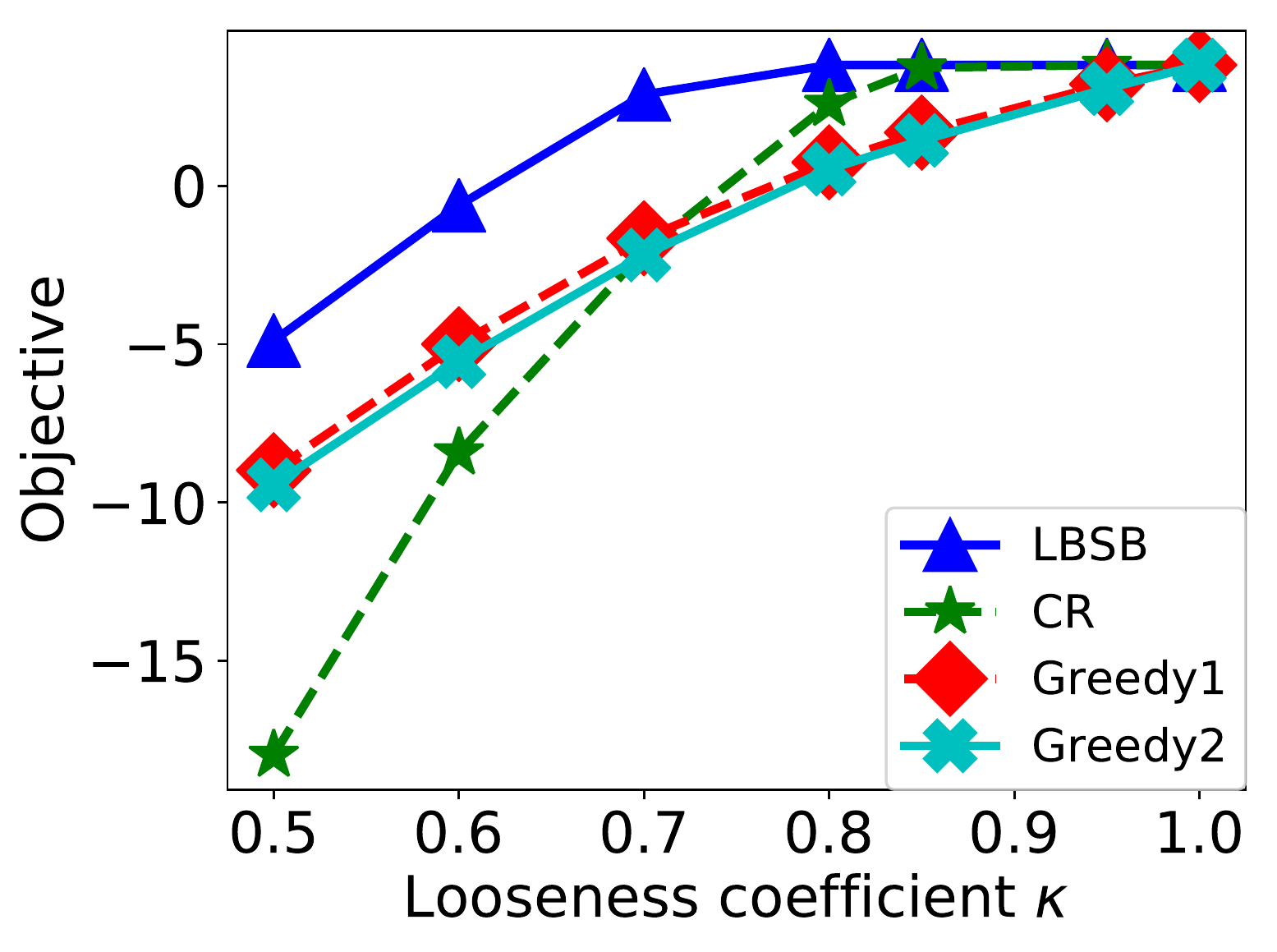}\label{fig:abilene}
	}
		\subfloat[\texttt{geant}]{
		\includegraphics[width=0.48\columnwidth]{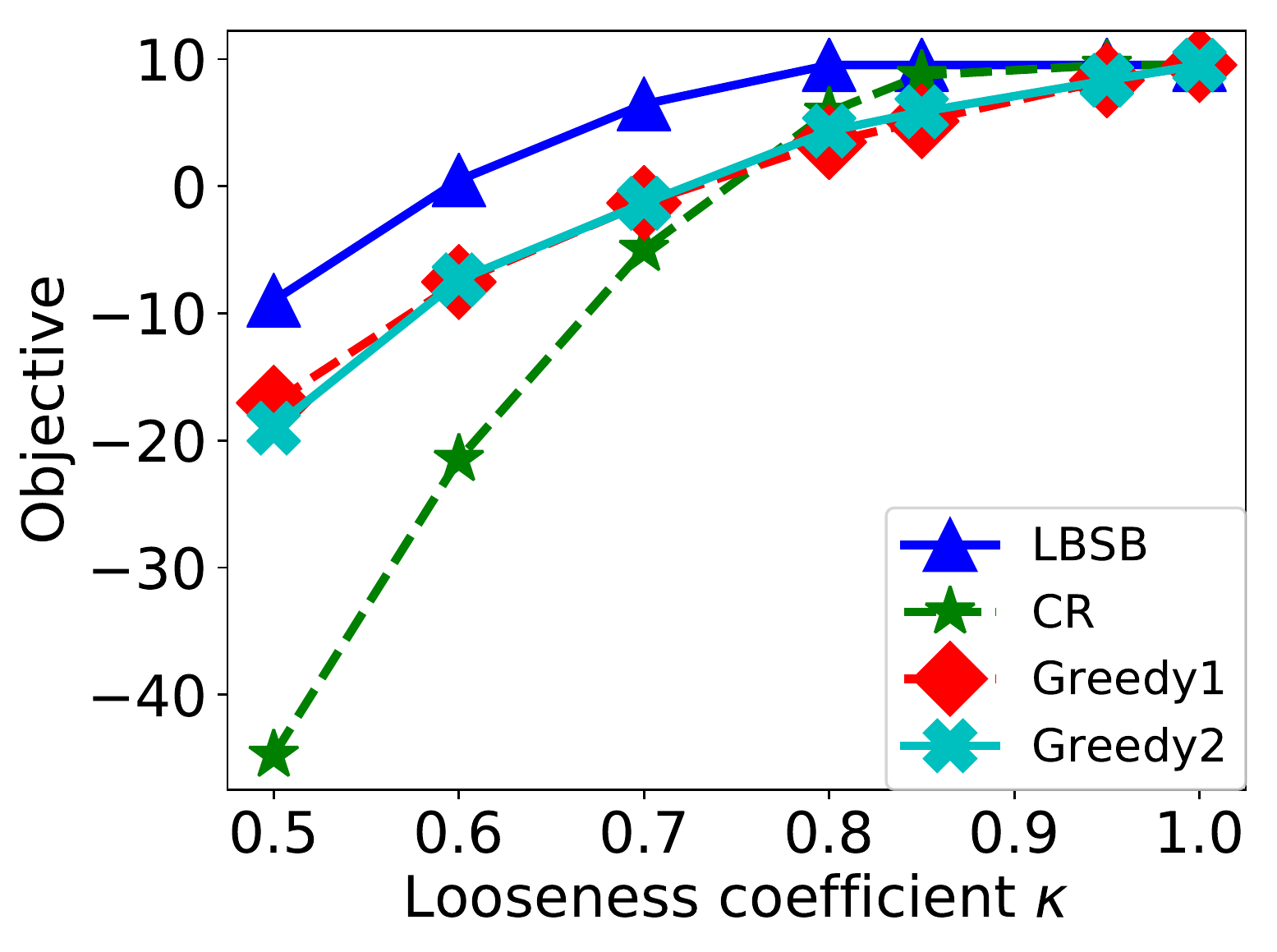}\label{fig:geant}
	}
\caption{Effects of tightening the constraints. The figure shows the objective values w.r.t.  the looseness coefficient $\kappa$, for two topologies \texttt{abilene}   and \texttt{geant}. Note that solutions for all 4 algorithms are feasible in all cases.}\label{congestion}
\end{figure}%\vspace{-3mm}
}
{
\begin{figure*}[!t]
\vspace{-1mm}
	\centering
	\subfloat[\texttt{abilene}]{
		\includegraphics[width=0.33\textwidth]{plots/abilene_congestion.pdf}\label{fig:abilene}
	}
	\subfloat[\texttt{geant}]{
		\includegraphics[width=0.33\textwidth]{plots/geant_congestion.pdf}\label{fig:geant}
	}
	\subfloat[\texttt{cycle}]{
		\includegraphics[width=0.33\textwidth]{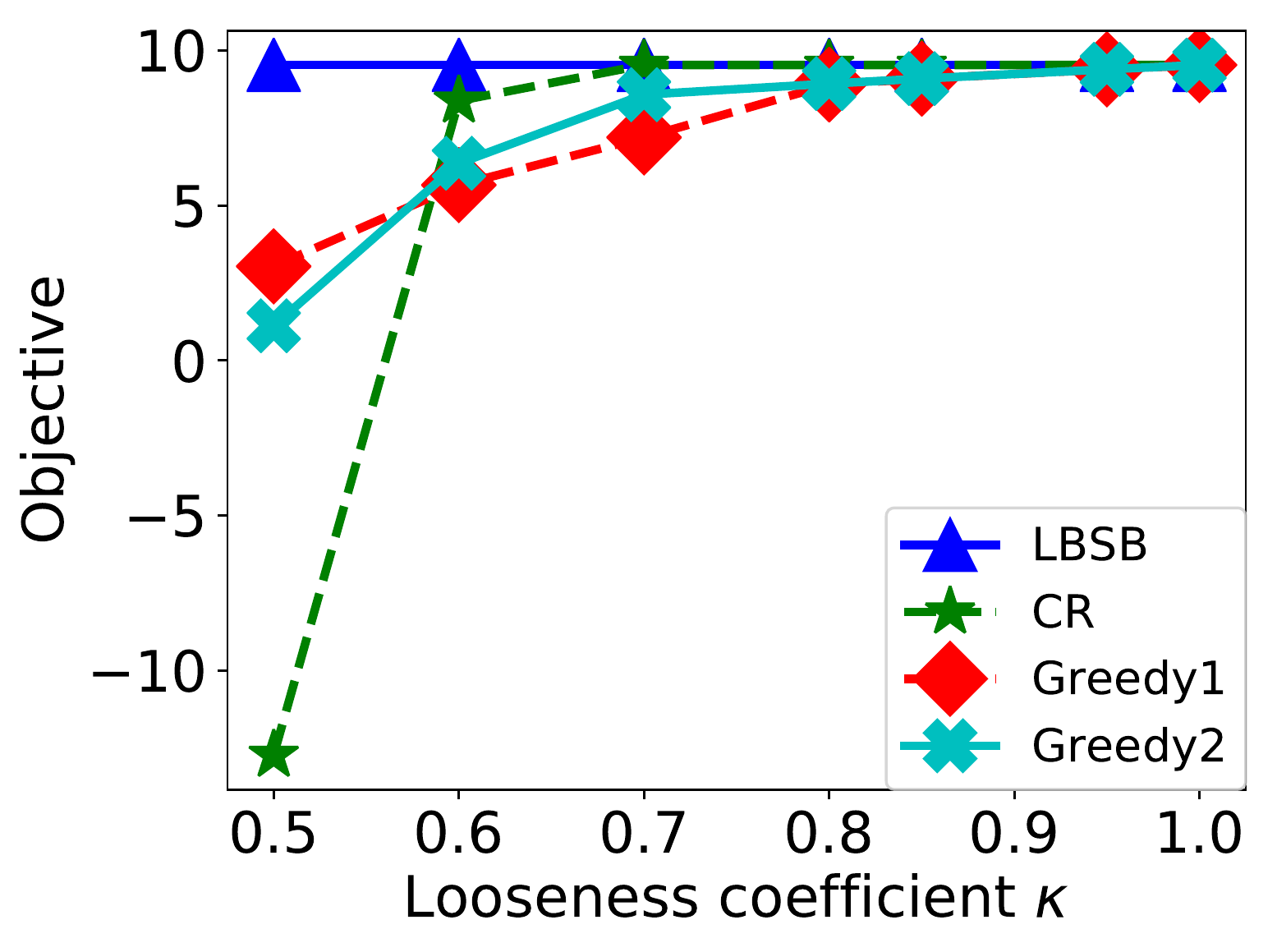}\label{fig:cycle}
	}
\caption{Effects of tightening the constraints. The figure shows the objective values w.r.t.  the looseness coefficient $\kappa$, for three topologies \texttt{abilene},  \texttt{geant}, and \texttt{cycle}. We see that as the constraints are tightened (i.e., $\kappa$ decreases) the gap between the objective values obtained by \texttt{LBSB} and other algorithms increases significantly, where \texttt{LBSB} delivers a superb performance. Moreover, for tighter settings, \texttt{CR} performance gets poor, e.g., for $\kappa=0.5$ \texttt{Greedy1} and \texttt{Greedy2} achieve higher objectives, in all three cases. Note that solutions for all 4 algorithms are feasible in all cases.}\label{congestion}
\end{figure*}
}

In this section, we evaluate the performance of the proposed methods, i.e., LBSB introduced in Section~\ref{subsec:LagrangianBarrierForUtilityMax} and the convex relaxation introduced in Section~\ref{subsec:concaverelax}. We also implement two greedy algorithms and compare the performance of the proposed methods against these greedy algorithms. \fullversion{}{As discussed below, we observe that our methods demonstrate impressive performance; for example, in Fig.~\ref{utility} we observed that out of 20 scenarios, the  Lagrangian barrier and the convex relaxation methods attain higher objective values than the greedy algorithms in 19 and 15 scenarios, respectively. }

\fullversion{}{\noindent\textbf{Topologies.} The networks we consider are summarized in Table~\ref{networks}.  Graph \texttt{cycle} is a simple cyclic graph, and \texttt{lollipop} is a clique (i.e., complete graph), connected to a path graph of equal size. The next 3 graphs represent the Deutche Telekom, Abilene, and GEANT backbone networks \cite{rossi2011caching}. Graph \texttt{grid-2d} is a two-dimensional square grid, \texttt{balanced-tree} is a complete binary tree of depth 5, and \texttt{hypercube} is a 6-dimensional hypercube. Finally, the last two graphs are random; \texttt{small-world} is the graph by Kleinberg \cite{kleinberg2000small}, which comprises a grid with additional long range links, and \texttt{erdos-renyi} is an Erd\H{o}s-R\'enyi graph with parameter $p=0.1$.}

\noindent\textbf{Experiment Setup.} \fullversion{We evaluate our algorithms on 10 graphs summarized in Table~\ref{networks}.}{We evaluate our algorithms on \texttt{cycle}, \texttt{lollipop} (which is a complete graph connected to a path graph of equal size), Deutche Telekom, Abilene, and GEANT backbone networks \cite{rossi2011caching}, \texttt{grid-2d}, \texttt{balanced-tree}, \texttt{hypercube}, \texttt{small-world}, and \texttt{erdos-renyi} graph with parameter $p=0.1$.}
Given a graph $G(\mathcal{V},\mathcal{E})$, we generate a catalog $\itemcat$, and assign a cache to each node in the graph. For every item $i\in \itemcat$, we designate a source node selected uniformly at random (u.a.r.)~from $\mathcal{V}$. We set the capacity $c_v$ of every node $v$ so that $c_v'=c_v-|\{i:v\in S_i\}|$ is constant among all nodes in $\mathcal{V}$.  We then generate a set of requests $\requestset$ as follows. First, we select a set $Q$ nodes in $\mathcal{V}$ selected u.a.r., that we refer to as \emph{query nodes}: these are the only nodes that generate requests. 
%Put differently, each request in $\requestset$  starts from one of the query nodes in $Q$. 
More specifically, for each query node $v\in Q,$ we generate $\approx |\requestset|/|Q|$ requests according to a Zipf distribution with parameter $1.2$ and without replacement  from the catalog $\itemcat.$ 
%We generate a set of requests starting from a random node in $Q$, with the content requested selected from $\itemcat$ according to a Zipf distribution with parameter $1.2$.
 Each request is then routed over the shortest path between the query node and the designated source for the requested item. We assign a demand rate $\bar{\lambda}_{(i,p)}=1$ to every request $n \in \requestset$. 
  The values of $|\itemcat|$, $|\requestset|$, $|Q|$, and $c_v$ for each topology are given in Table~\ref{networks}. Our process also makes sure that each item $i\in \itemcat$ is requested at least once.
  
  We determine the link capacities $C_{ba},$ $(b,a)\in \mathcal{E},$ as follows. First, note that the maximum possible load on each link $(b,a)\in \mathcal{E}$ is $\lambda_{ba}^{(\text{max})}\triangleq\sum_{(i,p) : (a,b) \in p} \bar{\lambda}_{(i,p)}$.  We  set the link capacities as $C_{ba} = \kappa \lambda^{(\text{max})}_{ba},$ where  $\kappa\in(0,1]$ is a \emph{looseness coefficient}: the higher $\kappa$ is, the easier it becomes to satisfy the demand. Note that for every link $(b,a)$, if $C_{ba}\geq \lambda_{ba}^{(\text{max})}$ (or equivalently $\kappa\geq 1$), then the link constraint corresponding to $(b,a)$ in \eqref{const:link} is trivially satisfied.
  %every point $(\boldsymbol{Y}, \boldsymbol{R})$ within the box constraints in \eqref{const:boxfory} and \eqref{const:boxforlambda} satisfies the link constraint.
  In contrast, as $\kappa$ decreases below $1$, the link constraints are tightened, and finding optimal rate and cache allocation strategies becomes non-trivial. For each topology in Table~\ref{networks}, we study two settings, i.e., (1) a \emph{loose} setting, where $\kappa=0.95$ and (2) a \emph{tight} setting, where $\kappa=0.85.$ 

 %When reporting the objective value in Table~\ref{networks}, we  denote  the objectives as  $F_{\text{(loose)}}^*$ and $F_{\text{(tight)}}^*$, corresponding to the loose ($\kappa=0.95$) and tight ($\kappa=0.85$) constraints settings, respectively. 

\noindent\textbf{Algorithms.} We implement\footnote{Our code is publicly available at \href{https://github.com/neu-spiral/UtilityMaximizationProbCaching}{https://github.com/neu-spiral/UtilityMaximizationProbCaching}.} Alg.~\ref{alg:findKKT} and refer to it as \texttt{LBSB}. We run this algorithm until the convergence criterion in \cite{conn1997globally} is met (with $\delta_*=10^{-4}, \omega_*=10^{-4}$).  We also solve the convex relaxation in  \eqref{cachingproblem2} via a sub-gradient method, as described in Section~7.5 of Bertsekas \cite{bertsekas1999nonlinear}; we refer to this algorithm by \texttt{CR}, for convex relaxation. We run \texttt{CR} for a fixed number of iterations (500 iterations). In addition, we implement two greedy algorithms, i.e., \texttt{Greedy1} and \texttt{Greedy2}.  We describe each of these algorithms below; we stress that, as we are the first to consider problem \textsc{UtilityMax}, there are no prior art algorithms to compare with. 
\begin{itemize}
\item \texttt{Greedy1} consists of three steps; in Step 1, we initialize $\boldsymbol{Y}=0$ and update $\boldsymbol{R}$ by solving \eqref{cachingproblem1} only w.r.t. $\boldsymbol{R}.$ This is a convex optimization problem. 
In Step 2, \texttt{Greedy1} keeps $\boldsymbol{R}$ fixed, as computed by Step 1, and updates $\boldsymbol{Y}$ by maximizing the sum $\sum_{(b,a)\in \mathcal{E}}g_{ba}(\boldsymbol{Y},\boldsymbol{R}),$ subject to the constraints~\eqref{const:cache} and \eqref{const:boxfory} and only w.r.t. $\boldsymbol{Y}.$ This is equivalent to minimizing the total long-term time-average item load over the links of the graph (c.f. Section~\ref{sect:congestioncontrol}). Note that Step 2 is a monotone DR-submodular maximization subject to a polytope, which we solve via the Frank-Wolfe algorithm proposed by Bian et. al. \cite{bian2017guaranteed}. Finally in Step 3, \texttt{Greedy1} updates $\boldsymbol{R}$ by solving \eqref{cachingproblem1}  w.r.t. $\boldsymbol{R},$ one more time, while  $\boldsymbol{Y}$ is fixed to the value computed by Step 2. 

\item 
\texttt{Greedy2} \fullversion{initializes $\boldsymbol{Y}=0$, and then alternatively updates $\boldsymbol{Y}$ and  $\boldsymbol{R}$}{is an alternating optimization algorithm. We initialize $\boldsymbol{Y}=0$, and then alternatively update $\boldsymbol{Y}$ and  $\boldsymbol{R}$, while keeping the other variable  fixed at a time}; we refer to the former step as the cache allocation step and to the latter as the rate allocation step. 

In the cache allocation step, \texttt{Greedy2} ``greedily'' places one item to a cache: it changes one zero variable ($y_{vi}=0$) to 1, where $(v, i)$ is the feasible pair with the largest marginal gain in load reduction. Formally, (a) node $v$ has not fully used its cache capacity ($g_v(\boldsymbol{Y})<c_v'$) and (b) changing $y_{vi}$ from 0 to 1 has the highest increase in the total sum $\sum_{(b,a)\in \mathcal{E}} g_{ba}(\boldsymbol{Y},\boldsymbol{R})$ (or equivalently the highest decrease in the aggregate item load). 
In the rate allocation step, \texttt{Greedy2} keeps $\boldsymbol{Y}$ constant from the previous step, and updates $\boldsymbol{R}$ by solving \eqref{cachingproblem1} w.r.t.~$\boldsymbol{R}$. 

Finally, \texttt{Greedy2} terminates once node storage capacities are depleted, i.e., there is no pair $(v, i)$ left, s.t., $y_{vi}=0$ and $y_v < c_v'.$  
\end{itemize}
 %Note that both \texttt{Greedy1} and \texttt{Greedy2} always generate feasible solutions. For instance, in \texttt{Greedy1} both the convex optimization and the DR-submodular maximization are subject to  the constraints \eqref{equ:D}. Similarly,  in \texttt{Greedy2} the convex optimization step for updating the rate residuals includes the link constraints \eqref{const:link} and the box constraint \eqref{const:boxforlambda}. Also the way the caches are populated in \texttt{Greedy2} makes sure that  the constraints \eqref{const:cache} are satisfied, plus, the cache allocation variables are either 0 or 1 (thusly meeting \eqref{const:boxfory}).  

\noindent\textbf{Metrics.} Throughout the experiments we report the objective function $F(\boldsymbol{R})$ obtained by different algorithms for the choice of the logarithmic utility functions $U_{\requestindex}(\lambda)= \log (\lambda + 0.1),$ for all $\requestindex \in \requestset.$ Note that these utility functions satisfy Assumption~\ref{assm:logreturn} and Assumption~\ref{assm:unbounded}. \fullversion{We observe that all algorithms  generate    feasible solutions for all cases.}{To assess feasibility during the progression of an algorithm, we also report the ratio of satisfied constraints, i.e., the fraction of constraints in \eqref{const:link} and \eqref{const:cache} that are satisfied. 
}

\noindent\textbf{Objective Performance.}
Fig.~\ref{utility} shows  objectives attained by different algorithms,  normalized by the objective under \texttt{LBSB}; the latter is given in the $\hat{F}_{\text{(loose)}}$ and $\hat{F}_{\text{(tight)}}$ columns of Table~\ref{networks}, for the loose $(\kappa = 0.95)$ and  tight $(\kappa = 0.85)$ settings, respectively. We observe that \texttt{LBSB} outperforms all its competitors across all
topologies in both settings, except for one case (\texttt{balanced-tree}
with $\kappa=0.85.$) In Fig.~\ref{bar95}, we see that in the loose setting, \texttt{CR} performance almost matches \texttt{LBSB} and achieves better objective values in comparison with \texttt{Greedy1} and \texttt{Greedy2}. However, in Fig.~\ref{bar8} we see that as the constraint set is tightened, the performance of  \texttt{CR} deteriorates. Moreover, by comparing Fig.~\ref{bar95} and  Fig.~\ref{bar8} we see that for some topologies (e.g., \texttt{grid-2d}, \texttt{hypercube}, \texttt{small-world}, and \texttt{erdos-renyi}), the gap between the objective values obtained by \texttt{LBSB} and other algorithms is significantly higher in the tight constraints regime. 

\noindent\textbf{Execution Time.}
In Fig.~\ref{fig:time}, we plot the execution times of all algorithms for each scenario as a function of the number of variables in the corresponding instance of problem \textsc{UtilityMax} (reported in the last column of Table~\ref{networks}). Figures~\ref{time95} and \ref{time8} correspond to Figures~\ref{bar95} and \ref{bar8} (the loose and tight settings), respectively. In particular, in the loose setting (Fig.~\ref{time95}) we see that the execution times for \texttt{LBSB} almost match the execution times for \texttt{Greedy2}, and  \texttt{LBSB} is much faster than \texttt{CR}.  In the tight setting (Fig.~\ref{time8}), however, we see that the execution time of \texttt{LBSB} is  higher, particularly when the number of variables is large (corresponding to larger topologies, i.e., \texttt{grid-2d}, \texttt{balanced-tree}, \texttt{hypercube}, \texttt{small-world}, and \texttt{erdos-renyi}). The main reason is that, in the tight setting, the trust-region algorithm used as a subroutine at each iteration requires a higher number of iterations to satisfy  \eqref{equ:appforinner}; as a result, the execution time for \texttt{LBSB} increases. Nonetheless, as we observed in Fig.~\ref{bar8}, \texttt{LBSB} achieves significantly improved objective performance compared to other algorithms.

\fullversion{}{
\noindent\textbf{Convergence.} To obtain further insight into the convergence of \texttt{LBSB}, we plot in Fig.~\ref{fig:trajgrid}  the objective and feasibility (i.e., the ratio of satisfied constraints \eqref{const:link} and \eqref{const:cache}) trajectories as a function of time. Each marker in these plots corresponds to an outer iteration of \texttt{LBSB}. We observe that, initially, the solutions are infeasible and the objective value is high, as the algorithm converges, the feasibility improves and consequently the objective decreases. We stress here that although the constraint satisfaction ratio in Fig.~\ref{fig:trajgrid} does not reach 1, the highest constraint violation at the last displayed iteration is in the order $10^{-8}$ (and, hence, our convergence criterion was met).   For brevity, we show only the trajectory for \texttt{grid-2d} and $\kappa=0.85$, for which \texttt{LBSB} is slower than other algorithms. 
}

\noindent\textbf{Effect of Tightening Constraints.}
Motivated by our observation regarding the superior performance of \texttt{LBSB} in the tighter setting, we study the effects of further decreasing the looseness coefficient $\kappa$. In Fig.~\ref{congestion}, we plot the objective values achieved by different algorithms for looseness coefficients $\kappa=0.5, 0.6, 0.7, 0.8, 0.85, 0.95$, and $1$. For brevity, we report these results only for \fullversion{\texttt{abilene} and  \texttt{geant}.}{\texttt{abilene},  \texttt{geant}, and \texttt{cycle}.} From Fig.~\ref{congestion}, we observe that for  \fullversion{both}{all three} topologies, when $\kappa=1$, all algorithms achieve the optimal objective value; this is expected, because as explained, when $C_{ba}\geq \lambda_{ba}^{(\text{max})},$ the non-convex constraints \eqref{const:link} are trivially satisfied. As we tighten the constraints by decreasing $\kappa$, we observe that all algorithms obtain smaller objectives, which is also expected. Crucially, \texttt{LBSB} significantly outperforms other algorithms and remains quite resilient to tightening of the constraints; for example, for  \fullversion{both}{all three} topologies, when $\kappa\geq0.8$ \texttt{LBSB} still obtains the maximum objective value, i.e., the same value as  with $\kappa=1.$  In fact, for \texttt{cycle}, the performance of  \texttt{LBSB} remains practically invariant, while all other algorithms deteriorate.

Moreover,  we also see in Fig.~\ref{congestion} that for moderate tightness of constraints (e.g., $\kappa\geq 0.8,$ \texttt{CR}) shows decent performance and outperforms \texttt{Greedy1} and \texttt{Greedy2}; however, tightening the constraints further, e.g., for $\kappa\leq 0.7,$ grossly affects the performance of  \texttt{CR}: the objective values decrease significantly, falling below that of the greedy algorithms. In fact, this is expected from Thm.~\ref{thm:approximationBound}. To see this, note that when $\kappa <  1/e \approx 0.36,$ for the capacities in \eqref{eq:cba_prime} we have $C_{ba}' < 0$, for all $(a,b) \in \mathcal{E}$. As a result, based on Thm.~\ref{thm:approximationBound}, for $\kappa < 1/e,$ the lower bound on the optimal objective of \texttt{CR} is non-existent, as the constraint set $\mathcal{D}_3$ (see \eqref{equ:Dprime}) is
an empty set. In other words, in this regime, \texttt{CR} comes with no guaranteed lower bound.

\section{Conclusion} \label{sec:conc}
We studied a new class of non-convex optimization problems for joint content placement and rate allocation in cache networks, and proposed solutions with optimality guarantees. Our solutions establish a foundation for several  possible future investigations. First, in the spirit of Kelly et al.~\cite{kelly1998rate}, studying distributed algorithms that converge to a KKT point, and providing similar guarantees as Thm.~\ref{thm:assymptotic}, is an important open question. \fullversion{}{Ideally, such algorithms would amount to protocols that adjust both rates as well as caching decisions in a way that leads to optimality guarantees.}  
Second, in both the centralized/offline setting we study here, as well as a distributed/adaptive setting, designing new rounding techniques for deterministic content placement is another open question. \fullversion{}{Finally, both providing lower bounds for \textsc{UtilityMax}, or devising algorithms with better/tighter optimality guarantees, are additional open questions in the context of our problem.}

%More broadly, \textsc{UtilityMax}, the problem studied in this paper, is a specific case of maximizing a concave function subject to  many DR-submodular constraints (in which DR-submodular functions are constrained to be greater than a threshold). It is possible that our analysis can be used for more general DR-submodular problems than the one we consider here. Given the broad array of areas where DR-submodular functions arise \cite{bian2017guaranteed}, this would be a very interesting direction to investigate.  For example, in the context of machine learning, one could be maximizing an experimental design objective, such as D-optimal design \cite{boyd2004convex}, subject to the requirement that samples selected cover topics, and coverage is above a threshold; both objectives are DR-submodular.  

\section*{Acknowledgment}

The authors gratefully acknowledge support from National Science Foundation grants NeTS-1718355 and CCF-1750539, and a research grant from American Tower Corp.

\appendices

\fullversion{}{\section{Probabilistic Content Placement Algorithm} \label{append:probcachealg}
Here we describe a distributed and random content placement algorithm \cite{geocaching2015, ioannidis2016addaptive}. Each node $v \in \mathcal{V}$ has access to its cache allocation strategy $[\hat{y}_{vi}]_{i \in \itemcat}$ obtained by solving Problem~\eqref{cachingproblem1}. Our goal at the beginning of the $t$-th  time period, is to use $[\hat{y}_{vi}]_{i \in \itemcat}$ as marginal cache probabilities at node $v$, and provide a probabilistic content placement (i.e., mapping of content items to the cache) $\boldsymbol{\hat{X}}(t) \triangleq [\hat{x}_{vi}(t)]_{v \in \mathcal{V}, i \in \itemcat}$. The probabilistic content placement must ensure $\hat{y}_{vi} = Pr \{ \hat{x}_{vi}(t) = 1 \}$, and that cache capacity constraint is satisfied exactly, i.e.,
\begin{equation*}
    \sum_{i} \hat{x}_{vi}(t) \leq c_v \quad \forall t>0.
\end{equation*} 

If we naively pick $\hat{x}_{vi}$ independently using a Bernoulli distribution with marginal probabilities $[\hat{y}_{vi}]_{i \in \itemcat}$, the capacity constraint is only satisfied in expectation, and at each time, the cache may store fewer or more items than its capacity. To construct a desirable placement,
consider a rectangle box of area $c_v \times 1$. For each $i \in \itemcat$, place a rectangle of length $\hat{y}_{vi}$ and height $1$ inside the box,
starting from the top left corner. If a rectangle does not fit in a row, cut it, and place the remainder in the row immediately below, starting again from the left. As $\sum_{i\in \itemcat} \hat{y}_{vi} \leq c_v$, this space-filling method is contained in the $c_v \times 1$ box. In order to randomly choose a
set of items, at the beginning of the $t$-th time period we pick uniformly at random a number $\tau^{(t)} \in [0, 1]$ and draw a vertical line located at that number which intersects the box area by no more than $c_v$ distinct items. The items are distinct because $y_{vi} \in [0,1]$. Moreover, the probability of appearance of item $i$ in a memory of size $c_v$ is exactly equal to $\hat{y}_{vi}$. A graphical explanation of this algorithm is presented in Fig.~\ref{fig:ProbCaching}. 
%In this example, $c_v = 3$, $\itemcat = \{1, 2, 3, 4\}$, and consider a $[\hat{y}_{vi}]_{i \in \itemcat}$ that fills the box as depicted. At the beginning of time period $t=1$, a random number $\tau^{(1)} \in [0,1]$ is chosen. The corresponding vertical line intersects rectangles of content 1, content 2, and content 4. Thus the set of $\{1,2,4\}$ is stored in the cache for the duration of time period $t=1$. At the beginning of time period $t=2$, another random number $\tau^{(1)} \in [0,1]$ is chosen, and the set $\{ 1,3,4 \}$ is stored in the cache for the duration of time period $t=2$.

\begin{figure} [!t]
  \centerline{\includegraphics[scale = 0.4]{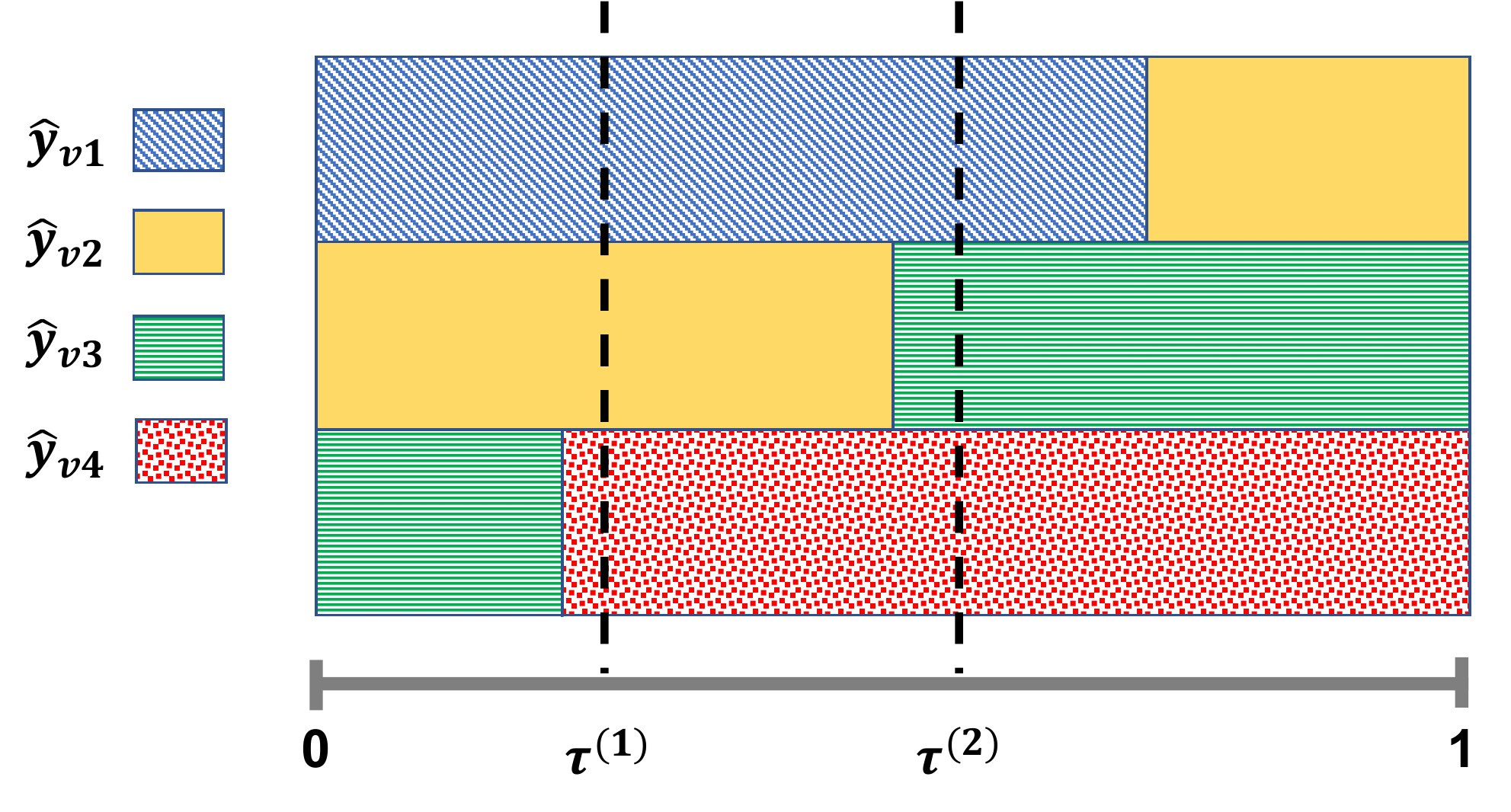}}
  \caption{A content placement satisfying cache capacity constraints \eqref{equ:cachecv}, using $[y_{vi}]_{i \in \itemcat}$ as marginal distribution. Here, $c_v = 3$ and $\itemcat = \{1, 2, 3, 4\}$. At the beginning of the first time period, a random number $\tau^{(1)} \in [0,1]$ is chosen. The corresponding vertical line intersects rectangles of item 1, item 2, and item 4. Thus, the set of $\{1,2,4\}$ is stored in the cache for the duration of this time period. At the beginning of the second time period, another random number $\tau^{(2)} \in [0,1]$ is chosen, and the set $\{ 1,3,4 \}$ is stored in the cache for the duration of this time period.}
  \label{fig:ProbCaching}
\end{figure}}

\section{Proof of Lemma \ref{lem:monotonedr}} \label{append:proofoflemmonotonedr}
By Eq.~\eqref{eq:gba}, $\frac{\partial^2 g_{ba}(\boldsymbol{Y},\boldsymbol{R})}{\partial y_{vi} \partial y_{v'i'}} \leq 0$, $\frac{\partial^2 g_{ba}(\boldsymbol{Y},\boldsymbol{R})}{\partial y_{vi} \partial r_n} \leq 0$, and $\frac{\partial^2 g_{ba}(\boldsymbol{Y},\boldsymbol{R})}{\partial r_{n} \partial r_{n'}} \leq 0$, for all $v, v' \in \mathcal{V}$, $i, i' \in \itemcat$, and $n, n' \in \requestset$. This proves the DR-submodularity of $g_{ba}(\cdot)$, for all $(b,a) \in \mathcal{E}$ (See \fullversion{\cite{bian2017guaranteed}}{Section~\ref{subsec:drsub}}). In addition, since $\frac{\partial g_{ba}(\boldsymbol{Y},\boldsymbol{R})}{\partial y_{vi}} \geq 0$, and $\frac{\partial g_{ba}(\boldsymbol{Y},\boldsymbol{R})}{\partial r_{n}} \geq 0$, for all $v \in \mathcal{V}, i \in \itemcat, \requestindex \in \requestset$, $g_{ba}(\cdot)$ are monotone, for all $(b,a) \in \mathcal{E}$. \qed

\fullversion{}{\section{A Trust-Region Algorithm for Simple Box Bounds} \label{append:trustregion}
We briefly describe a trust-region algorithm proposed by Conn et. al. \cite{trustRegion}. This algorithm is used to find a point satisfying the necessary optimality condition for the following problem:
\begin{align}  \label{prob:appendtrustpsi}
    \max_{\boldsymbol{x}} \quad &\Psi(\boldsymbol{x}) \\
    \text{s.t.} \quad &\boldsymbol{x} \in \mathcal{B}, \nonumber
\end{align}
where $\mathcal{B}$ is a region consisting of simple box constraint. The necessary optimality condition for Problem \eqref{prob:appendtrustpsi} can be written as
\begin{equation*}
    || P\big( \boldsymbol{x}~,~ \nabla_{\boldsymbol{x}} \Psi(\boldsymbol{x}) \big) || = 0.
\end{equation*}
where $P(\boldsymbol{a}~,~\boldsymbol{b}) \triangleq \boldsymbol{a} - \Pi_{\mathcal{B}}(\boldsymbol{a} + \boldsymbol{b})$, and $\Pi_{\mathcal{B}}(\boldsymbol{a})$ is the projection of vector $\boldsymbol{a}$ on to the box region $\mathcal{B}$. We use this algorithm to find a point satisfying the condition at line \ref{line:inner} of Alg.~\ref{alg:findKKT}. 

The trust-region algorithm starts from an initial point inside the box constraint $\mathcal{B}$. At the $k$-th iteration, it performs the projected gradient ascent; $s(t)$ is the projected gradient ascent direction, parameterized by the step size $t$. This step size $t$ is chosen in such that (a) it is within a \emph{trust region} defined with $\Delta_k$ and (b) it is the smallest local maximum of $Q_{\Psi}^{(k)}$, i.e., the second-order approximation of the objective $\Psi(\cdot)$ around the current point $\boldsymbol{x}_{k}$. Next, the algorithm assesses the improvement of the objective along the direction of $s_k$. If the improvement is above some threshold $\mu$, it accepts the direction $s_k$ and updates the solution  %as $\boldsymbol{x}_{k+1} \gets \boldsymbol{x}_k + s_k$
, and enlarges the trust region. Otherwise, the algorithm rejects the direction and does not update the solution, and shrinks the trust region. These steps are outlined in Alg.~\ref{alg:trustRegion}

\begin{algorithm}[!t] 
    \caption{Trust-Region Algorithm for Simple Box Constraints}
    \label{alg:trustRegion}
    \begin{algorithmic}[1] % The number tells where the line numbering should start
        \State Set parameters $\mu \in (0,1), \eta \in (\mu, 1), 0 < \gamma_0\leq \gamma_1\leq 1 \leq \gamma_2$
        \State Set the initial point $\boldsymbol{x}_{-1} \in \mathcal{B}$
        \State $k \gets -1$
        \While{Not converged}
        %\State $g_k\gets \nabla \Psi(\boldsymbol{x}_k)$
        \State $s(t) \triangleq \Pi_{\mathcal{B}}\left(\boldsymbol{x}_k + t  \nabla \Psi(\boldsymbol{x}_k) \right) - \boldsymbol{x}_k$ \label{stp:s_t}
        \State $t_k$ $\gets$ smallest local maximum of $Q_{\Psi}^{(k)}\big(s(t) + \boldsymbol{x}_{k}  \big)$
        
        subject to $s(t) \leq \Delta_k$ \label{stp:t_ck}
        \State $s_k\gets s(t_k)$
        \State $\rho_k\gets \frac{\Psi( \boldsymbol{x}_k + s_k)- \Psi( \boldsymbol{x}_k) }{  Q_{\Psi}^{(k)}( \boldsymbol{x}_{k} + s_k )-\Psi( \boldsymbol{x}_k)}$ \label{stp:improvement}
        \If{$\rho_k \geq \mu$}
        \State $\boldsymbol{x}_{k+1} \gets \boldsymbol{x}_k + s_k$\label{stp:updateSol}
        \If{$\rho_k \geq \eta$}
        \State $\Delta_{k+1} \gets \Delta \in  [\Delta_k, \gamma_2\Delta_k ]$\label{stp:delta1}
        \Else
        \State $\Delta_{k+1} \gets \Delta \in  [\gamma_1\Delta_k, \Delta_k ]$\label{stp:delta2}
       \EndIf
       \Else
       \State $\boldsymbol{x}_{k+1} \gets \boldsymbol{x}_k $\label{stp:NotupdateSol}
        \State $\Delta_{k+1} \gets \Delta \in  [\gamma_0\Delta_k, \gamma_1\Delta_k ]$\label{stp:delta3}
       
        \EndIf
        
        \EndWhile
    \end{algorithmic}
\end{algorithm}}

\fullversion{}{\section{Lagrangian Barrier With Simple Bounds}\label{append:LBSB}
We describe the details of the Lagrangian Barrier with Simple Bounds (in short LBSB) by Conn et. al. \cite{conn1997globally}. For simplicity and avoid repetition, let us define an index set for inequality constraints:
\begin{equation*}
    \lbsbconstset \triangleq \{ (b,a) \in \mathcal{E} \} \cup \{ v \in \mathcal{V} \}.
\end{equation*}
Thus, we can denote all inequality constraint \eqref{const:link}, \eqref{const:cache} with
\begin{align*}
    & c_\lbsbconstindex(\boldsymbol{Y},\boldsymbol{R}) \triangleq \\
    &\begin{cases} 
      g_{ba}(\boldsymbol{Y},\boldsymbol{R}) - \sum_{(i,p): (a,b) \in p} \bar{\lambda}_{(i,p)} + C_{ba} & \lbsbconstindex = (b,a) \in \mathcal{E} \\
      c_v - g_v(\boldsymbol{Y}) & \lbsbconstindex = v \in \mathcal{V}, 
   \end{cases} 
\end{align*}
for all $\lbsbconstindex \in \lbsbconstset$. 
In addition, we denote the box constraint set define by \eqref{const:boxfory}, \eqref{const:boxforlambda} with $\mathcal{B}$. 
Problem \eqref{cachingproblem1} is then written as
\begin{equation} \label{connstyleproblem}
\begin{aligned}
\max \quad & F(\boldsymbol{R}) \\
\textrm{s.t.} \quad & c_\lbsbconstindex(\boldsymbol{Y},\boldsymbol{R}) \geq 0 \quad \forall \lbsbconstindex \in \lbsbconstset\\
  &(\boldsymbol{Y},\boldsymbol{R}) \in \mathcal{B}. 
\end{aligned}
\end{equation}

LBSB defines the Lagrangian barrier function
\begin{equation*} 
    \Psi(\boldsymbol{Y}, \boldsymbol{R},\boldsymbol{\Sigma},\boldsymbol{s}) \triangleq F(\boldsymbol{R}) + \sum_{\lbsbconstindex \in \lbsbconstset} \sigma_\lbsbconstindex s_\lbsbconstindex \log(c_\lbsbconstindex(\boldsymbol{Y},\boldsymbol{R}) + s_\lbsbconstindex)
\end{equation*}
where the elements of the vector $\boldsymbol{\Sigma} \triangleq [ \sigma_\lbsbconstindex]_{\lbsbconstindex \in \lbsbconstset} \in \mathbb{R}_+^{|\lbsbconstset|}$ are the positive \emph{Lagrange multipliers estimates} associated with inequality constraints $c_\lbsbconstindex(\cdot) \geq 0 ~ \forall \lbsbconstindex \in \lbsbconstset$ in \eqref{connstyleproblem}. The elements of the vector $\boldsymbol{s} \triangleq [s_\lbsbconstindex]_{\lbsbconstindex \in \lbsbconstset} \in \mathbb{R}_+^{|\lbsbconstset|}$ are positive \emph{shifts}.

Writing the gradient of the Lagrangian barrier function, we have:
\begin{align} 
    &\nabla_{\boldsymbol{Y},\boldsymbol{R}} \Psi(\boldsymbol{Y}, \boldsymbol{R},\boldsymbol{\Sigma},\boldsymbol{s}) \nonumber \\
    & = \nabla_R F(\boldsymbol{R}) + \sum_{\lbsbconstindex \in \lbsbconstset} \frac{\sigma_\lbsbconstindex s_\lbsbconstindex}{c_\lbsbconstindex(\boldsymbol{Y},\boldsymbol{R}) + s_\lbsbconstindex}  \nabla_{\boldsymbol{Y},\boldsymbol{R}} c_\lbsbconstindex(\boldsymbol{Y},\boldsymbol{R}) \label{equ:firstorderestimates}
\end{align}
The values multiplied by the gradient of the constraints in \eqref{equ:firstorderestimates} are called \emph{first-order Lagrange multiplier approximations}:
\begin{equation*}
    \bar{\sigma}_\lbsbconstindex(\boldsymbol{Y}, \boldsymbol{R},\boldsymbol{\Sigma},\boldsymbol{s}) \triangleq \frac{\sigma_\lbsbconstindex s_\lbsbconstindex}{c_\lbsbconstindex(\boldsymbol{Y},\boldsymbol{R}) + s_\lbsbconstindex} 
\end{equation*}
The vector of first-order Lagrangian multiplier approximations is denoted by $\bar{\Sigma}(\boldsymbol{Y}, \boldsymbol{R},\boldsymbol{\Sigma},\boldsymbol{s}) \triangleq [ \bar{\sigma}_\lbsbconstindex(\boldsymbol{Y}, \boldsymbol{R},\boldsymbol{\Sigma},\boldsymbol{s})]_{\lbsbconstindex \in \lbsbconstset} \in \mathbb{R}^{|\lbsbconstset|}$, and is used in updating the Lagrangina multiplier estimates in LBSB. 

Consider the following problem:
\begin{align} 
    \max_{(\boldsymbol{Y}, \boldsymbol{R})} \quad &\Psi(\boldsymbol{Y}, \boldsymbol{R}, \boldsymbol{\Sigma}_k, \boldsymbol{s}_k) \label{prob:innersubappend} \\
    \text{s.t.} \quad &(\boldsymbol{Y}, \boldsymbol{R}) \in \mathcal{B}, \nonumber
\end{align}
where the values $\boldsymbol{\Sigma}_k$ and $\boldsymbol{s}_k$ are given. Then the following condition is the necessary optimality condition for Prob. \eqref{prob:innersubappend}:
\begin{equation*}
    \bigg|\bigg| P\bigg( (\boldsymbol{Y},\boldsymbol{R})~,~ \nabla_{\boldsymbol{Y},\boldsymbol{R}} \Psi(\boldsymbol{Y}, \boldsymbol{R}, \boldsymbol{\Sigma}_k, \boldsymbol{s}_k) \bigg) \bigg|\bigg| = 0,
\end{equation*}
where $P(\boldsymbol{a}~,~\boldsymbol{b}) \triangleq \boldsymbol{a} - \Pi_{\mathcal{B}}(\boldsymbol{a} + \boldsymbol{b})$, and $\Pi_{\mathcal{B}}(\boldsymbol{a})$ is the projection of vector $\boldsymbol{a}$ on to the box region $\mathcal{B}$. LBSB uses this condition as explained below.

After setting initial parameters, LBSB proceeds as follows. At the $k-$th iteration, it updates $(\boldsymbol{Y}_k, \boldsymbol{R}_k)$ by finding a point in $\mathcal{B}$, such that the following condition is satisfied:
\begin{equation*}
    \bigg|\bigg| P\bigg( (\boldsymbol{Y},\boldsymbol{R})~,~ \nabla_{\boldsymbol{Y},\boldsymbol{R}} \Psi(\boldsymbol{Y}, \boldsymbol{R}, \boldsymbol{\Sigma}_k, \boldsymbol{s}_k) \bigg) \bigg|\bigg| \leq \omega_k,
\end{equation*}
where parameter $\omega_k$ indicates the accuracy of the solution; when $\omega_k = 0$, the point $(\boldsymbol{Y}_k, \boldsymbol{R}_k)$ satisfies the necessary optimality conditions. 
LBSB is designed to be \emph{locally convergent} to a KKT point if the penalty parameter $\epsilon_k$ is fixed at a sufficiently
small value, and the Lagrange multipliers estimates are updated using their first order approximations. The following condition is to detect whether we are able to move from a globally convergent to a locally convergent regime, using a tolerance parameter $\delta_k$:
\begin{equation} \label{equ:localconvregappend}
\bigg|\bigg| \big[\frac{c_\lbsbconstindex(\boldsymbol{Y}_k,\boldsymbol{R}_k) \bar{\sigma}_\lbsbconstindex (\boldsymbol{Y}_k, \boldsymbol{R}_k,\boldsymbol{\Sigma}_k,\boldsymbol{s}_k)}{\sigma_{k,\lbsbconstindex}^{\alpha_\sigma}}  \big]_{\lbsbconstindex \in \lbsbconstset} \bigg|\bigg| \leq \delta_k    
\end{equation}

After updating $(\boldsymbol{Y}_k, \boldsymbol{R}_k)$, if the condition at \eqref{equ:localconvregappend} is satisfied, Lagrangian multiplier estimates are updated using their first order approximations $\bar{\Sigma}(\boldsymbol{Y}_k, \boldsymbol{R}_k,\boldsymbol{\Sigma}_k,\boldsymbol{s}_k)$, penalty parameter $\epsilon_{k+1}$ stays the same, $\omega_{k+1}$ and $\delta_{k+1}$ are updated using their previous value $\omega_k$ and $\delta_k$. Otherwise, it means that the penalty parameter is not small enough. Thus, the Lagrangian multipliers are not changed, penalty parameter is decreased, and $\omega_{k+1}$ and $\delta_{k+1}$ are updated using the initial parameters $\omega_s$ and $\delta_s$. The iterations stop if both conditions below are satisfied:
\begin{subequations}\label{eq:convcret12}
\begin{align}
    || P\big( (\boldsymbol{Y}_k,\boldsymbol{R}_k), \nabla_{ \boldsymbol{Y},\boldsymbol{R}} \Psi(\boldsymbol{Y}_k, \boldsymbol{R}_k,\boldsymbol{\Sigma}_k,\boldsymbol{s}_k) \big) || \leq \omega_* \label{equ:convcret1}\\
    || \big[ c_\lbsbconstindex(\boldsymbol{Y}_k,\boldsymbol{R}_k) \bar{\sigma}_\lbsbconstindex (\boldsymbol{Y}_k, \boldsymbol{R}_k,\boldsymbol{\Sigma}_k,\boldsymbol{s}_k) \big]_{\lbsbconstindex \in \lbsbconstset} || \leq \delta_*. \label{equ:convcret2}
\end{align}
\end{subequations}
If the convergence criteria in \eqref{equ:convcret1} and \eqref{equ:convcret2} is not met, the algorithm proceeds to the next iteration. At the beginning of the next iteration, LBSB computes the shifts using the obtained parameters. The algorithm can guarantee that the penalty parameter is sufficiently small by driving it to zero, while at the same time ensuring that the Lagrange multiplier estimates converge to the Lagrange multipliers of the KKT point. The procedure is fully described in Alg.~\ref{alg:LBSB}. For more details refer to the paper by Conn et. al. \cite{conn1997globally}.

\begin{algorithm}[!t] 
    \caption{Lagrangian Barrier with Simple Bounds (LBSB)}
    \label{alg:LBSB}
    \begin{algorithmic}[1] % The number tells where the line numbering should start
        \State Set strictly positive parameters $\delta_s$, $\omega_s$, $\alpha_\omega$, $\beta_\omega$, $\alpha_\delta$, $\beta_\delta$, $\alpha_\sigma \leq 1$, $\tau < 1$, $\omega_* \ll 1$, $\delta_* \ll 1$, such that $\alpha_\delta + (1 + \alpha_\sigma)^{-1}$
        \State Set penalty parameter $\epsilon_0 < 1$
        \State Set initial solution $(\boldsymbol{Y}_{-1},\boldsymbol{R}_{-1}) \in \mathcal{B}$
        \State Set initial vector of Lagrangian multiplier estimates $\boldsymbol{\Sigma}_0 = [\sigma_{\lbsbconstindex,0}]_{\lbsbconstindex \in \lbsbconstset}$, such that $c_\lbsbconstindex(\boldsymbol{Y}_{-1},\boldsymbol{R}_{-1}) + \epsilon_0 \sigma_{\lbsbconstindex,0}^{\alpha_\epsilon} >0$ for all $\lbsbconstindex \in \lbsbconstset$
        \State Set accuracy parameter $\omega_0 \gets \omega_s \epsilon_0^{\alpha_\omega}$ 
        \State Set tolerance parameter $\delta_0 \gets \delta_s \epsilon_0^{\alpha_\delta}$
        \State $k \gets -1$
        \Repeat
            \State $k \gets k + 1$
            \State Compute shifts $s_{k,\lbsbconstindex} = \epsilon_k \sigma_{k,\lbsbconstindex}^{\alpha_\sigma}$ for all $\lbsbconstindex \in \lbsbconstset$
            \State Find $(\boldsymbol{Y}_k,\boldsymbol{R}_k) \in \mathcal{B}$ such that:
            
            $ || P\big( (\boldsymbol{Y}_k,\boldsymbol{R}_k), \nabla_{\boldsymbol{Y}_k,\boldsymbol{R}_k} \Psi(\boldsymbol{Y}, \boldsymbol{R}, \boldsymbol{\Sigma}_k, \boldsymbol{s}_k) \big) || \leq \omega_k$.
            
            %\State Find $(\boldsymbol{Y}_k,\boldsymbol{R}_k)$ such that $|| P\big( (\boldsymbol{Y}_k,\boldsymbol{R}_k), \nabla_{ \boldsymbol{Y},\boldsymbol{R}} \Psi_k \big) || \leq \omega_k$, and $c_i(\boldsymbol{Y}_k,\boldsymbol{R}_k) > 0$ for all  $i \in \mathcal{I}$
            \If{$|| \big[\frac{c_\lbsbconstindex(\boldsymbol{Y}_k,\boldsymbol{R}_k) \bar{\sigma}_\lbsbconstindex (\boldsymbol{Y}_k, \boldsymbol{R}_k,\boldsymbol{\Sigma}_k,\boldsymbol{s}_k)}{\sigma_{k,\lbsbconstindex}^{\alpha_\sigma}}  \big]_{\lbsbconstindex \in \lbsbconstset} || \leq \delta_k$}
                \State $\Sigma_{k+1} \gets \bar{\Sigma}(\boldsymbol{Y}_k, \boldsymbol{R}_k,\boldsymbol{\Sigma}_k,\boldsymbol{s}_k)$
                \State $\epsilon_{k+1} \gets \epsilon_k$
                \State $\omega_{k+1} \gets \omega_k \epsilon_{k+1}^{\beta_{\omega}}$
                \State $\delta_{k+1} \gets \delta_k \epsilon_{k+1}^{\beta_{\delta}}$
            \Else
                \State $\Sigma_{k+1} \gets \Sigma_k$
                \State $\epsilon_{k+1} \gets \tau \epsilon_k$
                \State $\omega_{k+1} \gets \omega_s \epsilon_{k+1}^{\alpha_{\omega}}$
                \State $\delta_{k+1} \gets \delta_s \epsilon_{k+1}^{\alpha_{\delta}}$
            \EndIf
        \Until{$|| P\big( (\boldsymbol{Y}_k,\boldsymbol{R}_k), \nabla_{ \boldsymbol{Y},\boldsymbol{R}} \Psi(\boldsymbol{Y}_k, \boldsymbol{R}_k,\boldsymbol{\Sigma}_k,\boldsymbol{s}_k) \big) || \leq \omega_*$ and $|| \big[ c_\lbsbconstindex(\boldsymbol{Y}_k,\boldsymbol{R}_k) \bar{\sigma}_\lbsbconstindex (\boldsymbol{Y}_k, \boldsymbol{R}_k,\boldsymbol{\Sigma}_k,\boldsymbol{s}_k) \big]_{\lbsbconstindex \in \lbsbconstset} || \leq \delta_*$}
    \end{algorithmic}
\end{algorithm}}

\fullversion{}{\section{Constrained Optimization and Optimality Conditions}\label{append:constrainedopt}
Consider a general optimization problem of the following form:
\begin{subequations}\label{eq:general}
\begin{align}
    \text{Maximize} \quad &f(\boldsymbol{x})\\
    \text{Subj. to} \quad & g_j(\boldsymbol{x}) \geq 0 \quad j = 1, \dots, r \\
    & h_i(\boldsymbol{x}) = 0 \quad i = 1, \dots, m,
\end{align}
\end{subequations}
where $f : \mathbb{R}^n \rightarrow \mathbb{R}$, $g_j: \mathbb{R}^n \rightarrow \mathbb{R}, ~ j = 1, \dots, r$, and $h_i : \mathbb{R}^n \rightarrow \mathbb{R}, ~ i=1,\dots,m$ are continuously differentiable functions. Here we provide a statement of the most common first-order necessary optimality condition known as \emph{KKT} condition. First let us define a regular point:
\begin{defn}\emph{Regular point}:
If the gradient of equality constraints and active inequality constraints are linearly independent at $\boldsymbol{x}^*$, then $\boldsymbol{x}^*$ is called a regular point. 
\end{defn}
Now, let us formally define \emph{Karush-Kuhn-Tucker} (KKT) points which we use extensively throughout this paper and in stating optimality conditions: 
\begin{defn}\label{def:KKT}
A point $\boldsymbol{x}^*\in \mathbb{R}^n$ is called a KKT point for Problem~\eqref{eq:general} if there exist \emph{Lagrangian variables} $\boldsymbol{\nu}^* \in \mathbb{R}^m$ and $\boldsymbol{\mu}^* \in \mathbb{R}^r$, such that:
\begin{subequations}\label{KKTCondition}
\begin{align*}
    & \nabla_{x}L(\boldsymbol{x}^*, \boldsymbol{\mu}^*, \boldsymbol{\nu}^*) = 0 \\
    &h_i(\boldsymbol{x}^*) = 0 \quad \forall i \in \{1, \dots, m\} \\
    &g_j(\boldsymbol{x}^*) \geq 0 \quad \forall j \in \{1, \dots, r\} \\
    &\mu_j^* \geq 0 \quad \forall j \in \{1, \dots, r\} \\
    &\mu_j^* g_j(\boldsymbol{x}^*) = 0 \quad \forall j \in\{1, \dots, r\}.
\end{align*}
\end{subequations}
where $L(\boldsymbol{x}, \boldsymbol{\mu}, \boldsymbol{\nu}) \triangleq f(\boldsymbol{x}) + \sum_i \nu_i h_i(\boldsymbol{x}) + \sum_j \mu_j g_j(\boldsymbol{x})$ is called the \emph{Lagrangian function}.
\end{defn}

\begin{prop} (First-order KKT Necessary Conditions)
Let $\boldsymbol{x}^*$ be a local minimum of Problem~\eqref{eq:general}, and assume $\boldsymbol{x}^*$ is regular. Then $\boldsymbol{x}^*$ is a KKT point.
\end{prop}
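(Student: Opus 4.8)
This is the classical first-order KKT necessary condition for Problem~\eqref{eq:general}, so the plan is to follow the standard sequential unconstrained minimization (quadratic penalty) argument, as in Bertsekas~\cite{bertsekas1999nonlinear}. First I would reduce to the active inequalities: writing $\mathcal{A} \triangleq \{\, j : g_j(\boldsymbol{x}^*) = 0 \,\}$ for the active set, every inactive constraint has $g_j(\boldsymbol{x}^*) > 0$ and hence $g_j(\cdot) > 0$ on a whole neighborhood of $\boldsymbol{x}^*$; thus $\boldsymbol{x}^*$ remains a local optimum of the problem obtained by deleting the inactive inequalities, and assigning $\mu_j^* = 0$ for $j \notin \mathcal{A}$ will take care of complementary slackness for those. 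It therefore suffices to produce $\boldsymbol{\nu}^*$ and $\{\mu_j^*\}_{j \in \mathcal{A}}$ with $\mu_j^* \ge 0$ that make $\nabla_{\boldsymbol{x}} L$ vanish at $\boldsymbol{x}^*$.

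Next, fix $\epsilon > 0$ small enough that $\boldsymbol{x}^*$ is a feasible global optimum of Problem~\eqref{eq:general} restricted to the compact ball $\bar{B}(\boldsymbol{x}^*,\epsilon)$ and that the inactive $g_j$ stay positive there; for each $k \in \mathbb{N}$ let $\boldsymbol{x}^k$ minimize over that ball the penalized objective
\[
 F^k(\boldsymbol{x}) \triangleq -f(\boldsymbol{x}) + \tfrac{k}{2}\textstyle\sum_i h_i(\boldsymbol{x})^2 + \tfrac{k}{2}\textstyle\sum_{j \in \mathcal{A}} \big(\min\{0,g_j(\boldsymbol{x})\}\big)^2 + \tfrac12 \|\boldsymbol{x} - \boldsymbol{x}^*\|^2 ,
\]
which exists by compactness. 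Since $F^k(\boldsymbol{x}^*) = -f(\boldsymbol{x}^*)$ and hence $F^k(\boldsymbol{x}^k) \le -f(\boldsymbol{x}^*)$, a routine argument shows the penalty terms $\tfrac{k}{2}\big(\sum_i h_i(\boldsymbol{x}^k)^2 + \sum_{j \in \mathcal{A}}(\min\{0,g_j(\boldsymbol{x}^k)\})^2\big)$ stay bounded and therefore vanish in the limit, so every limit point of $\{\boldsymbol{x}^k\}$ is feasible; combining $F^k(\boldsymbol{x}^k) \le -f(\boldsymbol{x}^*)$ with the local optimality of $\boldsymbol{x}^*$ and the strictly convex anchoring term $\tfrac12\|\boldsymbol{x}-\boldsymbol{x}^*\|^2$ forces that limit point to equal $\boldsymbol{x}^*$, so $\boldsymbol{x}^k \to \boldsymbol{x}^*$. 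For all large $k$, then, $\boldsymbol{x}^k$ lies in the interior of $\bar{B}(\boldsymbol{x}^*,\epsilon)$, whence $\nabla F^k(\boldsymbol{x}^k) = 0$; setting $\nu_i^k \triangleq -k\,h_i(\boldsymbol{x}^k)$ and $\mu_j^k \triangleq -k\,\min\{0,g_j(\boldsymbol{x}^k)\} \ge 0$ and multiplying by $-1$ rewrites this as $\nabla f(\boldsymbol{x}^k) + \sum_i \nu_i^k \nabla h_i(\boldsymbol{x}^k) + \sum_{j \in \mathcal{A}} \mu_j^k \nabla g_j(\boldsymbol{x}^k) + (\boldsymbol{x}^* - \boldsymbol{x}^k) = 0$.

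The main obstacle, and the only place the regularity hypothesis is used, is proving that the multiplier sequences $\{\boldsymbol{\nu}^k\}$ and $\{\boldsymbol{\mu}^k\}$ are bounded. If they were not, I would normalize by $\beta_k \triangleq \|(\boldsymbol{\nu}^k,\boldsymbol{\mu}^k)\|_1 \to \infty$ and pass to a subsequence along which the normalized multipliers converge to a unit vector; since $\nabla f$ is continuous (hence bounded on the ball) and $\boldsymbol{x}^* - \boldsymbol{x}^k \to 0$, dividing the displayed identity by $\beta_k$ and letting $k \to \infty$ produces a nontrivial linear combination of $\{\nabla h_i(\boldsymbol{x}^*)\}_i \cup \{\nabla g_j(\boldsymbol{x}^*)\}_{j \in \mathcal{A}}$ equal to $\boldsymbol{0}$, contradicting the linear independence that defines a regular point. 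With boundedness established, extract a convergent subsequence $\boldsymbol{\nu}^k \to \boldsymbol{\nu}^*$, $\boldsymbol{\mu}^k \to \boldsymbol{\mu}^*$, take the limit in the identity to get $\nabla f(\boldsymbol{x}^*) + \sum_i \nu_i^* \nabla h_i(\boldsymbol{x}^*) + \sum_j \mu_j^* \nabla g_j(\boldsymbol{x}^*) = 0$ (appending $\mu_j^* = 0$ for $j \notin \mathcal{A}$); nonnegativity $\mu_j^* \ge 0$ survives the limit, primal feasibility is the hypothesis, and complementary slackness $\mu_j^* g_j(\boldsymbol{x}^*) = 0$ holds because $g_j(\boldsymbol{x}^*) = 0$ on $\mathcal{A}$ and $\mu_j^* = 0$ off it --- which is exactly Definition~\ref{def:KKT}. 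An alternative I would keep in reserve is the geometric route: show $\boldsymbol{x}^*$ admits no linearized improving feasible direction, observe that regularity is precisely the linear-independence constraint qualification so the linearized feasible cone coincides with the tangent cone, and then invoke Farkas'/Motzkin's lemma to extract the multipliers; there the constraint-qualification step plays the same role as the boundedness argument above.
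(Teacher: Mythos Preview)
Your argument is correct: this is the standard quadratic-penalty proof of the KKT necessary conditions (essentially Proposition~3.3.1 in Bertsekas~\cite{bertsekas1999nonlinear}), and every step --- the reduction to the active set, the anchored penalty minimization on a compact ball, the convergence $\boldsymbol{x}^k \to \boldsymbol{x}^*$, the definition of the multipliers from the stationarity of $F^k$, and the use of regularity (LICQ) to bound the multiplier sequence --- is sound. Note, however, that the paper does \emph{not} supply its own proof of this proposition: it is stated in the appendix purely as background, with the reader referred to Bertsekas~\cite{bertsekas1999nonlinear} for details. So there is no ``paper's proof'' to compare against; your write-up simply fills in the textbook argument that the authors chose to cite rather than reproduce.
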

Using the second derivatives of the Lagrangian function, we can state the sufficient condition for optimality.
\begin{prop} (Second-order Sufficiency Conditions)
Assume $f$, $g_j, ~\forall j=1, \dots, r$, and $h_i,~\forall i=1, \dots,m$ are twice continuously differentiable, and $\boldsymbol{x}^*$ is a KKT point with corresponding Lagrange variables $\boldsymbol{\mu}^*$ and $\boldsymbol{\nu}^*$. In addition let 
\begin{equation*}
    \boldsymbol{V}^T \nabla^2_{xx}L(\boldsymbol{x}^*, \boldsymbol{\mu}^*, \boldsymbol{\nu}^*) \boldsymbol{V} <0,
\end{equation*}
for all $\boldsymbol{V} \neq 0$ such that
\begin{align*}
    \nabla h_i(\boldsymbol{x}^*)^T \boldsymbol{V} = 0, ~ \forall i = 1, \dots, m, \\
    \nabla g_j(\boldsymbol{x}^*)^T \boldsymbol{V} = 0, ~ \forall j \in A(\boldsymbol{x}^*).
\end{align*}
In addition assume we have strict complementary slackness, i.e.,
\begin{equation*}
    \mu_j > 0 \quad \forall j \in A(\boldsymbol{x}^*)
\end{equation*}
where $A(\boldsymbol{x}^*)$ is set of active inequality constraints in $\boldsymbol{x}^*$, i.e.,
\begin{equation*}
    A(\boldsymbol{x}^*) \triangleq \{j~|~g_j(\boldsymbol{x}^*) = 0 \}
\end{equation*}
Then $\boldsymbol{x}^*$ is a strict local maximum of Problem~\eqref{eq:general}.
\end{prop}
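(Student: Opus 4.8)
The plan is to argue by contradiction, along the classical lines for second-order sufficient conditions in nonlinear programming (cf.\ Bertsekas~\cite{bertsekas1999nonlinear}). Suppose $\boldsymbol{x}^*$ were \emph{not} a strict local maximum of Problem~\eqref{eq:general}. Then there is a sequence of feasible points $\boldsymbol{x}_k \to \boldsymbol{x}^*$ with $\boldsymbol{x}_k \neq \boldsymbol{x}^*$ and $f(\boldsymbol{x}_k) \geq f(\boldsymbol{x}^*)$. Writing $\boldsymbol{x}_k = \boldsymbol{x}^* + t_k \boldsymbol{d}_k$ with $t_k = \|\boldsymbol{x}_k - \boldsymbol{x}^*\| \to 0^+$ and $\|\boldsymbol{d}_k\| = 1$, and passing to a subsequence, we may assume $\boldsymbol{d}_k \to \boldsymbol{d}$ with $\|\boldsymbol{d}\| = 1$. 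First I would establish that $\boldsymbol{d}$ lies in the critical cone: feasibility gives $h_i(\boldsymbol{x}_k) = 0$ and $g_j(\boldsymbol{x}_k) \geq 0$, so first-order Taylor expansion about $\boldsymbol{x}^*$, division by $t_k$, and letting $k \to \infty$ yield $\nabla h_i(\boldsymbol{x}^*)^T\boldsymbol{d} = 0$ for all $i$ and $\nabla g_j(\boldsymbol{x}^*)^T\boldsymbol{d} \geq 0$ for all $j \in A(\boldsymbol{x}^*)$; likewise $f(\boldsymbol{x}_k) \geq f(\boldsymbol{x}^*)$ gives $\nabla f(\boldsymbol{x}^*)^T\boldsymbol{d} \geq 0$.

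The second step uses the KKT stationarity condition, which (with the sign convention of Definition~\ref{def:KKT}) reads $\nabla f(\boldsymbol{x}^*) + \sum_i \nu_i^* \nabla h_i(\boldsymbol{x}^*) + \sum_j \mu_j^* \nabla g_j(\boldsymbol{x}^*) = 0$. Taking the inner product with $\boldsymbol{d}$, using complementary slackness ($\mu_j^* = 0$ for $j \notin A(\boldsymbol{x}^*)$) and $\nabla h_i(\boldsymbol{x}^*)^T\boldsymbol{d} = 0$, gives $\nabla f(\boldsymbol{x}^*)^T\boldsymbol{d} = -\sum_{j \in A(\boldsymbol{x}^*)} \mu_j^*\,\nabla g_j(\boldsymbol{x}^*)^T\boldsymbol{d} \leq 0$, because each $\mu_j^* > 0$ by strict complementarity and each $\nabla g_j(\boldsymbol{x}^*)^T\boldsymbol{d} \geq 0$. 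Hence $\nabla f(\boldsymbol{x}^*)^T\boldsymbol{d} = 0$, which forces every nonnegative summand $\mu_j^*\,\nabla g_j(\boldsymbol{x}^*)^T\boldsymbol{d}$ to vanish, so $\nabla g_j(\boldsymbol{x}^*)^T\boldsymbol{d} = 0$ for all $j \in A(\boldsymbol{x}^*)$. Thus $\boldsymbol{d}$ is a nonzero vector in precisely the subspace on which the second-order hypothesis is imposed.

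Finally I would evaluate the Lagrangian $L(\boldsymbol{x},\boldsymbol{\mu}^*,\boldsymbol{\nu}^*) = f(\boldsymbol{x}) + \sum_i \nu_i^* h_i(\boldsymbol{x}) + \sum_j \mu_j^* g_j(\boldsymbol{x})$ along the sequence. Using $h_i(\boldsymbol{x}_k) = 0$, $\mu_j^* = 0$ for inactive $j$, and $\mu_j^* g_j(\boldsymbol{x}_k) \geq 0$ for active $j$, we obtain $L(\boldsymbol{x}_k,\boldsymbol{\mu}^*,\boldsymbol{\nu}^*) \geq f(\boldsymbol{x}_k) \geq f(\boldsymbol{x}^*) = L(\boldsymbol{x}^*,\boldsymbol{\mu}^*,\boldsymbol{\nu}^*)$. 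A second-order Taylor expansion of $L$ in $\boldsymbol{x}$ about $\boldsymbol{x}^*$, together with $\nabla_{\boldsymbol{x}} L(\boldsymbol{x}^*,\boldsymbol{\mu}^*,\boldsymbol{\nu}^*) = 0$, gives $L(\boldsymbol{x}_k,\boldsymbol{\mu}^*,\boldsymbol{\nu}^*) - L(\boldsymbol{x}^*,\boldsymbol{\mu}^*,\boldsymbol{\nu}^*) = \tfrac{1}{2} t_k^2\, \boldsymbol{d}_k^T \nabla_{\boldsymbol{x}\boldsymbol{x}}^2 L(\boldsymbol{x}^*,\boldsymbol{\mu}^*,\boldsymbol{\nu}^*)\,\boldsymbol{d}_k + o(t_k^2)$. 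Dividing by $\tfrac{1}{2}t_k^2$ and letting $k \to \infty$ (using continuity of the Hessian and $\boldsymbol{d}_k \to \boldsymbol{d}$) yields $\boldsymbol{d}^T \nabla_{\boldsymbol{x}\boldsymbol{x}}^2 L(\boldsymbol{x}^*,\boldsymbol{\mu}^*,\boldsymbol{\nu}^*)\,\boldsymbol{d} \geq 0$, contradicting the hypothesis that this quadratic form is strictly negative for every nonzero $\boldsymbol{d}$ in the critical subspace. Hence no such sequence exists, and $\boldsymbol{x}^*$ is a strict local maximum.

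The main obstacle --- and the one place where strict complementary slackness is essential --- is the second step: upgrading $\nabla g_j(\boldsymbol{x}^*)^T\boldsymbol{d} \geq 0$ to $\nabla g_j(\boldsymbol{x}^*)^T\boldsymbol{d} = 0$ for active $j$, so that the limiting direction $\boldsymbol{d}$ genuinely lies in the subspace where negative definiteness is assumed; everything else is routine first- and second-order Taylor expansion plus a compactness argument for the unit vectors $\boldsymbol{d}_k$. A secondary point to handle carefully is sign conventions, since the excerpt states the first-order necessary condition for a local \emph{minimum} but asserts a local \emph{maximum} here; I would fix the Lagrangian sign as in Definition~\ref{def:KKT} so that maximizing subject to $g_j \geq 0$ is consistent with $\mu_j^* \geq 0$ throughout.
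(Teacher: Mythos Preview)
The paper does not actually prove this proposition: it is stated in Appendix~\ref{append:constrainedopt} as a standard background result, with the reader referred to Bertsekas~\cite{bertsekas1999nonlinear} for details. Your argument is the classical textbook proof of the second-order sufficient conditions (contradiction via a normalized feasible sequence, identification of the limiting direction with the critical subspace using strict complementarity, and a second-order Taylor expansion of the Lagrangian), and it is correct as written. There is nothing to compare against; your proposal simply supplies the standard proof that the paper omits.
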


For further information on other forms of necessary and sufficient conditions for optimality, refer to the book by Bertsekas \cite{bertsekas1999nonlinear}.}

\section{Proof of Lemma~\ref{lem:kkt}} \label{append:proofOflemKKT}
Here we show that the regularity of a point $(\hat{\boldsymbol{Y}},\hat{\boldsymbol{R}})$ is equivalent to the assumption stated in Theorem 4.4 of Conn et. al. \cite{conn1997globally}. We divide the variables $(\hat{\boldsymbol{Y}}, \hat{\boldsymbol{R}})$ into two distinct class $\mathcal{F}$ and $\mathcal{F}'$, such that the variables in $\mathcal{F}'$ are equal to their upper or lower bound and the variables in $\mathcal{F}$ are strictly between their upper or lower bound. As a result, we have exactly $|\mathcal{F}'|$ active bounding constraints. We denote by $\mathcal{A}$ the set of active link and cache constraints \eqref{const:link}, \eqref{const:cache}. Thus, we can decompose the Jacobian matrix for the active constraints at point $(\hat{\boldsymbol{Y}}, \hat{\boldsymbol{R}})$ as $
J = \left[\begin{smallmatrix}
J_{[\mathcal{A} ~ \mathcal{F}]} & J_{[\mathcal{A} ~ \mathcal{F}']}\\
\boldsymbol{0}_{|\mathcal{F}'| \times |\mathcal{F}|} & Q_{|\mathcal{F}'| \times |\mathcal{F}'|}
\end{smallmatrix}\right]$.
We denote by $M_{[s_1 ~ s_2]}$ is a sub-matrix of the matrix $M$ where rows are picked according to set $s_1$ and columns are picked according to set $s_2$. The last $|\mathcal{F}'|$ rows correspond to active box constraints \eqref{const:boxfory}, \eqref{const:boxforlambda}. It can be easily seen that $Q_{|\mathcal{F}'| \times |\mathcal{F}'|}$ is a diagonal matrix with elements of diagonal being $+1$ or $-1$ depending on whether the variable is at its upper bound or lower bound. If $(\hat{\boldsymbol{Y}}, \hat{\boldsymbol{R}})$ is a regular point, then $J$ is full rank at $(\hat{\boldsymbol{Y}}, \hat{\boldsymbol{R}})$. Hence, it can be seen that $J_{[\mathcal{A} ~ \mathcal{F}]}$ is also full rank, which is exactly the assumption in Theorem 4.4 of Conn et. al. \cite{conn1997globally}. \qed

\fullversion{\section{Proof of Lemma~\ref{lem:diff}} \label{append:ProofOfLemDiff}
By definition, the KKT necessary conditions for optimality hold at $(\hat{\boldsymbol{Y}},\hat{\boldsymbol{R}})$. Hence, there exist Lagrange multipliers $[\hat{\mu}_{ba}]_{(b,a) \in \mathcal{E}}$ associated with \eqref{const:link}, $[\hat{\gamma}_v]_{v \in \mathcal{V}}$ associated with \eqref{const:cache}, $[\hat{\xi}_{vi}]_{v \in \mathcal{V}, i \in \itemcat}$, $[\hat{\xi}_{vi}']_{v \in \mathcal{V}, i \in \itemcat}$ associated with \eqref{const:boxfory}, $[\hat{\eta}_{(i,p)}]_{(i,p) \in \requestset}$,  $[\hat{\eta}_{(i,p)}']_{(i,p) \in \requestset}$ associated with \eqref{const:boxforlambda}. 
Due to the concavity of $F$, we can write
$F(\hat{\boldsymbol{R}}) \geq F(\boldsymbol{R}^*) - \nabla _{\boldsymbol{Y}, \boldsymbol{R}} F(\hat{\boldsymbol{R}})^T
\big[(\boldsymbol{Y}^*,\boldsymbol{R}^*)- (\hat{\boldsymbol{Y}},\hat{\boldsymbol{R}}) \big]$. After applying the first order necessary condition and complementary slackness, we can write
$F(\hat{\boldsymbol{R}}) \geq  F(\boldsymbol{R}^*) + \sum_{(b,a) \in \mathcal{E}} \hat{\mu}_{ba} \nabla_{\boldsymbol{Y},\boldsymbol{R}} g_{ba}(\hat{\boldsymbol{Y}}, \hat{\boldsymbol{R}})^T \big[(\boldsymbol{Y}^*,\boldsymbol{R}^*)- (\hat{\boldsymbol{Y}},\hat{\boldsymbol{R}}) \big]$. As stated in Lemma \ref{lem:monotonedr}, functions $g_{ba}(\cdot)$, for all $(b,a) \in \mathcal{E}$ are monotone DR-submodular. Therefore, we can use the following Lemma from Bian et al. \cite{bian2017dr}:
\begin{lem}[Bian et al. \cite{bian2017dr}] \label{lem:drlemproof}
For any differentiable DR-submodular function $f : \mathcal{X} \rightarrow \mathbb{R}$ and any two points $\boldsymbol{a}, \boldsymbol{b}$ in $\mathcal{X}$, we have
\begin{equation*}
    (\boldsymbol{b}-\boldsymbol{a})^T \nabla f(\boldsymbol{a}) \geq f(\boldsymbol{a} \vee \boldsymbol{b}) + f( \boldsymbol{a} \wedge \boldsymbol{b}) - 2f(\boldsymbol{a}),
\end{equation*}
where $\vee$ and $\wedge$ are coordinate-wise maximum and minimum operations, respectively.
\end{lem}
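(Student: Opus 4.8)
\noindent\textbf{Proof proposal for Lemma~\ref{lem:drlemproof}.} The plan is to split the displacement $\boldsymbol{b}-\boldsymbol{a}$ into a coordinate-wise nonnegative part and a coordinate-wise nonpositive part and to handle each with a one-dimensional concavity estimate. Concretely, I would set $\boldsymbol{p}\triangleq(\boldsymbol{b}-\boldsymbol{a})\vee\boldsymbol{0}$ and $\boldsymbol{q}\triangleq(\boldsymbol{b}-\boldsymbol{a})\wedge\boldsymbol{0}$, so that $\boldsymbol{p}\geq\boldsymbol{0}\geq\boldsymbol{q}$, $\boldsymbol{p}+\boldsymbol{q}=\boldsymbol{b}-\boldsymbol{a}$, $\boldsymbol{a}+\boldsymbol{p}=\boldsymbol{a}\vee\boldsymbol{b}$, and $\boldsymbol{a}+\boldsymbol{q}=\boldsymbol{a}\wedge\boldsymbol{b}$. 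In our application $\mathcal{X}$ is the full coordinate space on which $g_{ba}$ is defined, so $\boldsymbol{a}\vee\boldsymbol{b}$, $\boldsymbol{a}\wedge\boldsymbol{b}$, and the segments joining them to $\boldsymbol{a}$ all lie in $\mathcal{X}$ and $\nabla f$ is defined along them.

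The key fact I would invoke is that a differentiable DR-submodular $f$ is concave along any coordinate-wise monotone direction: if $\boldsymbol{d}$ has all entries of one sign, then $\frac{d^2}{dt^2}f(\boldsymbol{x}+t\boldsymbol{d})=\sum_{i,j}d_id_j\,\frac{\partial^2 f}{\partial x_i\partial x_j}\leq 0$, since each $d_id_j\geq 0$ and each second partial is $\leq 0$ by the second-order characterization in Table~\ref{table:drprop}. Applying this to $t\mapsto f(\boldsymbol{a}+t\boldsymbol{p})$ on $[0,1]$ and using the first-order bound for concave functions (i.e.\ $\phi(1)-\phi(0)\leq\phi'(0)$) gives $f(\boldsymbol{a}\vee\boldsymbol{b})-f(\boldsymbol{a})\leq\boldsymbol{p}^{T}\nabla f(\boldsymbol{a})$; applying it to $t\mapsto f(\boldsymbol{a}+t\boldsymbol{q})$ gives $f(\boldsymbol{a}\wedge\boldsymbol{b})-f(\boldsymbol{a})\leq\boldsymbol{q}^{T}\nabla f(\boldsymbol{a})$. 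Adding the two and using $\boldsymbol{p}+\boldsymbol{q}=\boldsymbol{b}-\boldsymbol{a}$ yields $f(\boldsymbol{a}\vee\boldsymbol{b})+f(\boldsymbol{a}\wedge\boldsymbol{b})-2f(\boldsymbol{a})\leq(\boldsymbol{b}-\boldsymbol{a})^{T}\nabla f(\boldsymbol{a})$, which is exactly the claim. A differentiation-free variant replaces the concavity step by integrating the antitone-gradient form of DR-submodularity (equivalent to Definition~\ref{def:drproperty}): $f(\boldsymbol{a}\vee\boldsymbol{b})-f(\boldsymbol{a})=\int_0^1\boldsymbol{p}^{T}\nabla f(\boldsymbol{a}+t\boldsymbol{p})\,dt\leq\boldsymbol{p}^{T}\nabla f(\boldsymbol{a})$, since $\boldsymbol{a}+t\boldsymbol{p}\geq\boldsymbol{a}$ forces $\nabla f(\boldsymbol{a}+t\boldsymbol{p})\leq\nabla f(\boldsymbol{a})$ coordinate-wise while $\boldsymbol{p}\geq\boldsymbol{0}$; the $\wedge$-term is symmetric using $\boldsymbol{a}+t\boldsymbol{q}\leq\boldsymbol{a}$ and $\boldsymbol{q}\leq\boldsymbol{0}$.

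I expect the only real care needed — and the step most likely to trip one up when writing it out — is the passage from the coordinate-wise definition of DR-submodularity (Definition~\ref{def:drproperty}) to concavity of $f$ along \emph{every} monotone direction $\boldsymbol{d}$ simultaneously; this is immediate under twice differentiability via the Hessian sign condition, but avoiding a $C^2$ assumption requires routing the argument through the first-order/antitone-gradient characterization. The remaining points are bookkeeping: keeping the signs in the $\boldsymbol{p}/\boldsymbol{q}$ split consistent, and, for a fully general statement, requiring $\mathcal{X}$ to be a solid sublattice (a box suffices) so that the relevant points and segments stay in the domain — a non-issue in the present paper.
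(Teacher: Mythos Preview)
Your argument is correct. The decomposition $\boldsymbol{b}-\boldsymbol{a}=\boldsymbol{p}+\boldsymbol{q}$ with $\boldsymbol{a}+\boldsymbol{p}=\boldsymbol{a}\vee\boldsymbol{b}$ and $\boldsymbol{a}+\boldsymbol{q}=\boldsymbol{a}\wedge\boldsymbol{b}$ is exactly right, and the one-dimensional concavity of $t\mapsto f(\boldsymbol{a}+t\boldsymbol{d})$ along sign-definite directions $\boldsymbol{d}$ (via the Hessian sign condition, or equivalently via the antitone-gradient integral you describe) cleanly delivers the two inequalities whose sum is the claim.

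That said, there is nothing to compare against in the present paper: Lemma~\ref{lem:drlemproof} is quoted verbatim from Bian et al.\ and is invoked without proof inside the argument for Lemma~\ref{lem:diff}. So you have supplied a self-contained proof where the paper simply imports the result. Your write-up is in fact essentially the standard proof one finds in the DR-submodular literature; the only caveat you already flag---that the $C^2$ route via Table~\ref{table:drprop} tacitly assumes twice differentiability, whereas the antitone-gradient variant needs only $C^1$---is the one point worth stating explicitly if you want the lemma at its advertised generality (``any differentiable DR-submodular $f$''). For the application in this paper the distinction is moot, since each $g_{ba}$ is a polynomial.
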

By Lemma~\ref{lem:drlemproof}, we have $F(\hat{\boldsymbol{R}}) \geq F(\boldsymbol{R}^*) + \sum_{(b,a) \in \mathcal{E}} \hat{\mu}_{ba} \big(g_{ba} (\hat{\boldsymbol{Y}} \vee \boldsymbol{Y}^*, \hat{\boldsymbol{R}} \vee \boldsymbol{R}^* ) +  g_{ba}(\hat{\boldsymbol{Y}} \wedge \boldsymbol{Y}^*, \hat{\boldsymbol{R}} \wedge \boldsymbol{R}^* ) - 2g_{ba}(\hat{\boldsymbol{Y}}, \hat{\boldsymbol{R}} ) \big)$.
By Lemma \ref{lem:monotonedr}, $g_{ba}(\cdot)$ are monotone and we know that they are positive for all $(b,a) \in \mathcal{E}$. Hence, $g_{ba} (\hat{\boldsymbol{Y}} \vee \boldsymbol{Y}^*, \hat{\boldsymbol{R}} \vee \boldsymbol{R}^* ) \geq g_{ba} (\boldsymbol{Y}^*, \boldsymbol{R}^* )$ and $g_{ba}(\hat{\boldsymbol{Y}} \wedge \boldsymbol{Y}^*, \hat{\boldsymbol{R}} \wedge \boldsymbol{R}^* ) \geq 0$ for all $(b,a) \in \mathcal{E}$. Therefore, $F(\hat{\boldsymbol{R}}) \geq F(\boldsymbol{R}^*) + \sum_{(b,a) \in \mathcal{E}} \hat{\mu}_{ba} \big(g_{ba} (\boldsymbol{Y}^*, \boldsymbol{R}^* ) - 2g_{ba}(\hat{\boldsymbol{Y}}, \hat{\boldsymbol{R}} ) \big) \overset{(**)}{\geq} F(\boldsymbol{R}^*) - \sum_{(b,a)}\hat{\mu}_{ba} (\sum_{(i,p) : (a,b) \in p} \bar{\lambda}_{(i,p)} - C_{ba}),$ where $(**)$ is due to complementary slackness. \qed}{\section{Proof of Lemma~\ref{lem:diff}} \label{append:ProofOfLemDiff}
Since $(\hat{\boldsymbol{Y}},\hat{\boldsymbol{R}})$ is a KKT point, KKT necessary conditions for optimality hold at $(\hat{\boldsymbol{Y}},\hat{\boldsymbol{R}})$. Hence, there exists Lagrange multipliers $[\hat{\mu}_{ba}]_{(b,a) \in \mathcal{E}}$ associated with \eqref{const:link}, $[\hat{\gamma}_v]_{v \in \mathcal{V}}$ associated with \eqref{const:cache}, $[\hat{\xi}_{vi}]_{v \in \mathcal{V}, i \in \itemcat}$, $[\hat{\xi}_{vi}']_{v \in \mathcal{V}, i \in \itemcat}$ associated with \eqref{const:boxfory}, $[\hat{\eta}_{(i,p)}]_{(i,p) \in \requestset}$,  $[\hat{\eta}_{(i,p)}']_{(i,p) \in \requestset}$ associated with \eqref{const:boxforlambda}. To be concise and avoid repetitions, we put our variables and Lagrange multipliers corresponding to box constraints \eqref{const:boxfory}, \eqref{const:boxforlambda} in 1-column vectors as follows:
\begin{align*}
&\begin{bmatrix}
\hat{\boldsymbol{Y}}\\
\hat{\boldsymbol{R}}
\end{bmatrix}_{(|\mathcal{V}||\itemcat| + |\requestset|) \times 1}, \quad
\begin{bmatrix}
\boldsymbol{Y}^*\\
\boldsymbol{R}^*
\end{bmatrix}_{(|\mathcal{V}||\itemcat| + |\requestset|) \times 1}, \\
&\hat{\boldsymbol{\eta}} \triangleq \begin{bmatrix}
\boldsymbol{0} _{|\mathcal{V}||\itemcat| \times 1} \\
[\hat{\eta}_{\requestindex}]_{\requestindex \in \requestset}
\end{bmatrix}, \quad
\hat{\boldsymbol{\eta}}' \triangleq \begin{bmatrix}
\boldsymbol{0} _{|\mathcal{V}||\itemcat| \times 1} \\
[\hat{\eta}_{\requestindex}']_{\requestindex \in \requestset}
\end{bmatrix}, \\
&\hat{\boldsymbol{\xi}} \triangleq \begin{bmatrix}
[\hat{\xi}_{vi}] _{v \in \mathcal{V}, i \in \itemcat} \\
\boldsymbol{0}_{|\requestset| \times 1}
\end{bmatrix}, \quad
\hat{\boldsymbol{\xi}}' \triangleq \begin{bmatrix}
[\hat{\xi}_{vi}'] _{v \in \mathcal{V}, i \in \itemcat} \\
\boldsymbol{0}_{|\requestset| \times 1}
\end{bmatrix}. %\\
%&\hat{\boldsymbol{\nu}} \overset{\Delta}{=} \begin{bmatrix}
%[\hat{\nu}_{vi}] _{~1~:~|\mathcal{V}| \times |\itemcat|~} \\
%[0]_{(i,p) \in \requestset}
%\end{bmatrix}.
\end{align*}
Here the first $|\mathcal{V}| |\itemcat|$ columns correspond to the variables $\hat{\boldsymbol{Y}}$ and the next $|\requestset|$ correspond to the variables $\hat{\boldsymbol{R}}$. Now we write the KKT conditions as:
\begin{subequations}
\begin{align}
    & \nabla_{\boldsymbol{Y},\boldsymbol{R}} F(\boldsymbol{R}) +\sum_{(b,a) \in \mathcal{E}}{\hat{\mu}_{ba} \nabla_{\boldsymbol{Y},\boldsymbol{R}} g_{ba}(\hat{\boldsymbol{Y}}, \hat{\boldsymbol{R}})} - \sum_{v \in \mathcal{V}} \hat{\gamma}_v \nabla_{\boldsymbol{Y},\boldsymbol{R}} g_{v}(\hat{\boldsymbol{Y}}) \nonumber \\
    & + \hat{\boldsymbol{\eta}} - \hat{\boldsymbol{\eta}'}+ \hat{\boldsymbol{\xi}} - \hat{\boldsymbol{\xi}'}  = \boldsymbol{0} \label{eq:gradxr}\\
    &\hat{\mu}_{ba} \big(g_{ba}(\hat{\boldsymbol{Y}}, \hat{\boldsymbol{R}} ) -  (\sum_{(i,p) : (a,b) \in p} \bar{\lambda}_{(i,p)} - C_{ba}) \big) = 0, \quad \hat{\mu}_{ba} \geq 0\label{eq:complslack1} \\
    &\hat{\gamma}_v (g_v(\hat{\boldsymbol{Y}}) - c_v) = 0 , \quad \hat{\gamma}_v \geq 0, \quad \forall v \in \mathcal{V} \label{eq:complslack2}\\
    &\hat{\xi}_{vi} \hat{y}_{vi} = 0, \quad \hat{\xi}_{vi} \geq 0 \quad \forall v \in \mathcal{V}, \forall i \in \itemcat \label{eq:complslack3}\\
    &\hat{\xi}_{vi}' (1-\hat{y}_{vi}) = 0, \quad \hat{\xi}_{vi}' \geq 0, \quad \forall v \in \mathcal{V}, \forall i \in \itemcat \label{eq:complslack4}\\
    &\hat{\eta}_{\requestindex} \hat{r}_{\requestindex} = 0, \quad \hat{\eta}_\requestindex \geq 0, \quad \forall n \in \requestset \label{eq:complslack5}\\
    &\hat{\eta}_{\requestindex}' (\bar{\lambda}_{\requestindex}-\hat{r}_{\requestindex}) = 0, \quad \hat{\eta}_{\requestindex}' \geq 0, \quad \forall n \in \requestset. \label{eq:complslack6}
    %& \hat{\nu}_{vi} = 0 \quad \forall i \in \itemcat,\quad \forall v \notin \mathcal{S}_i
\end{align}
\end{subequations}
Due to the concavity of $F$, we can write
\begin{align*}
& \nabla _{\boldsymbol{Y}, \boldsymbol{R}} F(\hat{\boldsymbol{R}})^T
\begin{bmatrix}
      \boldsymbol{Y}^* - \hat{\boldsymbol{Y}} \\ \boldsymbol{R}^* - \hat{\boldsymbol{R}}
    \end{bmatrix} \\
 &= \nabla_{\boldsymbol{R}} F(\hat{\boldsymbol{R}})^T(\boldsymbol{R}^*-\hat{\boldsymbol{R}}) \geq F(\boldsymbol{R}^*) - F(\hat{\boldsymbol{R}}). 
\end{align*}
Hence,
\begin{align*}
F(\hat{\boldsymbol{R}}) &\geq F(\boldsymbol{R}^*) - \nabla _{\boldsymbol{Y}, \boldsymbol{R}} F(\hat{\boldsymbol{R}})^T
\begin{bmatrix}
      \boldsymbol{Y}^* - \hat{\boldsymbol{Y}} \\ \boldsymbol{R}^* - \hat{\boldsymbol{R}}
    \end{bmatrix} \\
& \overset{\eqref{eq:gradxr}}{=} F(\boldsymbol{R}^*) + \sum_{(b,a) \in \mathcal{E}} \hat{\mu}_{ba} \nabla_{\boldsymbol{Y},\boldsymbol{R}} g_{ba}(\hat{\boldsymbol{Y}}, \hat{\boldsymbol{R}})^T
\begin{bmatrix}
    \boldsymbol{Y}^* - \hat{\boldsymbol{Y}} \\ \boldsymbol{R}^* - \hat{\boldsymbol{R}}
\end{bmatrix} \\
 & -\sum_{v \in \mathcal{V}} \hat{\gamma}_v \nabla_{\boldsymbol{Y},\boldsymbol{R}} g_{v}(\hat{\boldsymbol{Y}})^T
 \begin{bmatrix}
    \boldsymbol{Y}^* - \hat{\boldsymbol{Y}} \\ \boldsymbol{R}^* - \hat{\boldsymbol{R}}
\end{bmatrix} \\
& + \hat{\boldsymbol{\xi}}\begin{bmatrix}
      \boldsymbol{Y}^* - \hat{\boldsymbol{Y}} \\ \boldsymbol{R}^* - \hat{\boldsymbol{R}}
    \end{bmatrix} - \hat{\boldsymbol{\xi}'}\begin{bmatrix}
      \boldsymbol{Y}^* - \hat{\boldsymbol{Y}} \\ \boldsymbol{R}^* - \hat{\boldsymbol{R}}
    \end{bmatrix} + \hat{\boldsymbol{\eta}}\begin{bmatrix}
      \boldsymbol{Y}^* - \hat{\boldsymbol{Y}} \\ \boldsymbol{R}^* - \hat{\boldsymbol{R}}
    \end{bmatrix} \\
&- \hat{\boldsymbol{\eta}'}\begin{bmatrix}
      \boldsymbol{Y}^* - \hat{\boldsymbol{Y}} \\ \boldsymbol{R}^* - \hat{\boldsymbol{R}}
    \end{bmatrix}. %- \hat{\boldsymbol{\nu}'}\begin{bmatrix}
      %\boldsymbol{Y}^* - \hat{\boldsymbol{Y}} \\ \boldsymbol{R}^* - \hat{\boldsymbol{R}}
    %\end{bmatrix}
\end{align*}

Due to Lemma \ref{lem:monotonedr}, functions $g_{ba}(\cdot)$, for all $(b,a) \in \mathcal{E}$ are monotone DR-submodular. Therefore, we can use the following Lemma from Bian et. al \cite{bian2017dr}:
\begin{lem} \label{lem:drlemproof}
For any differntiable DR-submodular function $f : \mathcal{X} \rightarrow \mathbb{R}$ and any two points $\boldsymbol{a}, \boldsymbol{b}$ in $\mathcal{X}$, we have
\begin{equation*}
    (\boldsymbol{b}-\boldsymbol{a})^T \nabla f(\boldsymbol{a}) \geq f(\boldsymbol{a} \vee \boldsymbol{b}) + f( \boldsymbol{a} \wedge \boldsymbol{b}) - 2f(\boldsymbol{a}),
\end{equation*}
where $\vee$ and $\wedge$ are coordinate-wise maximum and minimum operations, respectively.
\end{lem}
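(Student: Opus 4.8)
The plan is to split $\boldsymbol{b}-\boldsymbol{a}$ into a coordinate-wise nonnegative part and a coordinate-wise nonpositive part and reduce the claimed inequality to two one-dimensional first-order concavity bounds. Using the lattice identity $(\boldsymbol{a}\vee\boldsymbol{b})+(\boldsymbol{a}\wedge\boldsymbol{b})=\boldsymbol{a}+\boldsymbol{b}$, I would set $\boldsymbol{p}\triangleq(\boldsymbol{a}\vee\boldsymbol{b})-\boldsymbol{a}\geq\boldsymbol{0}$ and $\boldsymbol{q}\triangleq(\boldsymbol{a}\wedge\boldsymbol{b})-\boldsymbol{a}\leq\boldsymbol{0}$, so that $\boldsymbol{p}+\boldsymbol{q}=\boldsymbol{b}-\boldsymbol{a}$ and, coordinate by coordinate, at most one of $p_i,q_i$ is nonzero. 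Provided the segments $\{\boldsymbol{a}+t\boldsymbol{p}:t\in[0,1]\}$ and $\{\boldsymbol{a}+t\boldsymbol{q}:t\in[0,1]\}$ lie in $\mathcal{X}$ --- which holds automatically when $\mathcal{X}$ is a box, and in particular for the box domain $\mathcal{B}$ of \textsc{UtilityMax} or for $\mathcal{X}=\mathbb{R}^{n}$ --- the steps below are well posed.

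The one substantive ingredient is that a differentiable DR-submodular $f$ is concave along every sign-definite direction. Starting from Definition~\ref{def:drproperty}, dividing by $k$ and letting $k\downarrow 0$ gives the first-order characterization that $\nabla f$ is \emph{antitone}, i.e.\ $\boldsymbol{x}\leq\boldsymbol{y}$ implies $\nabla f(\boldsymbol{x})\geq\nabla f(\boldsymbol{y})$ coordinate-wise. Consequently, for any $\boldsymbol{d}\geq\boldsymbol{0}$ the scalar map $\phi(t)\triangleq f(\boldsymbol{a}+t\boldsymbol{d})$ satisfies $\phi'(t)=\boldsymbol{d}^{T}\nabla f(\boldsymbol{a}+t\boldsymbol{d})$, which is nonincreasing in $t$ since $\boldsymbol{d}\geq\boldsymbol{0}$ and the gradient is antitone; hence $\phi$ is concave. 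For $\boldsymbol{d}\leq\boldsymbol{0}$ the same holds, because for $s<t$ we have $\boldsymbol{a}+s\boldsymbol{d}\geq\boldsymbol{a}+t\boldsymbol{d}$, and dotting the resulting gradient inequality with $\boldsymbol{d}\leq\boldsymbol{0}$ flips it once more, so $\phi'$ is again nonincreasing. (If one is content to assume $f$ is twice differentiable, as is the case for the monotone DR-submodular $g_{ba}$ of \eqref{eq:gba} by Lemma~\ref{lem:monotonedr}, this step is immediate from the second-order characterization $\partial^{2}f/\partial x_i\partial x_j\leq 0$ in Table~\ref{table:drprop}, since $\boldsymbol{d}^{T}\nabla^{2}f(\cdot)\boldsymbol{d}=\sum_{i,j}d_id_j\,\partial^{2}f/\partial x_i\partial x_j\leq 0$ for sign-definite $\boldsymbol{d}$.)

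Finally, I would apply the first-order concavity inequality $\phi(1)\leq\phi(0)+\phi'(0)$ to $\boldsymbol{d}=\boldsymbol{p}$ and to $\boldsymbol{d}=\boldsymbol{q}$, giving
\begin{align*}
 f(\boldsymbol{a}\vee\boldsymbol{b})&\leq f(\boldsymbol{a})+\boldsymbol{p}^{T}\nabla f(\boldsymbol{a}),\\
 f(\boldsymbol{a}\wedge\boldsymbol{b})&\leq f(\boldsymbol{a})+\boldsymbol{q}^{T}\nabla f(\boldsymbol{a}),
\end{align*}
and then add the two bounds and substitute $\boldsymbol{p}+\boldsymbol{q}=\boldsymbol{b}-\boldsymbol{a}$ to obtain $f(\boldsymbol{a}\vee\boldsymbol{b})+f(\boldsymbol{a}\wedge\boldsymbol{b})\leq 2f(\boldsymbol{a})+(\boldsymbol{b}-\boldsymbol{a})^{T}\nabla f(\boldsymbol{a})$, which rearranges into the claim. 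I expect the only mildly delicate points to be the domain bookkeeping (keeping both segments inside $\mathcal{X}$) and the limiting argument that converts the finite-increment form of DR-submodularity into the antitone-gradient form; neither is a genuine obstacle once $\mathcal{X}$ is taken to be a box or all of $\mathbb{R}^{n}$, which is the only setting in which the lemma is invoked here.
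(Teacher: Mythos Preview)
Your proof is correct. Note, however, that the paper does not actually prove this lemma: it is quoted verbatim from Bian et al.~\cite{bian2017dr} and invoked as a black box inside the proof of Lemma~\ref{lem:diff}, so there is no ``paper's own proof'' to compare against. Your argument---decompose $\boldsymbol{b}-\boldsymbol{a}$ into the nonnegative part $\boldsymbol{p}=(\boldsymbol{a}\vee\boldsymbol{b})-\boldsymbol{a}$ and nonpositive part $\boldsymbol{q}=(\boldsymbol{a}\wedge\boldsymbol{b})-\boldsymbol{a}$, use the antitone-gradient characterization of DR-submodularity to get concavity along each sign-definite ray, then apply the first-order concavity bound twice and add---is the standard route and is essentially how the result is established in the DR-submodular literature. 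The domain caveat you flag (that the two segments must stay in $\mathcal{X}$) is the right thing to be careful about, and as you observe it is automatic for the box domain on which the lemma is actually used here.
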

Hence, we can write
\begin{align*}
    F(\hat{\boldsymbol{R}}) & \overset{(\text{Lemma~\ref{lem:drlemproof}})}{\geq} F(\boldsymbol{R}^*) + \sum_{(b,a) \in \mathcal{E}} \hat{\mu}_{ba} \big(g_{ba} (\hat{\boldsymbol{Y}} \vee \boldsymbol{Y}^*, \hat{\boldsymbol{R}} \vee \boldsymbol{R}^* ) + \\
& g_{ba}(\hat{\boldsymbol{Y}} \wedge \boldsymbol{Y}^*, \hat{\boldsymbol{R}} \wedge \boldsymbol{R}^* ) - 2g_{ba}(\hat{\boldsymbol{Y}}, \hat{\boldsymbol{R}} ) \big) \\
& -\sum_{v \in \mathcal{V}} \hat{\gamma}_v \nabla_{\boldsymbol{Y},\boldsymbol{R}} g_{v}(\hat{\boldsymbol{Y}})^T
    \begin{bmatrix}
      \boldsymbol{Y}^* - \hat{\boldsymbol{Y}} \\ \boldsymbol{R}^* - \hat{\boldsymbol{R}}
    \end{bmatrix} + \hat{\boldsymbol{\xi}}
    \begin{bmatrix}
      \boldsymbol{Y}^* - \hat{\boldsymbol{Y}} \\ \boldsymbol{R}^* - \hat{\boldsymbol{R}}
    \end{bmatrix} 
    \\
&- \hat{\boldsymbol{\xi}'}
    \begin{bmatrix}
      \boldsymbol{Y}^* - \hat{\boldsymbol{Y}} \\ \boldsymbol{R}^* - \hat{\boldsymbol{R}}
    \end{bmatrix}
 + \hat{\boldsymbol{\eta}}\begin{bmatrix}
      \boldsymbol{Y}^* - \hat{\boldsymbol{Y}} \\ \boldsymbol{R}^* - \hat{\boldsymbol{R}}
    \end{bmatrix} - \hat{\boldsymbol{\eta}'}\begin{bmatrix}
      \boldsymbol{Y}^* - \hat{\boldsymbol{Y}} \\ \boldsymbol{R}^* - \hat{\boldsymbol{R}}
    \end{bmatrix}
\end{align*}
By Lemma \ref{lem:monotonedr}, $g_{ba}(\cdot)$ are monotone and we know that they are all positive of for all $(b,a) \in \mathcal{E}$. Hence, $g_{ba} (\hat{\boldsymbol{Y}} \vee \boldsymbol{Y}^*, \hat{\boldsymbol{R}} \vee \boldsymbol{R}^* ) \geq g_{ba} (\boldsymbol{Y}^*, \boldsymbol{R}^* )$ and $g_{ba}(\hat{\boldsymbol{Y}} \wedge \boldsymbol{Y}^*, \hat{\boldsymbol{R}} \wedge \boldsymbol{R}^* ) \geq 0$ for all $(b,a) \in \mathcal{E}$. Therefore,
\begin{align*}
\small
F(\hat{\boldsymbol{R}}) \geq &F(\boldsymbol{R}^*) + \sum_{(b,a) \in \mathcal{E}} \hat{\mu}_{ba} \big(g_{ba} (\boldsymbol{Y}^*, \boldsymbol{R}^* ) - 2g_{ba}(\hat{\boldsymbol{Y}}, \hat{\boldsymbol{R}} ) \big) \\
& -\sum_{v \in \mathcal{V}} \hat{\gamma}_v \nabla_{\boldsymbol{Y},\boldsymbol{R}} g_{v}(\hat{\boldsymbol{Y}})^T
    \begin{bmatrix}
      \boldsymbol{Y}^* - \hat{\boldsymbol{Y}} \\ \boldsymbol{R}^* - \hat{\boldsymbol{R}}
    \end{bmatrix} + \hat{\boldsymbol{\xi}}\begin{bmatrix}
      \boldsymbol{Y}^* - \hat{\boldsymbol{Y}} \\ \boldsymbol{R}^* - \hat{\boldsymbol{R}}
    \end{bmatrix} \\
    &- \hat{\boldsymbol{\xi}'}\begin{bmatrix}
      \boldsymbol{Y}^* - \hat{\boldsymbol{Y}} \\ \boldsymbol{R}^* - \hat{\boldsymbol{R}}
    \end{bmatrix} + \hat{\boldsymbol{\eta}}\begin{bmatrix}
      \boldsymbol{Y}^* - \hat{\boldsymbol{Y}} \\ \boldsymbol{R}^* - \hat{\boldsymbol{R}}
    \end{bmatrix} - \hat{\boldsymbol{\eta}'}\begin{bmatrix}
      \boldsymbol{Y}^* - \hat{\boldsymbol{Y}} \\ \boldsymbol{R}^* - \hat{\boldsymbol{R}}
    \end{bmatrix}
\end{align*}
\begin{align*}
&\overset{(**)}{=} F(\boldsymbol{R}^*) \\
& + \sum_{(b,a) \in \mathcal{E}} \hat{\mu}_{ba} \big(g_{ba} (\boldsymbol{Y}^*, \boldsymbol{R}^* ) - 2(\sum_{(i,p) : (a,b) \in p} \bar{\lambda}_{(i,p)} - C_{ba}) \big)\\
& - \sum_{v \in \mathcal{V}} \hat{\gamma}_v( \sum_{i \in \itemcat} y_{vi}^* - \sum_{i \in \itemcat} \hat{y}_{vi} ) + \sum_{v \in \mathcal{V},i \in \itemcat} \hat{\xi}_{vi} y_{vi}^* \\
&- \sum_{v \in \mathcal{V},i \in \itemcat} \hat{\xi'}_{vi}
      (y_{vi}^* - 1) + \sum_{\requestindex \in \requestset} \hat{\eta}_{\requestindex} r_{\requestindex}^*  - \sum_{\requestindex \in \requestset} \hat{\eta'}_{\requestindex} (r_{\requestindex}^* - \bar{\lambda}_{\requestindex}) \\
 & \geq F(\boldsymbol{R}^*) - \sum_{(b,a)}\hat{\mu}_{ba} (\sum_{(i,p) : (a,b) \in p} \bar{\lambda}_{(i,p)} - C_{ba}),
 \end{align*}
 where $(**)$ is due to \eqref{eq:complslack1},\eqref{eq:complslack2},\eqref{eq:complslack3}, \eqref{eq:complslack4}, \eqref{eq:complslack5}, and \eqref{eq:complslack6}. Therefore, we have
 \begin{equation*} %\label{equ:additiveBound}
F(\hat{\boldsymbol{R}}) \geq F(\boldsymbol{R}^*) - \sum_{(b,a)}\hat{\mu}_{ba} (\sum_{(i,p) : (a,b) \in p} \bar{\lambda}_{(i,p)} - C_{ba}).
 \end{equation*}}

\section{Proof of Lemma~\ref{lem:boundOnMu}} \label{append:ProofOfLemBoundOnMu}
\fullversion{We call a link $(b,a)$ an \emph{active link} if $\sum_{(i,p) : (a,b) \in p} \lambda_{(i,p)} \prod_{v=p_1}^{a} (1-y_{vi}) = C_{ba}$.}{First we define an \emph{active link} as follows.
\begin{defn}
A link for which the constraint \eqref{const:link} is satisfied exactly is called an \emph{active link}. In other words, for an active link $(b,a)$ we have:
\begin{equation*}
    \sum_{(i,p) : (a,b) \in p} \lambda_{(i,p)} \prod_{v=p_1}^{a} (1-y_{vi}) = C_{ba}
\end{equation*}
\end{defn}}
\fullversion{Since $C_{ba}$ is positive, for an active link $(b,a)$ there exists a set
\begin{align*} \small
     \hat{\requestset}_{(b,a)}^{\text{\scalebox{0.8}{active}}} \triangleq &\big\{ (i,p) : (a,b) \in p,~ (\bar{\lambda}_{(i,p)} - \hat{r}_{(i,p)}) \prod_{v=p_1}^{a} (1-\hat{y}_{vi}) > 0, \nonumber \\
     &\sum_{(i,p) \in \hat{\requestset}_{(b,a)}^{\text{\scalebox{0.8}{active}}}} (\bar{\lambda}_{(i,p)} - \hat{r}_{(i,p)}) \prod_{v=p_1}^{a} (1-\hat{y}_{vi}) = C_{ba}\big\} 
\end{align*} }{\begin{lem}\label{lem:Ractive}
For an active link $(b,a)$ there exists a set $\hat{\requestset}_{(b,a)}^{active} \subseteq \requestset$ such that:
\begin{align*} 
     &\hat{\requestset}_{(b,a)}^{active} = \\
     &\{ (i,p) : (a,b) \in p,~ (\bar{\lambda}_{(i,p)} - \hat{r}_{(i,p)}) \prod_{v=p_1}^{a} (1-\hat{y}_{vi}) > 0, \nonumber \\
     &\sum_{(i,p) \in \hat{\requestset}_{(b,a)}^{active}} (\bar{\lambda}_{(i,p)} - \hat{r}_{(i,p)}) \prod_{v=p_1}^{a} (1-\hat{y}_{vi}) = C_{ba}\} 
\end{align*}
\end{lem}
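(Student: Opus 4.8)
The plan is to exhibit the set explicitly. I would define $\hat{\requestset}_{(b,a)}^{active}$ to be the collection of all requests $(i,p)$ with $(a,b)\in p$ whose ``residual load'' $(\bar{\lambda}_{(i,p)}-\hat{r}_{(i,p)})\prod_{v=p_1}^{a}(1-\hat{y}_{vi})$ is strictly positive. With this choice the first two membership conditions in the statement hold by construction, so the only thing left to verify is that the sum of the residual loads over this subset equals $C_{ba}$.

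First I would unfold the definition of an active link: by the formula \eqref{eq:gba} for $g_{ba}$, the constraint \eqref{const:link} being tight at the KKT point $(\hat{\boldsymbol{Y}},\hat{\boldsymbol{R}})$ is equivalent to $\sum_{(i,p):(a,b)\in p}(\bar{\lambda}_{(i,p)}-\hat{r}_{(i,p)})\prod_{v=p_1}^{a}(1-\hat{y}_{vi})=C_{ba}$, i.e., the long-run average load on link $(b,a)$ (see \eqref{equ:cab}) equals its capacity. Next I would note that every summand here is non-negative: since $(\hat{\boldsymbol{Y}},\hat{\boldsymbol{R}})$ is feasible, the box constraints \eqref{const:boxforlambda} and \eqref{const:boxfory} give $\hat{r}_{(i,p)}\le\bar{\lambda}_{(i,p)}$ and $\hat{y}_{vi}\in[0,1]$, so each factor of the product is $\ge 0$. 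Deleting the summands that are exactly zero therefore leaves the total unchanged, which yields $\sum_{(i,p)\in\hat{\requestset}_{(b,a)}^{active}}(\bar{\lambda}_{(i,p)}-\hat{r}_{(i,p)})\prod_{v=p_1}^{a}(1-\hat{y}_{vi})=C_{ba}$, as required; and since $C_{ba}>0$ the subset is automatically nonempty, a fact that is implicitly used when this subset is manipulated later in the proof of Lemma~\ref{lem:boundOnMu}.

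I do not expect a real obstacle here: the statement is essentially a bookkeeping step that isolates the requests which actually contribute to a tight link constraint. The only point requiring a moment's care is invoking feasibility of the KKT point to guarantee non-negativity of each term, since without that one could not freely discard the vanishing summands.
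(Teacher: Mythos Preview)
Your proposal is correct and follows exactly the approach the paper indicates: the paper simply remarks that the lemma follows from the definition of an active link together with $C_{ba}>0$, and you have spelled out precisely those two ingredients (with the added observation that feasibility makes each summand non-negative).
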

The proof of Lemma~\ref{lem:Ractive} is followed by the definition of \emph{active link} and $C_{ba}$ being positive.} Suppose $(b,a)$ is an active link. By \fullversion{definition}{Lemma~\ref{lem:Ractive}}, we have $\hat{r}_{(i,p)} < \bar{\lambda}_{(i,p)},~\forall (i,p) \in \hat{\requestset}_{(b,a)}^{\text{\scalebox{0.8}{active}}}$. By writing the KKT conditions with respect to $\hat{r}_{(i,p)}$ for $(i,p) \in \hat{\requestset}_{(b,a)}^{\text{\scalebox{0.8}{active}}}$, we have
$\frac{dU_{(i,p)}(\bar{\lambda}_{(i,p)} - \hat{r}_{(i,p)})}{d \lambda} = \sum_{(c,d): (c,d) \in p} \hat{\mu}_{dc} \prod_{v=p_1}^{c} (1-\hat{y}_{vi}) + \hat{\eta}_{(i,p)}$. This implies
$\frac{dU_{(i,p)}(\bar{\lambda}_{(i,p)} - \hat{r}_{(i,p)})}{d \lambda}  \geq \hat{\mu}_{ba}\prod_{v=p_1}^{a} (1-\hat{y}_{vi})$.
After multiplying both sides by $(\bar{\lambda}_{(i,p)} - \hat{r}_{(i,p)})$, \fullversion{and using Assumption~\ref{assm:logreturn} and the fact that $\theta$ is the maximum among logarithmic diminishing return parameters, we can write $\theta \geq \hat{\mu}_{ba}(\bar{\lambda}_{(i,p)} - \hat{r}_{(i,p)})\prod_{v=p_1}^{a} (1-\hat{y}_{vi})$. By summing over all $(i,p) \in \hat{\requestset}_{(b,a)}^{\scalebox{0.8}{\text{active}}}$, we have $\theta | \hat{\requestset}_{(b,a)}^{\scalebox{0.8}{\text{active}}} | \geq  \hat{\mu}_{ba} C_{ba}$, or equivalently $\hat{\mu}_{ba} \leq \theta \frac{| \hat{\requestset}_{(b,a)}^{active} |}{C_{ba}}$.}{we have
\begin{align} \small
    & (\bar{\lambda}_{(i,p)} - \hat{r}_{(i,p)}) \frac{dU_{(i,p)}(\lambda)}{d \lambda} \bigg |_{\bar{\lambda}_{(i,p)} - \hat{r}_{(i,p)}} \nonumber  \\ 
    \geq &\hat{\mu}_{ba}(\bar{\lambda}_{(i,p)} - \hat{r}_{(i,p)})\prod_{v=p_1}^{a} (1-\hat{y}_{vi}).  \label{equ:multipliedreadytouse}
\end{align}
Using Assumption~\ref{assm:logreturn} and the fact that $\theta$ is the maximum among logarithmic diminishing return parameters, we can write \eqref{equ:multipliedreadytouse} as
\begin{equation} \label{equ:logreturnuse}
    \theta \geq \hat{\mu}_{ba}(\bar{\lambda}_{(i,p)} - \hat{r}_{(i,p)})\prod_{v=p_1}^{a} (1-\hat{y}_{vi}).
\end{equation}
By summing \eqref{equ:logreturnuse} over all $(i,p) \in \hat{\requestset}_{(b,a)}^{\scalebox{0.8}{\text{active}}}$, we have $\theta | \hat{\requestset}_{(b,a)}^{\scalebox{0.8}{\text{active}}} | \geq  \hat{\mu}_{ba} C_{ba} \Rightarrow \hat{\mu}_{ba} \leq \theta \frac{| \hat{\requestset}_{(b,a)}^{active} |}{C_{ba}}$.} If $(b,a)$ is not an active link, $\hat{\mu}_{ba} = 0$. As a result, we have $\sum_{(b,a) \in \mathcal{E}}\hat{\mu}_{ba} (\sum_{(i,p) : (a,b) \in p} \bar{\lambda}_{(i,p)} - C_{ba}) \leq \theta \sum_{(b,a) \in \mathcal{E}} n_{ab} \big(\sum_{(i,p) : (a,b) \in p} \bar{\lambda}_{(i,p)} - C_{ba}\big)/C_{ba},$
where the last inequality is due to  $| \hat{\requestset}_{(b,a)}^{\scalebox{0.8}{active}} | \leq n_{ab}$.       \qed

%\fullversion{}{\section{Proof Of Lemma~\ref{lem:fminf}} \label{append:ProofOfLemFminF}
%According to \eqref{equ:newutility}, we can write $F\big( \boldsymbol{R}^* (m\boldsymbol{C},m\bar{\boldsymbol{\lambda}}) \big) = U\big( \boldsymbol{\lambda}^* (m\boldsymbol{C},m\bar{\boldsymbol{\lambda}}) \big)$. The rate vector $m\boldsymbol{\lambda}^* (\boldsymbol{C},\bar{\boldsymbol{\lambda}})$ is feasible in Problem~\eqref{cachingproblem1} with link capacity vector $m\boldsymbol{C}$ and demand rate vector $m\bar{\boldsymbol{\lambda}}$, and we have $U\big( \boldsymbol{\lambda}^* (m\boldsymbol{C},m\bar{\boldsymbol{\lambda}}) \big) \geq U\big( m\boldsymbol{\lambda}^* (\boldsymbol{C},\bar{\boldsymbol{\lambda}}) \big)$.
%Combining this and the fact that there exists an unbounded utility function $U_n(.)$ which grows without bound as the input rate goes to infinity (Due to Assumption~\ref{assm:unbounded}), we have $\lim_{m \to \infty} U\big( \boldsymbol{\lambda}^* (m\boldsymbol{C},m\bar{\boldsymbol{\lambda}}) \big) = \infty$, and equivalently $\lim_{m \to \infty} F\big( \boldsymbol{R}^* (m\boldsymbol{C},m\bar{\boldsymbol{\lambda}}) \big) = \infty$. }

\fullversion{}{\section{Proof of Proposition~\ref{prop:k0}} \label{append:ProofOfPropk0}
We show that if the assumptions of Proposition~\ref{prop:k0} hold, then assumptions of part (ii) of
Theorem 5.3 and Corollary 5.7 of Conn et. al. \cite{conn1997globally} hold. 
\begin{lem} \label{lem:ass5ofconn}
Suppose $(\hat{\boldsymbol{Y}}, \hat{\boldsymbol{R}})$ is regular and satisfies the second-order sufficiency condition. Then Assumption 5 of Conn et. al.  \cite{conn1997globally} hold.
\end{lem}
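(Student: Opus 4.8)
The plan is to unpack Assumption~5 of Conn et al.~\cite{conn1997globally} into its constituent requirements at the limit point $(\hat{\boldsymbol{Y}},\hat{\boldsymbol{R}})$, and to derive each of them from the two hypotheses of Proposition~\ref{prop:k0}: regularity of $(\hat{\boldsymbol{Y}},\hat{\boldsymbol{R}})$ (i.e.\ linear independence of the gradients of the active inequality constraints) and the second-order sufficiency condition of Appendix~\ref{append:constrainedopt}. Concretely, Assumption~5, stated at $(\hat{\boldsymbol{Y}},\hat{\boldsymbol{R}})$ together with its Lagrange multipliers, asks for: (i) the Jacobian of the active \emph{general} constraints \eqref{const:link}--\eqref{const:cache}, with columns restricted to the variables not fixed at a bound \eqref{const:boxfory}--\eqref{const:boxforlambda}, has full row rank; (ii) strict complementary slackness at $(\hat{\boldsymbol{Y}},\hat{\boldsymbol{R}})$; and (iii) the Hessian of the Lagrangian, restricted to the null space of the Jacobian in (i) intersected with the subspace of non-fixed variables, is definite (negative definite for our maximization problem, positive definite in the minimization convention of~\cite{conn1997globally}). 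I will also note that the ambient regularity hypotheses Assumption~5 presupposes are met: $F$ is twice continuously differentiable (indeed Lipschitz-smooth by Assumption~\ref{assm:lipsh}), $g_{ba}$ and $g_v$ are smooth, the box $\mathcal{B}$ is compact, and $(\hat{\boldsymbol{Y}},\hat{\boldsymbol{R}})$ is a KKT point by Lemma~\ref{lem:kkt}.

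For part (i), I would reuse the block decomposition already built in the proof of Lemma~\ref{lem:kkt}: partition the variables into the free set $\mathcal{F}$ (strictly between their bounds) and the fixed set $\mathcal{F}'$ (at a bound), let $\mathcal{A}$ be the set of active link/cache constraints, and write the Jacobian of all active constraints as $J = \left[\begin{smallmatrix} J_{[\mathcal{A}~\mathcal{F}]} & J_{[\mathcal{A}~\mathcal{F}']}\\ \boldsymbol{0} & Q\end{smallmatrix}\right]$, where $Q$ is diagonal with entries $\pm 1$ from the active bound constraints. Regularity of $(\hat{\boldsymbol{Y}},\hat{\boldsymbol{R}})$ means $J$ has full rank, and — exactly as argued for Lemma~\ref{lem:kkt} — this forces $J_{[\mathcal{A}~\mathcal{F}]}$ to have full rank, which is (i). Part (ii) needs no work: strict complementary slackness is, by definition, one of the conditions comprising the second-order sufficiency condition in Appendix~\ref{append:constrainedopt}, which is assumed in Proposition~\ref{prop:k0}.

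The real content is part (iii). The plan is to show that the subspace on which the second-order sufficiency condition of Appendix~\ref{append:constrainedopt} asserts definiteness of the Lagrangian Hessian coincides with the reduced null space appearing in Assumption~5. The key observation is that the gradient of each active box constraint in \eqref{const:boxfory}--\eqref{const:boxforlambda} is $\pm e_\ell$ for some coordinate $\ell\in\mathcal{F}'$; hence the condition ``$\nabla g_j(\hat{\boldsymbol{Y}},\hat{\boldsymbol{R}})^{T}\boldsymbol{V}=0$ for every active $j$'', read over the active box constraints, is just the statement that $\boldsymbol{V}$ vanishes on the coordinates in $\mathcal{F}'$, i.e.\ $\boldsymbol{V}$ lies in the free-variable subspace; the remaining orthogonality conditions, read over the active link/cache constraints $\mathcal{A}$, become $J_{[\mathcal{A}~\mathcal{F}]}\boldsymbol{V}_{\mathcal{F}}=0$. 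Thus the set of $\boldsymbol{V}$ over which the textbook second-order condition is imposed equals the reduced null space of Assumption~5, and the (negative) definiteness of $\boldsymbol{V}^{T}\nabla^{2}L\,\boldsymbol{V}$ over it — granted by the second-order sufficiency condition — gives (iii) after the sign flip matching the minimization convention of~\cite{conn1997globally}. Combining (i)--(iii) with the prerequisites above yields Assumption~5.

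The step I expect to be the main obstacle is exactly this subspace-matching in (iii): one has to verify carefully that Conn et al.'s ``reduced'' form of the second-order condition — phrased over the intersection of the null space of the active general-constraint Jacobian with the coordinate subspace of non-fixed variables — is literally the same as the textbook statement phrased over the null space of \emph{all} active constraint gradients, and keep the minimization-versus-maximization sign bookkeeping consistent throughout. A secondary point to check is that the partition $(\mathcal{F},\mathcal{F}')$ and the active set used inside Assumption~5 of~\cite{conn1997globally} are the ones used in the decomposition of Lemma~\ref{lem:kkt}; this should be automatic, as both are determined solely by which bound constraints are active at $(\hat{\boldsymbol{Y}},\hat{\boldsymbol{R}})$.
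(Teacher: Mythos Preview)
Your plan follows the same route as the paper: use the $\mathcal{F}/\mathcal{F}'$ block decomposition from the proof of Lemma~\ref{lem:kkt}, observe that $J\boldsymbol{V}=0$ forces $\boldsymbol{V}_{\mathcal{F}'}=0$ so that the textbook second-order condition reduces to $\boldsymbol{V}_{\mathcal{F}}^{T}H_{[\mathcal{F}\,\mathcal{F}]}\boldsymbol{V}_{\mathcal{F}}<0$ on $\ker J_{[\mathcal{A}\,\mathcal{F}]}$, and combine this with full row rank of $J_{[\mathcal{A}\,\mathcal{F}]}$ from regularity. Two refinements, however, are worth noting.

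First, Assumption~5 in~\cite{conn1997globally} is not stated as your list (i)--(iii); it is the \emph{nonsingularity of the KKT matrix}
\[
\begin{bmatrix} H_{[\mathcal{F}\,\mathcal{F}]} & J_{[\mathcal{A}\,\mathcal{F}]}^{T}\\ J_{[\mathcal{A}\,\mathcal{F}]} & \boldsymbol{0}\end{bmatrix}.
\]
The paper derives this nonsingularity explicitly from your (i) and (iii) via the standard argument (if the matrix annihilates $(X_{\mathcal{F}},X_{\mathcal{A}})$, then $J_{[\mathcal{A}\,\mathcal{F}]}X_{\mathcal{F}}=0$ and $X_{\mathcal{F}}^{T}H_{[\mathcal{F}\,\mathcal{F}]}X_{\mathcal{F}}=0$, forcing $X_{\mathcal{F}}=0$ by definiteness and then $X_{\mathcal{A}}=0$ by full rank). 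You should include this step rather than assert the equivalence.

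Second, the identification you call ``automatic'' is not: Conn et al.'s index set $\mathcal{J}$ is defined through the gradient of the partial Lagrangian $g_l$ (the one incorporating only the general constraints \eqref{const:link}--\eqref{const:cache}), not directly as the set of non-binding coordinates. The paper closes this gap by showing, from the KKT stationarity $\partial g_l/\partial r_{n}=\hat{\eta}'_{n}-\hat{\eta}_{n}$ (and analogously for $y_{vi}$) together with \emph{strict} complementary slackness, that $\partial g_l/\partial r_{n}=0$ implies $0<\hat r_{n}<\bar\lambda_{n}$, whence $\mathcal{J}=\mathcal{F}$. Your hypothesis list already contains strict complementarity inside the second-order sufficiency condition, so the ingredients are there; you just need to actually use it here rather than declaring the match immediate.
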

\begin{proof}
Let us decompose the Jacobian matrix for the active constraints at point $(\hat{\boldsymbol{Y}},\hat{\boldsymbol{R}})$ similar to the way mentioned in Appendix \eqref{append:proofOflemKKT}:
\begin{align*}
J = \begin{bmatrix}
J_{[\mathcal{A} ~ \mathcal{F}]} & J_{[\mathcal{A} ~ \mathcal{F}']}\\
\boldsymbol{0}_{|\mathcal{F}'| \times |\mathcal{F}|} & Q_{|\mathcal{F}'| \times |\mathcal{F}'|}
\end{bmatrix}.
\end{align*}
%Here we denote by $M_{[s_1 ~ s_2]}$ is a sub-matrix of matrix $M$ where rows are picked according to set $s_1$ and columns are picked according to set $s_2$. $A$ denotes the set of link and capacity constraints \eqref{const:link}, \eqref{const:cache} and $B$ is the set of box constraints \eqref{const:boxfory}, \eqref{const:boxforlambda}. It can be easily seen that $Q_{|B| \times |F'|}$ is a diagonal matrix with elements of diagonal being $+1$ and $-1$. 
Similarly, we denote by $M_{[s_1 ~ s_2]}$ is a sub-matrix of matrix $M$ where rows are picked according to set $s_1$ and columns are picked according to set $s_2$. The sets $\mathcal{A}$, $\mathcal{F}$, $\mathcal{F}'$, and matrix $Q$ are introduced in Appendix~\ref{append:proofOflemKKT}. The Lagrangian function is written as
\begin{align*}
&L(\boldsymbol{Y}, \boldsymbol{R}, \boldsymbol{\mu}, \boldsymbol{\gamma}, \boldsymbol{\eta}, \boldsymbol{\xi}) \triangleq \\
&F(\boldsymbol{R}) + \sum_{(b,a) \in \mathcal{E}}{\mu_{ba} g_{ba}(\boldsymbol{Y}, \boldsymbol{R})} - \sum_{v} \gamma_v g_{v \in \mathcal{V}}(\boldsymbol{Y})  \\
& + \sum_{\requestindex \in \requestset} \eta_{\requestindex} r_{\requestindex} - \sum_{\requestindex \in \requestset} \eta'_{\requestindex} (r_{\requestindex} - \bar{
\lambda}_{\requestindex})  \\
& + \sum_{v \in \mathcal{V},i \in \itemcat} \xi_{vi}y_{vi} - \sum_{v \in \mathcal{V},i \in \itemcat} \xi'_{vi} (y_{vi} - 1)    
\end{align*}

Due to the second order sufficiency conditions, we have
\begin{equation} \label{equ:vnablav}
    \boldsymbol{V}^T \nabla^2_{xx}L(\hat{\boldsymbol{Y}}, \hat{\boldsymbol{R}}, \hat{\boldsymbol{\mu}}, \hat{\boldsymbol{\gamma}}, \hat{\boldsymbol{\eta}}, \hat{\boldsymbol{\xi}}) \boldsymbol{V} < 0,
\end{equation}
for all $\boldsymbol{V} \neq 0$ such that
\begin{align} \label{equ:jacobv}
    J \times \boldsymbol{V} = \boldsymbol{0}.
\end{align}
We decompose $\boldsymbol{V}$ and $\nabla^2_{xx}L(\hat{\boldsymbol{Y}}, \hat{\boldsymbol{R}}, \hat{\boldsymbol{\mu}}, \hat{\boldsymbol{\gamma}}, \hat{\boldsymbol{\eta}}, \hat{\boldsymbol{\xi}})$ into variables corresponding to class $\mathcal{F}$ and $\mathcal{F}'$:
\begin{align*}
\boldsymbol{V} = \begin{bmatrix}
\boldsymbol{V}_\mathcal{F} \\
\boldsymbol{V}_{\mathcal{F}'}
\end{bmatrix}, \quad \nabla^2_{xx}L(\hat{\boldsymbol{Y}}, \hat{\boldsymbol{R}}, \hat{\boldsymbol{\mu}}, \hat{\boldsymbol{\gamma}}, \hat{\boldsymbol{\eta}}, \hat{\boldsymbol{\xi}}) = \begin{bmatrix}
H_{[\mathcal{F} ~ \mathcal{F}]} & H_{[\mathcal{F} ~ \mathcal{F}']}\\
H_{[\mathcal{F}' ~ \mathcal{F}]} & H_{[\mathcal{F}' ~ \mathcal{F}']}
\end{bmatrix}
\end{align*}
For all $\boldsymbol{V} \neq 0$ that satisfies \eqref{equ:jacobv} we can write
\begin{align} \label{equ:vfprimezero}
&\begin{bmatrix}
J_{\mathcal{A} ~ \mathcal{F}]} & J_{[\mathcal{A} ~ \mathcal{F}']}\\
\boldsymbol{0}_{|\mathcal{F}'| \times |\mathcal{F}|} & Q_{|\mathcal{F}'| \times |\mathcal{F}'|}
\end{bmatrix} \times \begin{bmatrix}
\boldsymbol{V}_{\mathcal{F}} \\
\boldsymbol{V}_{\mathcal{F}'}
\end{bmatrix} = \boldsymbol{0} \Rightarrow \boldsymbol{V}_{\mathcal{F}'} = \boldsymbol{0}.
\end{align}
Based on \eqref{equ:vnablav}, \eqref{equ:jacobv}, and \eqref{equ:vfprimezero} we can write 
\begin{align} \label{equ:vfhffvf}
&\boldsymbol{V}_{\mathcal{F}}^T H_{[\mathcal{F} ~ \mathcal{F}]} \boldsymbol{V}_\mathcal{F} < 0,
\end{align}
for all $V_\mathcal{F} \neq 0$ such that
\begin{equation*}
    J_{[\mathcal{A} ~ \mathcal{F}]} V_\mathcal{F} = 0.
\end{equation*}

Now consider the following matrix:
\begin{align} \label{equ:matrixone}
\begin{bmatrix}
H_{[\mathcal{F} ~ \mathcal{F}]} & J_{[\mathcal{A} ~ \mathcal{F}]}^T\\
J_{[\mathcal{A} ~ \mathcal{F}]} & \boldsymbol{0}
\end{bmatrix}
\end{align}
We claim that matrix defined in \eqref{equ:matrixone} is non-singular. Suppose 
\begin{equation*}
 \begin{bmatrix}
H_{[\mathcal{F} ~ \mathcal{F}]} & J_{[\mathcal{A} ~ \mathcal{F}]}^T\\
J_{[\mathcal{A} ~ \mathcal{F}]} & \boldsymbol{0}
\end{bmatrix} \times 
\begin{bmatrix}
X_\mathcal{F}\\
X_\mathcal{A} 
\end{bmatrix} = \boldsymbol{0}.
\end{equation*}
Then,
\begin{align}
& J_{[\mathcal{A} ~ \mathcal{F}]} X_\mathcal{F} = 0, \nonumber \\
& H_{[\mathcal{F} ~ \mathcal{F}]} X_\mathcal{F} + J_{[\mathcal{A} ~ \mathcal{F}]}^T X_\mathcal{A} = 0 \label{equ:HXJX}
\end{align}
Since $J_{[\mathcal{A} ~ \mathcal{F}]} X_\mathcal{F} = 0$, if $X_\mathcal{F} \neq \boldsymbol{0}$ we must have $X_\mathcal{F}^T H_{\mathcal{F} \times \mathcal{F}} X_\mathcal{F} < 0$ according to \eqref{equ:vfhffvf}. We multiply both sides of \eqref{equ:HXJX} to obtain 
\begin{equation*}
    X_\mathcal{F}^T H_{\mathcal{F} \times \mathcal{F}} X_\mathcal{F} + X_\mathcal{F}^T J_{[\mathcal{A} ~ \mathcal{F}]}^T X_\mathcal{A} = 0 \Rightarrow  X_\mathcal{F}^T H_{\mathcal{F} \times \mathcal{F}} X_\mathcal{F} = 0,
\end{equation*}
which is not possible. As a result $X_\mathcal{F} = \boldsymbol{0}$ and $J_{[\mathcal{A} ~ \mathcal{F}]}^T X_\mathcal{A} = \boldsymbol{0}$. If $X_\mathcal{A} \neq \boldsymbol{0}$, it violates regularity of $(\hat{\boldsymbol{Y}}, \hat{\boldsymbol{R}})$. So $X_\mathcal{F} = \boldsymbol{0}$ and $X_\mathcal{A} = \boldsymbol{0}$. As a result the matrix in \eqref{equ:matrixone} is non-singular. Similar to Conn et. al. \cite{conn1997globally} we define 
\begin{align*}
    &g_l(\boldsymbol{R}, \boldsymbol{Y}, \boldsymbol{\mu}, \boldsymbol{\gamma}) \triangleq \\ & F(\boldsymbol{R}) +\sum_{(b,a) \in \mathcal{E}}{\mu_{ba}  g_{ba}(\boldsymbol{Y}, \boldsymbol{R})} - \sum_{v \in \mathcal{V}} \gamma_v  g_{v}(\boldsymbol{Y})
\end{align*}
Due to KKT conditions we have
\begin{align*}
    & \frac{\partial g_l(\boldsymbol{R}, \boldsymbol{Y}, \boldsymbol{\mu}, \boldsymbol{\gamma})}{\partial r_{\requestindex}} + \hat{\eta}_\requestindex - \hat{\eta'}_{\requestindex} = 0 \\
    & \frac{\partial g_l(\boldsymbol{R}, \boldsymbol{Y}, \boldsymbol{\mu}, \boldsymbol{\gamma})}{\partial y_{vi}} + \hat{\xi}_{vi} - \hat{\xi'}_{vi} = 0 
\end{align*}
if $\frac{\partial g_l(\boldsymbol{R}, \boldsymbol{Y}, \boldsymbol{\mu}, \boldsymbol{\gamma})}{\partial r_{\requestindex}} = 0$, we have $\hat{\eta}_{\requestindex} - \hat{\eta'}_{\requestindex} = 0$. Lagrange multipliers $\hat{\eta}_{\requestindex}$ and $\hat{\eta'}_{\requestindex}$ correspond to lower bound and upper bound constraints on the variable $r_{\requestindex}$ respectively. Due to complementary slackness, both of them cannot be positive since a variable cannot be equal to its lower bound and its upper bound at the same time. Therefore, both are zero. If they are both zero, it means neither of upper bound and lower bound constraints are active,  due to the strict complementary slackness in the second-order sufficient conditions. Hence, if $\frac{\partial g_l(\boldsymbol{R}, \boldsymbol{Y}, \boldsymbol{\mu}, \boldsymbol{\gamma})}{\partial r_{\requestindex}} = 0$, then 
\begin{equation*}
     \bar{\lambda}_{\requestindex} > r_{\requestindex} > 0.
\end{equation*}
The same is true for cache allocation variables $y_{vi}$. Therefore, the set $\mathcal{J}$ in Assumption 5 of Conn et. al. \cite{conn1997globally} is exactly the set $\mathcal{F}$ of variables which are not on their bounds. As a result, the matrix defined in Assumption 5 of Conn et. al. \cite{conn1997globally} is equivalent to the matrix \eqref{equ:matrixone}. Hence, if the second order sufficient condition and the regularity assumption hold, Assumption 5 of Conn et. al. \cite{conn1997globally} holds automatically. 
\end{proof}

According to Lemma~\ref{lem:ass5ofconn},  if the second order sufficient condition and the regularity assumption hold, Assumption 5 of Conn et. al. \cite{conn1997globally} holds automatically. In addition, Assumption~\ref{assm:lipsh} is exactly the Assumption 4 of Conn et. al. \cite{conn1997globally}, and the single limit point assumption is the same as Assumption 6 of Conn et. al. \cite{conn1997globally}. The strict complementary slackness is the same assumption made in part (ii) of Theorem 5.3 of Conn et. al. \cite{conn1997globally}. The proper choice of parameters is defined by Conn et. al. \cite{conn1997globally} as follows:
\begin{equation*}
    \alpha \triangleq \min(1,\alpha_w) \text{ and } \beta \triangleq \min (1, \beta_w)
\end{equation*}
and whenever $\alpha_\eta$ and $\beta_\eta$ satisfy the conditions
\begin{align*}
    \alpha_\eta &< \min(1, \alpha_w) \\
    \beta_\eta &< \beta \\
    \alpha_\eta + \beta_\eta &< \alpha + 1,
\end{align*}
Thus, assumptions of Proposition~\ref{prop:k0} implies all the assumptions for part (ii) of Theorem 5.3 and Corollary 5.7 of Conn et. al. \cite{conn1997globally}. Hence, the R-linear convergence results stated in Corollary 5.7 of Conn et. al. \cite{conn1997globally} hold here. }

\fullversion{}{\section{Proof of Corollary~\ref{col:sandwich}} \label{append:proofcolsandwich}
Based on Lemma~\ref{lem:concavebiconjugate} we have
\begin{align*}
  & (1-1/e) \min\{ 1, \frac{r_{(i,p)}}{\bar{\lambda}_{(i,p)}} + \sum_{k=1}^{a} y_{p_k i}\} \\ \leq & 1- (1-\frac{r_{(i,p)}}{\bar{\lambda}_{(i,p)}}) \prod_{v=p_1}^{a} (1-y_{vi}) \leq \min\{ 1, \frac{r_{(i,p)}}{\bar{\lambda}_{(i,p)}} + \sum_{k=1}^{a} y_{p_k i}\}.
\end{align*}
multiplying both sides by $\bar{\lambda}_{(i,p)}$, we have
\begin{align*}
   & (1-1/e) \bar{\lambda}_{(i,p)} \min\{ 1, \frac{r_{(i,p)}}{\bar{\lambda}_{(i,p)}} + \sum_{k=1}^{a} y_{p_k i}\} \\
  \leq & \bar{\lambda}_{(i,p)} - (\bar{\lambda} _{(i,p)}-r_{(i,p)}) \prod_{v=p_1}^{a} (1-y_{vi}) \\
  \leq & \bar{\lambda}_{(i,p)} \min\{ 1, \frac{r_{(i,p)}}{\bar{\lambda}_{(i,p)}} + \sum_{k=1}^{a} y_{p_k i}\}.
 \end{align*}
 Summing over all $(i,p) : (a,b) \in p$, we have
 \begin{align*}
   &  (1-1/e) \sum_{(i,p): (a,b) \in p} \bar{\lambda}_{(i,p)} \min\{ 1, \frac{r_{(i,p)}}{\bar{\lambda}_{(i,p)}} + \sum_{k=1}^{a} y_{p_k i}\}  \\
  \leq & \sum_{(i,p): (a,b) \in p} \bar{\lambda}_{(i,p)} - (\bar{\lambda}_{(i,p)}-r_{(i,p)}) \prod_{v=p_1}^{a} (1-y_{vi}) \\
  \leq & \sum_{(i,p): (a,b) \in p} \bar{\lambda}_{(i,p)} \min\{ 1, \frac{r_{(i,p)}}{\bar{\lambda}_{(i,p)}} + \sum_{k=1}^{a} y_{p_k i}\},
\end{align*}
and this concludes the proof.}

\bibliographystyle{IEEEtran}
\bibliography{bibliography}

\end{document}